\documentclass[10pt]{amsart}
\usepackage[margin=1.375in]{geometry}
\usepackage{amsthm, amsmath,amsfonts,amssymb,euscript,hyperref,graphics,color,slashed}
\usepackage{graphicx}
\usepackage{comment}
\usepackage{import}
\usepackage{tikz}
\usepackage{latexsym}

\def\a{\mathfrak{a}}

\def\B{{\mathcal B}}
\def\Bb{\overline{\mathcal B}}
\def\C{{\mathcal C}}
\def\Cb{\overline{C}}
\def\E{{\mathcal E}}
\def\F{{\mathcal F}}

\def\H{{\mathcal H}}
\def\Hb{{\overline{\mathcal{H}}}}
\def\I{{\mathcal I}}
\def\lb{\underline{l}}
\def\Lb{\underline{L}}
\def\M{{\mathcal M}}
\def\L{{\mathcal L}}
\def\P{\mathcal{P}}
\def\Pb{\overline{P}}
\def\Ph{\widehat{P}}
\def\Phb{\bar{\Ph}}
\def\Q{{\mathcal Q}}
\def\R{\mathcal R}

\def\T{\mathcal T}
\def\ub{\underline{u}}
\def\X{\mathcal X}
\def\Xib{\overline{\Xi}}

\def\deltab{\bar{\delta}}

\def\chib{\underline{\chi}}
\def\chibh{\widehat{\underline{\chi}}}
\def\chih{\widehat{\chi}}

\def\Lb{\underline{L}}

\def\M{\mathcal{M}}
\def\I{\mathcal{I}}
\def\H{\mathcal{H}}
\def\O{\mathcal{O}}
\def\H{\mathcal{H}}
\def\S{\mathcal{S}}
\def\tr{\text{tr}}

\def\tensor{\tilde{\otimes}}
\def\ub{\underline{u}}

\def\S{\mathcal{S}}
\def\L{\mathcal{L}}

\def\p{\partial}

\def\bcw{\mathbin{\bigcirc\mkern-15mu\wedge}}

\def\R{\mathcal{R}}

\def\F{\mathcal{F}}


\newtheorem*{Main Theorem}{Main Theorem}
\newtheorem*{Main Proposition}{Main Proposition}

\newtheorem{theorem}{Theorem}[section]
\newtheorem{lemma}[theorem]{Lemma}
\newtheorem{proposition}[theorem]{Proposition}
\newtheorem{corollary}[theorem]{Corollary}
\newtheorem{definition}[theorem]{Definition}
\newtheorem{remark}[theorem]{Remark}
\newtheorem{conjecture}[theorem]{Conjecture}
\setlength{\textwidth}{16cm} \setlength{\oddsidemargin}{0cm}
\setlength{\evensidemargin}{0cm}

\numberwithin{equation}{section}

\begin{document}
\title[rigidity of Kerr-Newman] {On the rigidity of stationary charged black holes: small perturbations of the non-extremal Kerr-Newman family}

\author{Li LAI}
\address{Yau Mathematical Sciences Center, Tsinghua University, Beijing, China}
\email{lail14@mails.tsinghua.edu.cn}

\author{Jiong-Yue LI}
\address{Department of Mathematics, Sun Yat-sen University, Guangzhou, China}
\email{lijiongyue@mail.sysu.edu.cn}

\author{Pin Yu}
\address{Yau Mathematical Sciences Center, Tsinghua University, Beijing, China}
\email{yupin@mail.tsinghua.edu.cn}

\thanks{ PY is supported  by NSFC-11522111, NSFC-11825103 and China National Support Program for Young Top-Notch Talents.}

\begin{abstract}
We prove a perturbative result concerning the uniqueness of Kerr-Newman family of black holes: given an asymptotically flat space-time with bifurcate horizons, if it agrees with a non-extremal Kerr-Newman space-time asymptotically flat at infinity and it is sufficiently close to the Kerr-Newman family, then the space-time must be one of the Kerr-Newman solutions. The closeness to the Kerr-Newman family is measured by the smallness of a pair of Mars-Simon type tensors, which were introduced by Wong in \cite{Wong_09} to detect the Kerr-Newmann family.
\end{abstract}

\maketitle

\tableofcontents

\section{Introduction: the statement of the main result}\label{Chapter_Introduction}
In general relativity, the objects under consideration are given by $(\M,g,\Phi)$ where $\M$ is a 4-dimensional smooth manifold, $g$ is a Lorentzian metric on $\M$, and $\Phi$ represents some physical matter fields. The triple $(\M,g,\Phi)$ is said to be a solution if, in addition to the equations of motion for $\Phi$, Einstein's equations are satisfied
\begin{equation*}
Ric - \frac12 R\cdot g = T,
\end{equation*}
where $Ric$ and $R$ are the Ricci and scalar curvatures for the metric $g$ respectively, and $T$ is the energy-momentum tensor determined by $\Phi$ and its derivatives according to the physical theory. In this work, the matter field $\Phi$ will be taken to be an electromagnetic field $H$.

We fix the notations which will be used throughout the paper. A space-time $(\M,g_{ab})$ shall always refer to a four dimensional orientable smooth manifold with a Lorentzian metric $g_{ab}$. A Maxwell field on $\M$ shall always refer to a real 2-form $H_{ab}$ on $\M$ such that it satisfies the Maxwell equations:
 \[\begin{cases}\nabla_{[c}H_{ab]} &= 0,\\
\nabla^aH_{ab} & = 0.
\end{cases}
\]
where $\nabla_a$ is the Levi-Civita connection for the metric $g_{ab}$ and index-raising is done relative to the metric $g_{ab}$. A triplet $(\M, g_{ab}, H_{ab})$ is called an \emph{Einstein-Maxwell space-time} if in addition to Maxwell equations, the Einstein fields equations hold:
\[ R_{ab}-\frac{1}{2}R g_{ab} = T_{ab}.
\]
In the above expressions, $R_{ab}$ and $R$ are the Ricci curvature and scalar curvature  of the metric $g_{ab}$ respectively, $T_{ab}$ is the energy-momentum tensor for the Maxwell field defined as
\begin{equation}\label{energymomentumtensor}
T_{ab} = 2 H_{ac}H_b{}^c - \frac12 g_{ab}H_{cd}H^{cd}.
\end{equation}

Throughout the the paper, we make the assumption that the $(\M, g_{ab}, H_{ab})$ is \emph{stationary}, i.e., there exists a Killing vector field $T^a$ in such a way that
\[\mathcal{L}_T H=0.\]

\bigskip

The most attractive objects in general relativity are black holes. Black hole, roughly speaking, is certain region of the space-time from which light cannot be observed by a far-away observer.  Classical examples of black holes include the Schwarzschild black holes, Kerr family black holes (Schwarzschild black holes are also included in Kerr family and these black holes are vacuum black holes, namely there is no matter field $\Phi$ and $T =0$) and Kerr-Newman family black holes (where the only matter field $\Phi$ is assumed to be an electromagnetic field). The Kerr family describes the geometry of space-time around a
rotating massive body (while Schwarzschild family is just the non-rotating subfamily); the Kerr-Newman family describes charged rotating black holes.

It is believed that, due to gravitational radiation, a generic asymptotically flat solution of the Einstein-Maxwell equations ought to converge asymptotically to a stationary regime. There is also a conjecture which is aimed to understand all the possible asymptotic states of stationary charged black holes:
\begin{conjecture}\label{the_conjecture}
The domains of outer communication of stationary charged black hole solutions are equivalent to those of the Kerr-Newman black holes.
\end{conjecture}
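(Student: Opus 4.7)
The plan is to attempt a non-perturbative uniqueness argument for stationary charged black holes in the spirit of the classical Carter--Robinson--Mazur program, upgraded from the analytic to the smooth category and extended to include the Maxwell field. The starting point would be a \emph{rigidity} step: from the single Killing field $T^{a}$ one must produce a second, axial Killing field $Z^{a}$ with closed orbits, so that the domain of outer communication becomes stationary and axisymmetric. Following Hawking's analytic argument, refined to the smooth case by the Ionescu--Klainerman method, $Z^{a}$ would first be constructed on the bifurcate horizon by solving an ODE transverse to its null generators, extended to a neighbourhood via a characteristic Cauchy problem for $\mathcal{L}_{Z}g$ and $\mathcal{L}_{Z}H$, and then propagated throughout the exterior region by unique continuation for the coupled tensorial wave system these deformation tensors satisfy. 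The Maxwell contributions produce the principal new complication relative to the vacuum setting.

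Before the reduction to a two-dimensional problem one must settle the horizon topology and component structure. Topological censorship together with the Galloway--Schoen argument restricts each cross section of the event horizon to be a topological $S^{2}$. One must further exclude multi-black-hole equilibria: the analog of the Neugebauer--Hennig non-existence theorem for the two-Kerr configuration would have to be established for Einstein--Maxwell. Once the domain of outer communication is known to be stationary, axisymmetric and to contain a single non-degenerate connected horizon, Weyl--Papapetrou coordinates reduce the Einstein--Maxwell system to a harmonic map from $\mathbb{R}^{3}\setminus\{\text{axis}\}$ into the non-compact Riemannian symmetric space $SU(2,1)/S(U(2)\times U(1))$, whose singular boundary data on the axis segments and horizon rod are pinned down by the asymptotic data, that is, by the ADM mass, angular momentum and electric charge at spatial infinity.

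The concluding step is a Mazur--Bunting-type argument. Given two solutions of the $\sigma$-model with identical rod structure and boundary data, one writes a pointwise divergence identity for a non-negative quantity measuring the geodesic distance of the two maps in the target symmetric space. Integrating this identity over $\mathbb{R}^{3}\setminus\{\text{axis}\}$, and controlling the axis and horizon contributions through the singular boundary behaviour produced in the previous step together with the decay at infinity, forces the two maps to agree. Since the non-extremal Kerr--Newman family realises every admissible triple $(M,a,Q)$ with $M^{2}>a^{2}+Q^{2}$, this identifies the given space-time with the corresponding Kerr--Newman element.

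The decisive obstacle lies in the first step. A no-loss-of-derivative unique continuation theorem for the coupled Killing--Maxwell wave system in a full neighbourhood of a non-degenerate bifurcate horizon, strong enough to produce and propagate the axial Killing field throughout the domain of outer communication without assuming analyticity, is at present out of reach in full generality: it is precisely the Maxwell cross-terms that obstruct a direct extension of the Ionescu--Klainerman vacuum argument. The theorem proved in the body of this paper circumvents this obstruction by \emph{postulating} closeness to the Kerr--Newman family through the smallness of the Mars--Simon tensors of Wong, which effectively installs the hidden symmetry by hand; an unconditional proof of the conjecture would require replacing this hypothesis by an independent construction of $Z^{a}$ from the geometric data of $(\mathcal{M},g,H)$.
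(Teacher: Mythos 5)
The statement you were asked about is Conjecture \ref{the_conjecture}, which is an \emph{open conjecture}: the paper does not prove it, and states it only as motivation. What the paper actually establishes is the perturbative Main Theorem, which assumes in addition to asymptotic flatness ($\bf A_1$) and a smooth bifurcate horizon ($\bf A_2$) the smallness hypothesis ($\bf A_3$) on Wong's tensors $(\Q,\B)$. Your submission is therefore not a proof and could not be one; to your credit, you say so explicitly in your final paragraph, and your diagnosis of the decisive obstruction is accurate. The unconditional construction of the second (axial) Killing field in the smooth category, valid on the entire domain of outer communication rather than merely in the domain of dependence of the horizon, is precisely what is missing, and it is exactly what assumption ($\bf A_3$) is designed to substitute for: the smallness of $(\Q,\B)$ is what makes the function $y$ satisfy the $T$-conditional pseudo-convexity needed for the Carleman estimates that drive the unique continuation argument extending the Hawking vector field level set by level set of $y$.

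As a roadmap, your outline is consistent with the strategy the paper follows for its conditional result: local rigidity at the bifurcate horizon (Theorems \ref{local_rigidity_1} and \ref{local_rigidity_2}), global extension of the Hawking field $K$ by unique continuation (Proposition \ref{tcontionaltheorem} and Lemma \ref{ind}), construction of $Z=T+\lambda_0 K$ with closed orbits, and finally Bunting's uniqueness theorem for stationary axisymmetric electrovacuum solutions — which is the same harmonic-map/divergence-identity argument into $SU(2,1)/S(U(2)\times U(1))$ that you describe in Mazur's language. Two small remarks: the exclusion of multi-black-hole configurations is not left open in the perturbative regime but is handled in \cite{Wong_Yu} under the same smallness hypothesis; and the rod-structure/boundary-data analysis you invoke is subsumed, in the paper's setting, by the assumption that the asymptotic data agree with a fixed non-extremal Kerr--Newman solution. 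If you intended to submit a proof of the Main Theorem rather than of the conjecture, you would need to supply the actual content: the Mars-type identities of Section 2, the bootstrap controlling $\Xi$ and $P$ on $\Sigma$, the pseudo-convexity computation of Lemma \ref{T_pseudo_convexity}, and the coupled hyperbolic--ODE system of Section 5. None of that appears here.
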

The conjecture, if true, would characterize all possible asymptotic states of the charged black holes. Therefore, we expect the Kerr-Newman family to be the only family of black hole solutions to the Einstein-Maxwell system under reasonable mathematical and physical assumptions. Results of this type are often called \emph{no-hair theorems} in the literature. The present work is an effort to address one aspect of the conjecture.

\medskip

Although one may think black holes are mysterious or even terrible in the real world, mathematically, it enjoys many fascinating properties. For example, Hawking \cite{Hawking_Ellis} showed that near a non-expanding horizon, the space-time has a hidden symmetry, i.e., there is another Killing vector field (which is, usually called Hawking vector field in the literature, different from the stationary vector field) in a neighborhood of the given horizon. The statement is often called Hawking's local rigidity theorem since it only addresses the geometry near horizons.  The existence of such a Hawking vector field plays a fundamental role in the classification theory of stationary black holes. The original construction of this vector field used a restrictive additional assumption, namely one has to assume the space-time is real analytic. Based on solving the characteristic initial value problems for wave equations, Friedrich, R\'{a}cz and Wald \cite{Friedrich_Racz_Wald} constructed the vector field in the smooth
category. Due to the fact that wave equations are only well-posed in the domain of dependence of the horizons, the construction can not be extended to the more physical meaningful domain, namely the domain of outer communication.  Ionescu and Klainerman made a breakthrough on this problem. First of all, they have observed in \cite{Ionescu_Klainerman_Wave} that the Carleman type estimates for wave operators can be used to prove uniqueness theorem for solutions of wave equations out-side of a light-cone. Since the Einstein field equations are wave type equations, they implemented the new idea to study space-times in general relativity. Alexakis, Ionescu and Klainerman proved in \cite{Alexakis_Ionescu_Klainerman} that for Einstein vacuum space-times one can extend the Hawking vector field $K$ to a full neighborhood of the bifurcate sphere without making any additional regularity assumptions. The result has later on been improved by Ionescu and Klaienrman in \cite{Ionescu_Klainerman_JAMS}. Along this line, Ionescu and Klainerman (and later with Alexakis) have proposed a program to study the uniqueness of black holes in a smooth class and not imposing axial symmetry. In \cite{Ionescu_Klainerman_Kerr}, the uniqueness of Kerr family has been proved in the smooth class subject to a technical assumption on the Ernst potential, which can be interpreted as stating that the bifurcate horizon of the given solution is exactly the same as that of the reference Kerr solution. In \cite{Alexakis_Ionescu_Klainerman_Perturbation},  the uniqueness of Kerr family has been proved in the smooth class subject to the condition that the global geometry of the given solution is not too different from that of the reference Kerr solution. In \cite{Alexakis_Ionescu_Klainerman_Duke}, the black hole uniqueness theorem has been proved in the smooth class under the assumption that the bifurcate horizon is close to the special Schwarzschild solutions. Compared to \cite{Ionescu_Klainerman_Kerr}, the main improvement is that the condition on the bifurcate horizon in the present paper is an open condition.  We refer the readers to the survey paper \cite{Ionescu_Klainerman_Survey} of Ionescu and Klainerman for more details.

\medskip

The aforementioned results for the uniqueness of black holes are for the vacuum family of black holes. The local rigidity results, i.e., the existence of Hawking vector field, have also been extend to the electrovacuum case. The work \cite{Yu_local} showed that  one can locally extend the Hawking vector field to a full neighborhood of the bifurcate sphere in the smooth class. Giorgi \cite{Giorgi} proved the existence of the extension of the vector field to a strong null convex domain in an electrovacuum space-time and a result concerning non-extendibility: one can find local, stationary electrovacuum extension of a Kerr–Newman solution in a full neighborhood of a point of the horizon (that is not on the bifurcation sphere) which admits no extension of the Hawking vector field. We remark that these two works only deal with local rigidity and the construction of Hawking vector field near the horizons is merely the first step towards the main Conjecture \ref{the_conjecture} which is a global statement. The second step is to further extend the Hawking vector field to the entire domain of outer communication. This strategy is originally proposed by Hawking: Once we have established the second step, we can appeal to the thesis \cite{Bunting} of Bunting to conclude that our space-time is indeed isometric to one Kerr-Newman black hole, hence the Conjecture \ref{the_conjecture} is proved. The result of Bunting is a generalization of the classical uniqueness theorems of Carter \cite{Carter} and Robinson \cite{Robinson}, it claims that if an asymptotically flat Einstein-Maxwell space-time admits a second rotational Killing vector fields globally, it must be one of Kerr-Newmann black holes.

\medskip

We now can give a rough statement of the main result of the current work: if a stationary charged black hole is locally close to a Kerr-Newman black hole, then it must be a Kerr-Newman black hole. According to above discussions, it suffices to show that the space-time is indeed axially symmetric, i.e., to show the global existence of a second rotational Killing vector field.  We shall use the perturbation strategy developed  in \cite{Alexakis_Ionescu_Klainerman_Perturbation}.  Here, the \emph{closeness to Kerr-Newman} is measured by the
smallness of a pair of tensors $(\mathcal{Q},\mathcal{B})$ which will be defined later in section \ref{section1.2}.  These tensors first defined by W. W-Y. Wong in \cite{Wong_09} are natural generalizations of Mars-Simon tensor. Their size measures the deviation of the space-time from the standard Kerr-Newman family.

\subsection{Non-expanding horizons and local rigidity}\label{double_null_foliation}
This subsection serves as a quick review of the local rigidity results. Let $\mathcal{S}$ be a smoothly embedded space-like $2$-sphere in $\mathcal{M}$ and $\mathcal{H}^+$, $\mathcal{H}^-$ be the corresponding null boundaries of the causal future and the causal past of $\mathcal{S}$. We also assume that both $\mathcal{H}^+$ and $\mathcal{H}^-$ are regular, achronal, null hypersurfaces in a neighborhood $\mathcal{O}$ of $\mathcal{S}$. We shall use the term a
\emph{local regular bifurcate horizon} in $\mathcal{O}$ to denote the triplet $(\mathcal{S}, \mathcal{H}^+, \mathcal{H}^-)$.

To define a double null foliation on the domain $\mathcal{O}\subset \mathcal{M}$ with respect to the triplet
$(\mathcal{S}, \mathcal{H}^+, \mathcal{H}^-)$, one first chooses a smooth future-directed null pair $(L,\Lb)$ along $\mathcal{S}$ normalized as follows
\[ g(L,L) = g(\Lb,\Lb)=0,\quad g(L,\Lb)=-1\]
so that $L$ and $\Lb$  are tangential to $\mathcal{H}^+$ and $\mathcal{H}^-$ respectively. Locally around $\mathcal{S}$, we extend $L$ and $\Lb$ along the null geodesic generators of $\mathcal{H}^+$ and $\mathcal{H}^-$ via parallel transport respectively, i.e., $\nabla_L L = 0$ and $\nabla_{\Lb} \Lb =0$. We turn to the definition of optical functions $u$ and $\ub$. The function $\ub$ (resp. $u$) is defined along $\mathcal{H}^+$ (resp. $\mathcal{H}^-$) by setting initial value $\ub =0$ (resp. $u=0$) on $\mathcal{S}$ and solving $L(\ub)=1$ (resp. $\Lb(u)=1$). Let $\mathcal{S}_{\ub}$ (resp. $\mathcal{S}_u$) be the level surfaces of $\ub$ (resp. $u$) along $\mathcal{H}^+$ (resp. $\mathcal{H}^-$). We define $\Lb$ (resp. $L$) on each point of the hypersurface $\mathcal{H}^+$ (resp. $\mathcal{H}^-$) to be the
unique, future-directed null vector orthogonal to the surface $\mathcal{S}_{\ub}$ (resp. $\mathcal{S}_{u}$) passing through that point and such that $g(L,\Lb)=-1$. The null hypersurface $\mathcal{H}^-_{\ub}$ (resp. $\mathcal{H}^+_u$) is defined to be the set of null geodesics initiating on $\mathcal{S}_{\ub} \subset \mathcal{H}^+$ (resp. $\mathcal{S}_{u} \subset \mathcal{H}^-$) in the direction of $\Lb$ (resp. $L$). We require the null hypersurfaces $\mathcal{H}^-_{\ub}$ (resp. $\mathcal{H}^+_{u}$) to be the level sets of the function $\ub$ (resp. $u$). By this condition, $u$ and $\ub$ are extended to a neighborhood of $\mathcal{S}$ from the null hypersurface $\mathcal{H}^+ \cup \mathcal{H}^-$. Then we can extend both $L$ and $\Lb$ into a neighborhood of $\mathcal{S}$ as gradients of the optical functions
$u$ and $\ub$, i.e., $L=-g^{ab}\p_a u \p_b$ and $\Lb=-g^{ab}\p_a\ub \p_b$. We remark that $g(L,L)=g(\Lb,\Lb)=0$ because $u$ and $\ub$ are null optical functions, while $g(L,\Lb)=-1$ holds (and only in general) on the null surface $\mathcal{H}^+ \cup \mathcal{H}^-$. We define $\mathcal{S}_{u, \ub} = \mathcal{H}^+_u \cap \mathcal{H}^-_{\ub}$. We then choose a local orthonormal frame $e_1, e_2$ along the 2-surface $\mathcal{S}_{u,\ub}$ in such a way that the null frame $\{e_1, e_2, e_3 = \Lb, e_4 =L\}$ satisfies
\[g(e_A,e_B)=\delta_{AB}, \qquad g(e_A,e_3)=g(e_A,e_4)=0,\quad A,B=1,2.\]

With respect to this given null frame, we define two null second fundamental forms $\chi$ and $\chib$ along
$\mathcal{H}^+ \cup \mathcal{H}^-$ as follows:
\[\chi_{AB} = g(\nabla_{e_A}L,e_B), \quad \chib_{AB} = g(\nabla_{e_A}\Lb,e_B).\]
For $\chi$ and $\chib$, their traces are defined by $\tr\chi =
\delta^{AB}\chi_{AB}$ and $\tr\chib = \delta^{AB} \chib_{AB}$; their
traceless parts are denoted by $\chih$ and $\chibh$.

\begin{definition}\label{nonexpansion}
For a local regular bifurcate horizon $(\mathcal{S}, \mathcal{H}^+,
\mathcal{H}^-)$, we say that $\mathcal{H}^+$ is \emph{non-expanding} (resp. $\mathcal{H}^-$) if $\tr\chi=0$ (resp. $\tr\chib=0$) along $\mathcal{H}^+$ (resp. $\mathcal{H}^-$ ). The bifurcate horizon $(\mathcal{S}, \mathcal{H}^+, \mathcal{H}^-)$ is called \emph{non-expanding} if both $\mathcal{H}^+, \mathcal{H}^-$ are non-expanding.
\end{definition}

Let $(\mathcal{S}, \mathcal{H}^+, \mathcal{H}^-)$ be a non-expanding local regular bifurcate horizon in $\mathcal{O}$. The following two theorems will be referred as local rigidity theorems:
\begin{theorem}[\cite{Giorgi},\cite{Yu_local}]\label{local_rigidity_1}Given $(\mathcal{S}, \mathcal{H}^+, \mathcal{H}^-)$ a local, regular, bifurcate, non-expanding horizon in a smooth and time oriented Einstein-Maxwell space-time $(\mathcal{O}, g, H)$, there exists a neighborhood $\mathcal{O}' \subset \mathcal{O}$ of $\mathcal{S}$ and a non-trivial Killing vector field $K$ in $\mathcal{O}'$, which is tangent to the null generators of $\mathcal{H}^+$ and $\mathcal{H}^-$. Moreover, the Maxwell field $H$ is also invariant with respect to $K$, i.e. $\L_K H =0$.
\end{theorem}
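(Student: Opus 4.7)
The plan is to adapt the now-standard strategy for constructing a horizon Killing vector field, pioneered by Friedrich--R\'acz--Wald in the vacuum case and refined in \cite{Alexakis_Ionescu_Klainerman, Alexakis_Ionescu_Klainerman_Duke}, to the Einstein--Maxwell setting. The construction proceeds in three stages: (i) produce a candidate $K$ along $\H^+ \cup \H^-$ that is tangent to the null generators and vanishes on $\S$; (ii) extend $K$ to a full neighborhood of $\S$ by solving a covariant wave equation off the characteristic hypersurface; and (iii) verify, via a coupled wave system for the pair $(\pi, \L_K H)$ with $\pi := \L_K g$, that these quantities vanish on the horizon and therefore throughout a neighborhood, by uniqueness for characteristic initial value problems.

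First I would set $K := \ub\, L$ along $\H^+$ and $K := -u\,\Lb$ along $\H^-$; both vectors vanish on $\S$, match smoothly across the bifurcation sphere, and are tangent to the null generators by construction. The essential consequence of the non-expansion hypothesis is that in fact $\chih \equiv 0$ along $\H^+$ (and $\chibh \equiv 0$ along $\H^-$); this follows from the Raychaudhuri equation $L(\tr\chi) = -\tfrac{1}{2}(\tr\chi)^2 - |\chih|^2 - R_{LL}$, together with the identity $R_{LL} = T_{LL} = 2\,\delta^{AB} H_{LA} H_{LB} \ge 0$ coming from \eqref{energymomentumtensor}. Since $\tr\chi \equiv 0$, both $\chih$ and the $\S_{u,\ub}$-tangential components of $\iota_L H$ must vanish on $\H^+$. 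Using these vanishings, a direct null-frame computation shows that every component of $\pi_{ab} = \nabla_a K_b + \nabla_b K_a$ vanishes along $\H^+ \cup \H^-$; the same computation combined with $\nabla^a H_{ab} = 0$ and $\nabla_{[c}H_{ab]}=0$ shows that $\L_K H = 0$ on $\H^+ \cup \H^-$.

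Next, I would extend $K$ off $\H^+ \cup \H^-$ as the unique solution of the covariant wave equation $\Box_g K^a + R^a{}_b K^b = 0$ with the characteristic Cauchy data specified above; well-posedness for the characteristic initial value problem provides a smooth extension to a neighborhood $\O'$ of $\S$. This choice of extension is tailored so that $\pi$ satisfies a wave equation of the schematic form $\Box_g \pi = \mathcal{R} * \pi + \nabla(\L_K T)$. In the Einstein--Maxwell setting $\L_K T$ is quadratic in $H$ and $\L_K H$, so $\L_K H$ must be propagated simultaneously: Lie-differentiating the two Maxwell equations along $K$ produces a wave-type equation for $\L_K H$ whose inhomogeneity is linear in $\pi$ and $\L_K H$. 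The upshot is a closed linear coupled wave system for the pair $(\pi, \L_K H)$ with vanishing characteristic data on $\H^+ \cup \H^-$; standard energy estimates on double-null slabs emanating from $\S$ then give $\pi \equiv 0$ and $\L_K H \equiv 0$ throughout $\O'$, which is precisely the conclusion sought.

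The main obstacle, compared to the vacuum case, is the derivation and closure of this coupled wave system. The Maxwell stress--energy tensor introduces genuine matter-geometry coupling, so the wave equation for $\pi$ is no longer homogeneous in $\pi$ alone, and one must treat $\L_K H$ as an unknown on equal footing with $\pi$; crucially, one must verify that the tangential vanishing of $\iota_L H$ on $\H^+$ inherited from Raychaudhuri, combined with the Maxwell equations restricted to the horizon, actually forces \emph{all} components of $\L_K H$ to vanish on $\H^+ \cup \H^-$ so that the characteristic data for the coupled system is genuinely trivial. Carrying out this closure and the horizon verification in the smooth (non-analytic) category is exactly what is accomplished in \cite{Yu_local, Giorgi}.
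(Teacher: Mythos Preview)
The paper does not contain its own proof of this theorem: it is stated as a known result and attributed to \cite{Giorgi} and \cite{Yu_local}, with no argument given beyond the citation. There is therefore nothing in the paper to compare your proposal against. Your sketch is a faithful outline of the Friedrich--R\'acz--Wald/Alexakis--Ionescu--Klainerman strategy as adapted to the Einstein--Maxwell setting in those cited works, and your identification of the main new difficulty (closing the coupled wave system for $(\pi,\L_K H)$ and verifying trivial characteristic data for $\L_K H$ on the horizon) is accurate.
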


As we mentioned, the vector field $K$ is called the \emph{Hawking vector field} in the literature and it was first constructed in the vacuum case in \cite{Hawking_Ellis} assuming the space-time real analytic. In the stationary case, one can indeed construct a second Killing vector field:

\begin{theorem}[\cite{Giorgi},\cite{Yu_local}]\label{local_rigidity_2}
Let $(\mathcal{S}, \mathcal{H}^+, \mathcal{H}^-)$  be a local, regular, bifurcate and non-expanding horizon in a smooth and time oriented Einstein-Maxwell space-time $(\mathcal{O}, g, H)$. If there is a Killing vector field $T$ tangent to $\mathcal{H}^+ \cup \mathcal{H}^-$ which does not vanish identically on $\mathcal{S}$, then there is a neighborhood $\mathcal{O}' \subset \mathcal{O}$ of $\mathcal{S}$, such that we can find a rotational Killing vector $Z$ in $\mathcal{O}'$, i.e. the orbits of $Z$ are closed. Moreover, $[Z,T] = 0$. If in addition $\L_T H =0$, then $\L_Z H =0$.
\end{theorem}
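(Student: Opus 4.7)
My plan is to construct $Z$ as a real linear combination of $T$ and the Hawking Killing field $K$ provided by Theorem \ref{local_rigidity_1}. Two preliminary observations guide the argument. First, since $T$ is Killing and tangent to both $\mathcal{H}^+$ and $\mathcal{H}^-$, the restriction $T|_\mathcal{S}$ is tangent to $\mathcal{S}$ and, by hypothesis, not identically zero. It is therefore a non-trivial Killing field on the topological two-sphere $\mathcal{S}$; by the Hopf index theorem its zero set consists of exactly two points, and its flow is conjugate to a rigid rotation, so every non-fixed orbit closes with a single common period, which, after rescaling $T$ by a non-zero real constant, I normalize to $2\pi$. Second, $K$ vanishes on $\mathcal{S}$ and is tangent to null generators of $\mathcal{H}^{\pm}$; the construction in Theorem \ref{local_rigidity_1} identifies $\nabla K|_p$, at $p\in\mathcal{S}$, as a non-zero multiple of $L^\flat\wedge \Lb^\flat$ with coefficient equal to the surface gravity $\kappa\neq 0$ (non-vanishing because the horizon is bifurcate). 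Consequently $\phi_t^K$ fixes $\mathcal{S}$ pointwise and acts on each normal plane $N_p\mathcal{S}$ as a Lorentz boost of parameter $\kappa t$.

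\textbf{Step 1: $[T,K]=0$.} The bracket $[T,K]$ is a Killing vector which vanishes on $\mathcal{S}$: at $p\in\mathcal{S}$, $[T,K]^a(p)=T^b(p)\,\partial_b K^a(p)$ is zero because $T(p)$ is tangent to $\mathcal{S}$ while all components of $K$ vanish along $\mathcal{S}$. The space of Killing vectors $\xi$ on $\mathcal{O}'$ with $\xi|_\mathcal{S}=0$ is at most one-dimensional: such a $\xi$ is determined by its $1$-jet at any point $p\in\mathcal{S}$, and since $\nabla_X\xi|_p=0$ for all $X\in T_p\mathcal{S}$, Killing antisymmetry confines $\nabla\xi|_p$ to the line $\mathbb{R}\cdot(L^\flat\wedge \Lb^\flat)$. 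Since $K$ spans this line non-trivially, $[T,K]=cK$ for some $c\in\mathbb{R}$. To upgrade this to $c=0$, exponentiate $\mathcal{L}_T K=cK$: $(\phi_t^T)^*K=e^{ct}K$ and, since $T$ is Killing, $(\phi_t^T)^*(\nabla K)=e^{ct}\nabla K$. Choose $p\in\mathcal{S}$ outside the two-point fixed-point set of $T|_\mathcal{S}$; then $\phi_{2\pi}^T(p)=p$, and $d\phi_{2\pi}^T|_p$ is the identity on $T_p\mathcal{S}$ (by the $2\pi$-periodicity of $T|_\mathcal{S}$) and some element $B\in O(1,1)$ on $N_p\mathcal{S}$. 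Every such $B$ leaves $L^\flat\wedge \Lb^\flat$ invariant, so the weight relation forces $e^{2\pi c}=1$, hence $c=0$.

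\textbf{Step 2: constructing $Z$ and verifying its properties.} With $[T,K]=0$ the flows of $T$ and $K$ commute. At a fixed $p_0\in\mathcal{S}$ off the fixed-point set of $T|_\mathcal{S}$, $d\phi_{2\pi}^T|_{p_0}$ is the identity on $T_{p_0}\mathcal{S}$ and a boost of some parameter $\lambda_0$ on $N_{p_0}\mathcal{S}$. Set $t_0:=-\lambda_0/\kappa$; the composite isometry $\psi:=\phi_{2\pi}^T\circ\phi_{t_0}^K$ fixes $p_0$ and has trivial differential there. By the standard rigidity of isometries on a connected pseudo-Riemannian manifold, $\psi\equiv\mathrm{id}$ on $\mathcal{O}'$ (in particular $\lambda_0$ is independent of $p_0$). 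Define
\[
Z:=T+\tfrac{t_0}{2\pi}\,K.
\]
Then $Z$ is Killing as a real linear combination of Killing fields, and using commutativity,
\[
\phi_{2\pi}^Z=\phi_{2\pi}^T\circ\phi_{t_0}^K=\psi=\mathrm{id},
\]
so every integral curve of $Z$ in $\mathcal{O}'$ closes with period $2\pi$, i.e.\ $Z$ is rotational. The commutation $[Z,T]=\tfrac{t_0}{2\pi}[K,T]=0$ is immediate. Finally, if $\mathcal{L}_T H=0$ then, since Theorem \ref{local_rigidity_1} already guarantees $\mathcal{L}_K H=0$, we obtain $\mathcal{L}_Z H=\mathcal{L}_T H+\tfrac{t_0}{2\pi}\mathcal{L}_K H=0$.

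\textbf{Main obstacle.} The essential work lives in Step 1: making rigorous the one-dimensionality of the space of Killing fields vanishing identically on $\mathcal{S}$ and then forcing the constant $c$ in $[T,K]=cK$ to vanish via the $O(1,1)$-invariance of the normal bivector $L^\flat\wedge \Lb^\flat$. The subsequent $1$-jet rigidity of isometries and the linear-combination construction of $Z$ are by comparison routine.
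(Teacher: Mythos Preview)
Your approach is precisely the one the paper outlines: the paper does not prove Theorem~\ref{local_rigidity_2} in full but cites it from \cite{Giorgi} and \cite{Yu_local}, remarking immediately afterward that the rotational field is obtained as $Z=T+\lambda K$ with period equal to the period of $T|_{\mathcal{S}}$, and referring to \cite{Yu_local} for details. Your reconstruction of that argument---showing $[T,K]=0$ via the one-dimensionality of Killing fields vanishing on $\mathcal{S}$, then choosing $\lambda$ so that $\phi_{2\pi}^{Z}=\mathrm{id}$ by matching the normal boost---is the standard proof and is essentially correct.

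One imprecision worth tightening in Step~1: you assert that \emph{every} $B\in O(1,1)$ leaves $L^\flat\wedge\Lb^\flat$ invariant. This is false for the orientation-reversing components (which send the bivector to its negative). What you actually need, and what holds here, is that $d\phi_{2\pi}^{T}|_{N_p\mathcal{S}}$ lies in the identity component $SO^{+}(1,1)$ (the boosts): this follows because $\phi_{t}^{T}$ is a continuous family of isometries connected to the identity, and because $T$ is tangent to both $\mathcal{H}^{+}$ and $\mathcal{H}^{-}$ separately, so the flow preserves each null direction rather than swapping them. With that correction the argument $e^{2\pi c}=1\Rightarrow c=0$ goes through, and the rest of your proof is sound.
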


Later on in the proof of global rigidity theorems for stationary space-times, the non-expansion condition will be a consequence of the fact that the Killing vector field $T$ is tangent to $\mathcal{H}^+ \cup \mathcal{H}^-$, see \cite{Hawking_Ellis}. So Theorem \ref{local_rigidity_1} produces a Hawking vector field $K$ in a full
neighborhood of $\mathcal{S}$. The rotational vector field $Z$ provided by Theorem \ref{local_rigidity_2} is a linear
combination of $T$ and $K$, i.e. there exists a constant $\lambda$ such that the one parameter group of diffeomorphism on $\mathcal{M}$ generated by the vector field $Z = T + \lambda K$ is a rotation with a period $T_0$.  We also know that the period $T_0$ is exactly the period of rotations generated by $T$ on the bifurcate sphere $\mathcal{S}$. We refer the readers to \cite{Yu_local} for details.

\subsection{Anti-Self-Duality and Wong's tensors}\label{section1.2}
On the given space-time $(\M,g_{ab})$, the Hodge star operator $*:\wedge^2T^*\M\to\wedge^2T^*\M$ defines a complex structure on the space of (real) 2-forms. In index notation, for a given 2-form $X_{ab}$
\begin{equation}\label{hodgestar}
 {}^*\!X_{ab} = \frac{1}{2}\varepsilon_{abcd} X^{cd}
\end{equation}
where $\varepsilon_{abcd}$ is the volume form.  The complexified space $\wedge^2 T^*\M \otimes_\mathbb{R} \mathbb{C}$ of 2-forms on $\M$ splits into the eigenspaces $\Lambda_{\pm}$ of $*$ with eigenvalues $\pm i$. An element of $\wedge^2 T^*\M \otimes_\mathbb{R} \mathbb{C}$ is \emph{self-dual} (resp. \emph{anti-self-dual}) if it is an eigenvector of $*$ with eigenvalue $i$ (resp. \emph{self-dual}). For a $\mathbb{R}$-valued 2-form
$X_{ab}$, the 2-form
\begin{equation}\label{ASD}
\X_{ab} := \frac{1}{2}(X_{ab} + i \, {}^*\!X_{ab})
\end{equation}
is anti-self-dual, while its complex conjugate $\overline{X}_{ab}$ is
self-dual. In the sequel we shall, in general, write elements of
$\wedge^2 T^*\M$ （$\mathbb{R}$-valued 2-forms） with upper-case Roman letters, and their
corresponding anti-self-dual forms with upper-case calligraphic
letters. In terms of the anti-self-dual forms, the energy-momentum tensor
$T_{ab}$ can be written as
\[T_{ab} = 4 \H_{ac}\Hb_{b}{}^c =4\H_b{}^c\Hb_{ac}.
\]

One can decompose a complex 2-form as a sum of 2-forms in $i$ and $-i$ eigenspaces. This gives the the natural projection operators
\[\P_\pm:\wedge^2 T^*\M \otimes_\mathbb{R}\mathbb{C} \to \Lambda_{\pm}.\]
In terms of local frames, we have
\[(\P_+ X)_{ab} = \bar{\I}_{abcd}X^{cd},\qquad (\P_-X)_{ab} = \I_{abcd}X^{cd},
\]
where
\[ \I_{abcd} = \frac{1}{4}(g_{ac}g_{bd}- g_{ad}g_{bc} + i \varepsilon_{abcd}).
\]
We then define the
\emph{symmetric spinor product} for two anti-self-dual complex two
forms $\X$ and $\mathcal{Y}$:
\[(\mathcal{X}\tilde\otimes\mathcal{Y})_{abcd} :=
\frac12\mathcal{X}_{ab}\mathcal{Y}_{cd} +
\frac12\mathcal{Y}_{ab}\mathcal{X}_{cd} -
\frac13\mathcal{I}_{abcd}\mathcal{X}_{ef}\mathcal{Y}^{ef}
\]
This is a complex $(0,4)$-tensor and it satisfies the algebraic symmetries of a Weyl tensor field:
\begin{itemize}
\item[i)] Anti-symmetric in its first two and last two indices:
$(\mathcal{X}\tilde\otimes\mathcal{Y})_{abcd} = -
(\mathcal{X}\tilde\otimes\mathcal{Y})_{bacd} = -
(\mathcal{X}\tilde\otimes\mathcal{Y})_{abdc}$,
\item[ii)]  Symmetric swapping the first two and the last two sets of
indices: $(\mathcal{X}\tilde\otimes\mathcal{Y})_{abcd} =
(\mathcal{X}\tilde\otimes\mathcal{Y})_{cdab}$,
\item[iii)] The first Bianchi identity: $(\mathcal{X}\tilde\otimes\mathcal{Y})_{abcd} +
(\mathcal{X}\tilde\otimes\mathcal{Y})_{bcad} +
(\mathcal{X}\tilde\otimes\mathcal{Y})_{cabd} = 0$,
\item[iv)] Trace-free: $(\mathcal{X}\tilde\otimes\mathcal{Y})_{abcd}g^{ac} = 0$.
\end{itemize}

Let $C_{abcd}$ is a given Weyl field (instead of usual notation $W_{abcd}$, we shall use $C_{abcd}$ to denote the standard Weyl tensor of the metric $g_{ab}$ in the sequel). We define the left Hodge dual ${}^*C$ and right Hodge dual $C^*$ of
$C$ to be
\[{}^*C_{abcd}  = \frac{1}{2}\varepsilon_{abef}C^{ef}{}_{cd}, \qquad
C^*{}_{abcd}  = \frac{1}{2}C_{ab}{}^{ef}\varepsilon_{efcd}.
\]
We remark that ${}^*C=C^*$ thanks to the trace-free property (they are not equal in general if one defines the left- and right-duals for curvature type tensor). Therefore we can define the \emph{anti-self-dual complexified Weyl tensor} $\C_{abcd}$ of $C_{abcd}$ to be
\begin{equation*}
\C_{abcd} = \frac{1}{2} (C_{abcd} + i\,{}^*C_{abcd}).
\end{equation*}
The anti-self-duality means that $^* \C_{abcd} = -i \,\C_{abcd}$.

Recall that the Riemann curvature tensor $R_{abcd}$ of $g_{ab}$
admits a decomposition
\[ R_{abcd} =  C_{abcd} + \frac{1}{2}(g \bcw S)_{abcd} + \frac{R}{24}(g \bcw g)_{abcd}
\]
where $S_{ab} = R_{ab}-\frac{1}{4}R g_{ab}$ is the traceless part of the Ricci curvature and $\bcw$ is
the Kulkarni-Nomizu product between two $(0,2)$-tensors:
\[(h \bcw k)_{abcd} = h_{ac}k_{bd}+h_{bd}k_{ac}- h_{ad}k_{bc}-h_{bc}k_{ad}.\]
For the Einstein-Maxwell equations, we have $R=0$ and $S_{ab} =R_{ab} =T_{ab}$. We also define
\[\E_{abcd}= \frac{1}{4}\big[(g\bcw T)_{abcd}+i(T_{a}{}^e \varepsilon_{ebcd}+T_b{}^f \varepsilon_{afcd})\big].\]
This tensor together with $\C_{abcd}$ completely characterizes the
Riemann curvature tensor:
\begin{equation*}
 \R_{abcd} = \frac{1}{2}(R_{abcd}+\frac{i}{2} {\varepsilon}_{abef}R^{ef}{}_{cd}) = \C_{abcd}+\E_{abcd}.
\end{equation*}
As a convention, for an arbitrary $(j,k)$-tensors
$Y^{a_1a_2\ldots a_j}_{b_1b_2\ldots b_k}$ and $Z^{a_1a_2\ldots a_j}_{b_1b_2\ldots b_k}$, we write
\[ Y\cdot Z = g_{a_1a_1'}g_{a_2a_2'}\cdots g_{a_ja_j'}g^{b_1b_1'}\cdots
g^{b_kb_k'} Y^{a_1a_2\ldots a_j}_{b_1b_2\ldots b_k}
Z^{a_1'a_2'\ldots a_j'}_{b_1'b_2'\ldots b_k'}. \]
The expression $Z^2$ means $Z\cdot Z$.

\medskip

We turn to the definition of Wong's tensors. This is a generalization of Mars-Simon type tensors for charged stationary space-times.  For the given Killing vector field $T^a$, we can define the Ernst 2-form as
\begin{equation*}
F_{ab} = \nabla_a T_b - \nabla_b T_a = 2\nabla_a T_b.
\end{equation*}
The Ernst 2-form satisfies the Ricci equation
\begin{equation}\label{ricciidentity}
\nabla_c F_{ab} = 2 \nabla_c\nabla_aT_b = 2R_{dcab}T^d.
\end{equation}
The Ernst 2-form also satisfies a divergence-curl system:
\[
\begin{cases}
\nabla_{[c}F_{ab]} & = 0, \\
\nabla^aF_{ab} &= -2 R_{db}T^d.
\end{cases}
\]
According to Cartan's formula, we have
\begin{equation}
0=\L_T \H = \iota_T \circ d\H + d\circ \iota_T \H=d\circ \iota_T \H.
\end{equation}
Thus, the complex-valued 1-form $\H_{ab} T^a$ is closed, since $\M$ is always taken to be simply connected, also exact. In the sequel we will use the complex-valued function
$\Xi$ to denote the associated potential, which is defined by
\begin{equation}\label{EqDefXi}
\nabla_b \Xi = \H_{ab} T^a.
\end{equation}
Notice that \emph{a priori} $\Xi$ is only defined up to a constant. Later on we can use the asymptotic decay \eqref{AFmaxwell} of the Maxwell field to require that $\Xi \to 0$ at spatial infinity. In this way, the potential $\Xi$ is uniquely defined.

We introduce the following four  constants:
\begin{equation}\label{constants}
 C_1 = q_E +i q_B, \qquad C_2 = C_3 = \frac{M}{q_E - i q_B}, \qquad C_4 =|C_2|^2-1 = \frac{M^2}{q^2}-1.
\end{equation}
where $q_E$ and $q_B$ are respectively the electric charge and the magnetic charge of the Maxwell field and $q = \sqrt{q_E^2+q_B^2}$ is the total charge. These constants are well-define once we impose the asymptotic decay \eqref{AFmaxwell}. We now define two fundamental functions
\begin{equation}\label{PandPh}
 P = \frac{1}{2\Xi}, \qquad  \Ph^4 =-\frac{1}{C_1^2 \H^2}.
\end{equation}
Let
\[\widehat{\F}_{ab} = \F_{ab}- 4\Xib \H_{ab}.\]
Tensors $\B_{ab}$ and $\Q_{abcd}$ are defined as follows
\begin{align}\label{BandQ}
C_1^{-1}\B_{ab} &= \widehat{\F}_{ab}+2\Cb_3\H_{ab} = \F_{ab} +(2\Cb_3 -4 \Xib)\H_{ab},\\
\Q_{abcd}&= \C_{abcd}-3\Ph (\F \tilde{\otimes} \H)_{abcd}.\notag
\end{align}
We will refer $\B_{ab}$ and $\Q_{abcd}$ as Wong's tensors.

It straightforward to check that $\widehat{\F}_{ab}$ solves the free Maxwell equation. Therefore, $C_1^{-1}\B_{ab}$ also solves the free Maxwell equations. In particular, it defines a complex potential $V$ with respect to the Killing
vector field $T$, i.e.,
\begin{equation}\label{potentialV}
 \nabla_b V = C_1^{-1} \B_{ab}T^a
\end{equation}
Later on we will require  that $V \rightarrow 0$ at spatial infinity to uniquely define $V$. Finally, we define
the functions $y$ and $z$:
\begin{equation}\label{yandz}
 y =\Re{(C_1 P)} , \ z=\Im{(C_1 P)}
\end{equation}
In other words, $C_1 P =y + iz$.

We end the subsection by a beautiful theorem of Wong. The Mars-Simon tensor was introduced by Mars and Simon, see \cite{Mars_99} and \cite{Mars_00}. It is a tensor defined on
stationary vacuum space-time whose vanishing property locally defines the Kerr family black holes. These new tensors $\B_{ab}$ and $\Q_{abcd}$ were introduced by W. W-Y. Wong in \cite{Wong_09}. Recall that the Kerr-Newman metrics are given by the following complicated expression

 \[-\frac{\Delta- \a^2 \sin^2\theta}{\Sigma}dt^2 -2\a \sin^2\theta \frac{r^2+\a^2-\Delta}{\Sigma}dt d\phi+\frac{(r^2+\a^2)^2-\Delta \a^2 \sin^2\theta}{\Sigma}\sin^2 \theta d\phi^2 + \frac{\Sigma}{\Delta}d r^2 + \Sigma d\theta^2\]
with constants $\Sigma = r^2 + \a^2 \cos^2\theta$ and $\Delta = r^2 -2Mr + \a^2 + q^2$, where $M$, $\a$ and $q$ are the ADM mass, the angular momentum and the charge of the space-time. The vanishing of Wong's tensors $(\Q,\B)$ characterizes the Kerr-Newman family black holes:
\begin{theorem}[W. W-Y. Wong, \cite{Wong_09}] The space-time $(\M,g,H)$ is locally equivalent to one of the Kerr-Newmann black holes if and only if $(\Q,\B)\equiv 0$.
\end{theorem}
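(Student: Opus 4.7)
\medskip

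\noindent\textbf{Proof plan.} The implication $(\Leftarrow)$, i.e.\ that Kerr-Newman satisfies $\B\equiv 0$ and $\Q\equiv 0$, is a direct computation on the explicit metric. Using Boyer-Lindquist-type coordinates, I would compute the Maxwell two-form $H$, the Ernst two-form $F=2\nabla T$ for the stationary Killing field $T=\partial_t$, the potential $\Xi$ fixed by the asymptotic condition $\Xi\to 0$, and the anti-self-dual Weyl tensor $\C$. Because Kerr-Newman is algebraically special of Petrov type D with repeated principal null directions aligned with those of $\H$, $\C_{abcd}$ is already a multiple of $\H\tilde\otimes\H$, and a short calculation shows that it in fact agrees with $3\Ph(\F\tilde\otimes\H)_{abcd}$ once the constants $C_1, C_2, C_3$ of \eqref{constants} are inserted. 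The analogous direct calculation with $\widehat{\F}=\F-4\Xib\H$ gives $\widehat{\F}=-2\Cb_3 \H$, i.e.\ $\B=0$.

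The non-trivial direction $(\Rightarrow)$ assumes $\B\equiv 0$ and $\Q\equiv 0$. The vanishing of $\B$ immediately yields the pointwise proportionality
\begin{equation*}
\F_{ab}=(4\Xib-2\Cb_3)\,\H_{ab},
\end{equation*}
so that the Ernst two-form and the Maxwell two-form share the same simple bivector structure. Substituting this into $\Q=0$ shows $\C_{abcd}$ is a scalar multiple of $\H\tilde\otimes\H$, which is exactly the algebraic statement that $(\M,g)$ is of Petrov type D with repeated principal null directions $(l,n)$ coinciding with those of $\H$. This algebraic input is the heart of the argument: it reduces the problem to a rigidity statement for Petrov type D Einstein-Maxwell space-times admitting a non-null Killing vector whose Ernst two-form is aligned with $\H$.

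I would then extract a coordinate system. Using the Ricci identity \eqref{ricciidentity} together with the second Bianchi identity applied to $\Q=0$, one derives first-order propagation equations for $P$, $\Ph$, $\Xi$, and $V$ along $l$ and $n$. The functions $y=\Re(C_1P)$ and $z=\Im(C_1P)$ of \eqref{yandz} play the role of the Kerr-Newman coordinates $r$ and $\a\cos\theta$; one verifies that $\nabla y$ and $\nabla z$ are independent and that the distribution spanned by $T$ and the Hamiltonian vector field of $z$ (which is automatically Killing in this setup) is integrable, giving the $(t,\phi)$ block of the metric. The repeated principal null structure guarantees that the orthogonal two-planes are integrable, producing the $(\theta,\phi)$ two-surfaces. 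Once the four coordinates are in hand, the first-order ODEs integrate to yield the explicit Kerr-Newman metric, with the parameters $M,\a,q$ read off from $C_1$ and $C_2$ in \eqref{constants}.

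The principal obstacle is not the algebra of $\Q=0$ but the integration step: showing that the algebraic alignment plus the Bianchi/Ricci identities force the existence of a second Killing field (morally, $\partial_\phi$) and the separation-of-variables structure characteristic of Kerr-Newman. This is exactly where Mars's vacuum identity is used in the original Kerr case, and here one needs Wong's electrovacuum analogue: commutator identities for the operators $\nabla_l, \nabla_n$ acting on the scalars $P, V, \Ph$, which, combined with $\B=\Q=0$, yield a Killing tensor of Carter type. Once the Killing tensor is produced, one concludes by the local version of the Carter-Robinson-Bunting classification of Einstein-Maxwell space-times with such hidden symmetry.
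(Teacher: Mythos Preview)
The paper does not prove this theorem. It is stated at the end of Section~\ref{section1.2} as a result of Wong, with the citation \cite{Wong_09}, and is used only as motivation for the definition of the tensors $\Q_{abcd}$ and $\B_{ab}$; no argument is supplied in the present paper. So there is no ``paper's own proof'' to compare your proposal against.

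That said, a brief remark on your sketch relative to the actual argument in \cite{Wong_09} (and the vacuum prototype in \cite{Mars_99}) may be useful. Your opening moves are correct: $\B=0$ gives the alignment $\F_{ab}=(4\Xib-2\Cb_3)\H_{ab}$, and then $\Q=0$ forces $\C_{abcd}$ to be a multiple of $(\H\tilde\otimes\H)_{abcd}$, i.e.\ Petrov type D with principal null directions aligned with those of $\H$. Your identification of $y,z$ as the future Boyer--Lindquist coordinates is also right. Where your plan diverges is the endgame: Wong (following Mars) does \emph{not} produce a Killing tensor and invoke a ``local Carter--Robinson--Bunting'' classification. Rather, one derives closed first-order relations for $(\nabla y)^2$, $(\nabla z)^2$, $\nabla y\cdot\nabla z$ and $T^2$ --- exactly the $\B=\Q=0$ limit of Lemma~\ref{lemmaTsquare}, Corollary~\ref{cornablas}, and Lemma~\ref{algebraicmars} in this paper --- and then integrates the null structure equations directly in a frame adapted to $(l,\lb)$ to recover the Kerr--Newman metric explicitly. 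The second Killing field emerges at the end as a by-product of the coordinate form, not as an intermediate step; and Bunting's theorem, being global, plays no role in this purely local statement.
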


\subsection{The global rigidity of Kerr-Newman black holes}\label{Chapter_Global_Rigidity_I}

We assume that there is a given embedded partial Cauchy hypersurface
$\Sigma \subset \M$ which is space-like everywhere. To model the black holes situation, we assume that $\Sigma$ is diffeomorphic to $\mathbb{R}^3 -B_{\frac{1}{2}}$ where $B_{\frac{1}{2}} \subset \mathbb{R}^3$ is a ball of radius $\frac{1}{2}$ centered at the origin.  The diffeomorphism is denoted by $\Phi: \mathbb{R}^3- {B_{\frac{1}{2}}} \rightarrow \Sigma$.
\begin{center}
\includegraphics[width=3in]{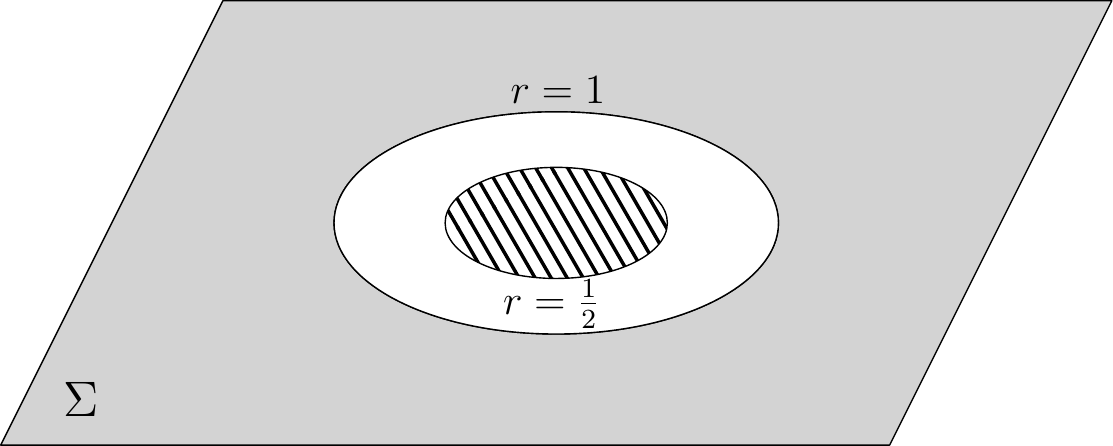}
\end{center}
In the picture, $\Sigma$ is the union of the grey region and the white region. The data will be prescribed on the grey region. We also remark that the diffeomorphism $\Phi$ allows us to use the rectangular coordinate system $(x^1,x^2,x^3)$ on $\Sigma$. We also use the convention that $r=\sqrt{(x^1)^2+(x^2)^2+(x^3)^2}$.

The global assumptions on the space-time and Maxwell field can be grouped into three categories. The first one is the standard asymptotic flatness assumption $\bf{A_1}$. It provides asymptotic expansions of metric $g_{ab}$, Maxwell field $H_{ab}$ and the orbits of the Killing vector field $T$. In particular, it defines the asymptotic region $\M^{end}$ and the domain of outer communication $\E = \I^{-}(\M^{end}) \cap \I^{+}(\M^{end})$ where $\I^{\pm}(\M^{end})$ denotes the future/past set of the set $\M^{end}$. The second assumption $\bf{A_2}$ requires the horizon to be the smooth bifurcate horizon, i.e., the smoothness of two achronal boundaries $\partial (\I^{-}(\M^{end}))$ and $\partial(\I^{+}(\M^{end}))$ in a small neighborhood of their intersection. The intersection $\partial (\I^{-}(\M^{end})) \cap \partial(\I^{+}(\M^{end}))$ is a 2-sphere. This physical assumption is satisfied by most of the known non-extremal family of black holes, e.g., Kerr black holes and Kerr Newmann black holes. The last assumption $\bf{A_3}$
asserts that, in a suitable sense, on the domain of outer communication on $\E$, the space-time as well as the Maxwell field is close to some Kerr-Newman metric. The assumption is made at the level of curvature and it is independent of the choice of coordinates. It requires the smallness of the tensors $\Q_{abcd}$ and $\B_{ab}$. Recall that the vanishing of these tensors characterizes the Kerr-Newman metrics (see \cite{Wong_09}). In Riemannian geometry, if the curvature tensor is sufficiently small (one need to fix the scale), one may conclude that the space is locally close to flat space. The assumption $\bf{A_3}$ is the analogue to this statement.

\begin{center}
\includegraphics[width=3in]{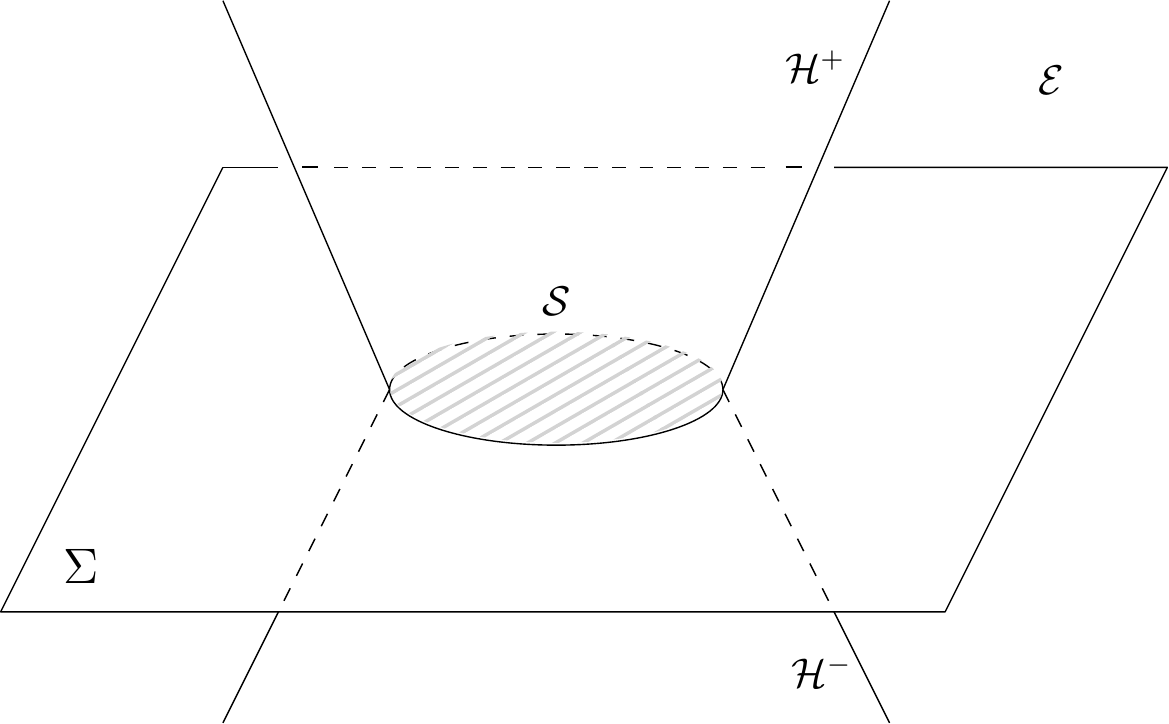}
\end{center}

We now give the precise statements of the assumptions:
\begin{itemize}
\item[($\bf A_1$)] (Asymptotic flatness) We assume that the diffeomorphism $\Phi$ extends to a diffeomorphism
\[\Phi: \mathbb{R} \times (\mathbb{R}^3-B_{R_0}) \rightarrow \M^{end},\]
where $B_{R_0}$ is a ball of radius $R_0$ centered at the origin and $R_0$ is a sufficiently large constant. The asymptotic region $\M^{end}$ is an open set of $\M$. In particular, this diffeomorphism defines a coordinate system $(x^0, x^1, x^2,x^3)$ on $\M^{end}$. We assume that $T = \partial_0$.

In view of the dipole expansions in \cite{Misner_Thorne_Wheeler}, see also \cite{Beig_Simon_80} and \cite{Beig_Simon_81}, we make the the following asymptotic assumptions for the metric and the Maxwell field.

The components of the metric read as\footnote{\label{foot:asympt}We denote by
 $O_k(r^a)$ any smooth function in  $\M^{end}$ which verifies $|\partial^i
f|=O(r^{a-i})$  for any $0\le i\le k$ with $\displaystyle |\partial^i
f|=\sum_{i_0+i_1+i_2+i_3=i}
|\partial_0^{i_0}\partial_1^{i_1}\partial_2^{i_2}\partial_3^{i_3}
f|$; all the repeated indices are understood as Einstein summation convention and $l= 1, 2,3$},
\begin{equation}\label{AFmetric}
\begin{cases}
g_{00}&=-1+\frac{2M}{r}+O_4(r^{-2}),\\
g_{ij}&=\delta_{ij}+\frac{2M}{r}\delta_{ij}+O_4(r^{-2}),\\
g_{0i}&=-\varepsilon_{ijk}\frac{2 S^j x^k}{r^3}+O_4(r^{-3}),
\end{cases}
\end{equation}
where $(S^1,S^2,S^3) \in \mathbb{R}^3$ is the angular momentum
vector and $\varepsilon_{ijk}$ is the fully skew-symmetric 3-tensor;
the components of the Maxwell field read as follows
\begin{equation}\label{AFmaxwell}
\begin{cases}
   { H^{0i} }&=\frac{q x_i}{r^3} + O_4(r^{-3}),\\
   { H^{ij} }&=\frac{q}{M r^3}(\frac{3 S_l x^l}{r^2}x^k-S^k)\varepsilon_{ijk}+O_4(r^{-4}),
\end{cases}
\end{equation}
where $q > 0$ is the total electro-magnetic charge. We define the
total angular-momentum $\mathfrak{a}$ to be
\begin{equation*}
\mathfrak{a^2} =\frac{S^2}{M^2},
\end{equation*}
where $S = \sqrt{(S^1)^2+(S^2)^2+(S^3)^2}$. We require the non-extremal condition
\begin{equation}\label{nonextremal}
 q^2 + \mathfrak{a}^2 < M^2.
\end{equation}
Let $ \E = \I^{-}(\M^{end}) \cap \I^{+}(\M^{end})$. We assume that $\E$ is globally hyperbolic and
\begin{equation}
 \Sigma \cap \I^{-}(\M^{end}) = \Sigma \cap \I^{+}(\M^{end}) = \Phi\big(\mathbb{R}^3 - B_1\big)
\end{equation}
where $\bar{B}_1$ is the ball of radius $1$ centered at the origin. We also assume that the Killing
vector field $T$ is time-like at all points of $\E$ and all orbits of
$T$ in $\E$ are complete and intersect the hypersurface $\Sigma$.

\item[($\bf A_2$)] (Smooth bifurcate horizon) We assume that the intersection of future horizons and past horizons $\partial (\I^{-}(\M^{end})) \cap \partial(\I^{+}(\M^{end}))$ is a 2-sphere $\mathcal{S}$ located \footnote{The assumption that $\mathcal{S}$ lies on  $\Sigma$ will be removed later on by using the local rigidity theorem.} on $\Sigma$.  The future and past event horizons of the space-time are smoothly embedded hypersurfaces. They are defined as
\begin{equation}
 \H^+ = \partial (\I^-(\M^{end})) \qquad \text{and} \qquad \H^- = \partial (\I^+(\M^{end})).
\end{equation}
We assume that $\H^+$ and $\H^-$ are null, non-expanding\footnote{It is well-known that the non-expansion condition is in fact a consequence of the existence of the Killing vector field $T$.} and intersect
transversally at each $\S$. In particular,  this implies that $(\S,\H^+,\H^-)$ is a local,
regular, bifurcate, non-expanding horizon.

Finally, we assume that the Killing symmetry $T$ is tangent to $\H^+$ and
$\H^-$ and does not vanish identically on $\S$.

\item[($\bf A_3$)] (Perturbation of a Kerr-Newman black hole) Let $\B_{ab}$ and $\Q_{abcd}$ be the Mars-Simon type tensors. We assume that on $\Sigma$ we have
\[ \Big|\frac{\Q_{abcd}}{2\Xib - C_3}\Big| + \Big|\frac{\B_{ab}}{2\Xib - C_3}\Big|+\Big|\frac{\nabla_c \B_{ab}}{2\Xib - C_3}\Big| < \varepsilon,
\]
where $\varepsilon$ is a small constant to be determined in the proof and $C_3 = \frac{M}{q_E -iq_B}$.
\end{itemize}

We remark that in the assumption ${\bf A_1}$, we can allow more than one black hole and show that the multi-black-hole configuration is ruled out by the assumption ${\bf A_3}$. This result is proved in \cite{Wong_Yu} and it should be regarded as a generalization of a celebrated theorem of Bunting and Masood-ul-Alam (see \cite{Bunting_Masood-ul-Alam}) which asserts that there is no mutli-black-hole configuration for static space-times.

\bigskip

We now state the main result of the paper:

\begin{Main Theorem}\label{Main_Theorem}
Under the assumptions $\bf A_1$, $\bf A_2$ and $\bf A_3$, there exists a constant $\varepsilon_0>0$, for all $\varepsilon<\varepsilon_0$ in $\bf A_3$, the domain of outer communication $\E$ of $\M$ is isometric to the domain of outer communication of the Kerr-Newman space-time with mass $M$, angular momentum $(S^1,S^2,S^3)$ and charge $q$.
\end{Main Theorem}

As we mentioned before, the proof relies on Hawking's original strategy: we reduce the case of general stationary space-times to that of stationary and axi-symmetric space-times for which we can use Bunting's uniqueness theorem, see \cite{Bunting}. Thus, it suffices to prove the following proposition:
\begin{Main Proposition}\label{Main_Theorem_H}
Under the assumptions $\bf A_1$, $\bf A_2$ and $\bf A_3$,  there exists a constant $\varepsilon_0>0$, for all $\varepsilon<\varepsilon_0$ in $\bf A_3$, there is a non-trivial smooth Killing vector field $Z$ on the entire domain of outer communication $\E$ such that the following properties hold:
\begin{itemize}
\item[(1)] $Z$ is tangent to $\H^+ \cup \H^-$;
\item[(2)] $\L_Z H=0$, $[Z,T] = 0$ and $Z(y)=0$;
\item[(3)] There is a $\tau_0 >0$ such that $\phi_{\tau_0} = {\rm Id}$ on $\E$, where $\phi_\tau$ is the flow generated by $Z$.
\end{itemize}
\end{Main Proposition}
Our approach is adapted from \cite{Alexakis_Ionescu_Klainerman_Perturbation} where the authors proved a similar result for vacuum space-times. We first prove that the function $y$ satisfies the $T$-conditional pseudo-convexity property away from the horizon. This pseudo-convexity property, which was first observed in \cite{Ionescu_Klainerman_Kerr} in the case of the Kerr space-times, plays a key role in the Carleman estimates and the uniqueness argument. Then we extend the Hawking vector field $K$ constructed locally near the horizon from Theorem \ref{local_rigidity_1} to the entire domain of outer communication $\E$. The construction of $Z$ then follows.

\section{Geometric preparations: a Mars type lemma}
The main goal of the present section is to derive a Mars type Lemma for Wong's pairs $(\B_{ab},\Q_{abcd})$. It is already proved in \cite{Wong_09} when one assumes the vanishing of $\B_{ab}$ and $\Q_{abcd}$ by using the Newman-Penrose type formalism. The current situation is much more complicated due to the fact that we have a lot of error terms coming from $\B_{ab}$ and $\Q_{abcd}$.  We shall proceed in the tensorial way and write down the precise \emph{algebraic} expressions of all the error terms. This serves as the basic tool to run the bootstrap argument in the next section. In the sequel, when we say a term is an \emph{error term} or an \emph{algebraic error term}, it means that if we assume $\B_{ab}$, $\nabla \B_{ab}$ and $\Q_{abcd}$ are considerably small (in $L^{\infty}$ norm), then this term is small. In particular, if the space-time is locally a Kerr-Newman solution, the algebraic error terms are identically zero.

\medskip

We start with the following lemma, where the functions $P$ and $\Ph$ are defined in \eqref{PandPh}:
\begin{lemma}\label{lemmaPPh}
We have the following algebraic identity:
\begin{align*}
4C_1^{-1}(\Cb_3-2\Xib)\nabla_b(\frac{1}{P}-\frac{1}{\Ph})&= \Ph^3\H \cdot \nabla_b \B-2C_1\Ph^3 \Q_{dbef}\H^{ef}T^d + 3 C_1^{-2}\B_{db}T^d-\Ph^4 (\B\cdot \H) \H_{db}T^d.
\end{align*}
\end{lemma}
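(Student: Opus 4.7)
The plan is to compute both sides as expressions in the Maxwell field $\H$, the Killing vector $T$, and the Wong tensors $(\B,\Q)$, then reconcile them. The two key inputs will be the Ricci identity for the anti-self-dual Ernst 2-form $\F$ and the defining algebraic relations $\F_{ab} = C_1^{-1}\B_{ab} + (4\Xib-2\Cb_3)\H_{ab}$ and $\C_{abcd}=\Q_{abcd}+3\Ph(\F\tilde\otimes\H)_{abcd}$.

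First, I would rewrite the left-hand side as a combination of $\H_{ab}T^a$ and $\H^{cd}\nabla_b\H_{cd}$. From $P=1/(2\Xi)$ and $\nabla_b\Xi=\H_{ab}T^a$, one gets $\nabla_b(1/P)=2\H_{ab}T^a$. From $\Ph^4=-1/(C_1^2\H^2)$, logarithmic differentiation yields $\nabla_b(1/\Ph)=-\tfrac12 C_1^2\Ph^3\H^{cd}\nabla_b\H_{cd}$. Hence
\begin{equation*}
4C_1^{-1}(\Cb_3-2\Xib)\nabla_b\!\left(\tfrac{1}{P}-\tfrac{1}{\Ph}\right)
= 8C_1^{-1}(\Cb_3-2\Xib)\H_{ab}T^a + 2C_1(\Cb_3-2\Xib)\Ph^3\H^{cd}\nabla_b\H_{cd}.
\end{equation*}

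Second, I would apply the Ricci identity $\nabla_c\F_{ab}=2\R_{dcab}T^d=2(\C_{dcab}+\E_{dcab})T^d$ together with the identity $\F_{ab}=C_1^{-1}\B_{ab}+(4\Xib-2\Cb_3)\H_{ab}$ to solve for $(2\Cb_3-4\Xib)\nabla_c\H_{ab}$. Contracting the resulting identity with $\H^{ab}$ (and relabelling $c\to b$), and using $\nabla_c\Xib=\Hb_{ec}T^e$, produces the target scalar $(2\Cb_3-4\Xib)\H^{cd}\nabla_b\H_{cd}$ expressed in terms of $\H^{cd}\nabla_b\B_{cd}$, $T^d\H^{cd}\C_{dbcd}$, $T^d\H^{cd}\E_{dbcd}$, and a term $\propto\Hb_{eb}T^e\H^2$. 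The Weyl contraction is then rewritten via $\C=\Q+3\Ph(\F\tilde\otimes\H)$, and using the definition of $\tilde\otimes$ together with the ASD identity $\I_{db\alpha\beta}\H^{\alpha\beta}=\H_{db}$ (since $\H$ is anti-self-dual), one gets
\begin{equation*}
\H^{\alpha\beta}\C_{db\alpha\beta}=\H^{\alpha\beta}\Q_{db\alpha\beta}-\frac{3}{2C_1^{2}\Ph^{3}}\F_{db}+\frac{\Ph}{2}(\F\cdot\H)\H_{db},
\end{equation*}
after invoking $\H^2=-1/(C_1^2\Ph^4)$. Re-expanding $\F=C_1^{-1}\B+(4\Xib-2\Cb_3)\H$ in the $\F_{db}$ term and in $\F\cdot\H$ produces, on the one hand, the clean $\B$-contributions $3C_1^{-2}\B_{db}T^d$ and $-\Ph^4(\B\cdot\H)\H_{db}T^d$ of the right-hand side, and on the other hand spurious pieces proportional to $(4\Xib-2\Cb_3)\H_{db}T^d$ and $(4\Xib-2\Cb_3)\H^2\H_{db}T^d$.

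Third, the task reduces to showing that all leftover non-$\B$, non-$\Q$ contributions cancel. The $\F\tilde\otimes\H$ expansion produces a term $\propto(4\Xib-2\Cb_3)\H_{ab}T^a$ on the right; this is exactly the piece $8C_1^{-1}(\Cb_3-2\Xib)\H_{ab}T^a$ transported from the left. For the remaining energy-momentum contribution $T^d\H^{\alpha\beta}\E_{db\alpha\beta}$, I would substitute the explicit formula for $\E$ together with $T_{ab}=4\H_{ac}\Hb_b{}^c$ and the dualization rule $i\,\varepsilon_{ebcd}\H^{cd}=-2\H_{eb}$ (from anti-self-duality); the resulting combination must match the residual $\Hb_{eb}T^e\H^2$ term — this is the electromagnetic analogue of the vacuum Mars identity and is where the specific form of the constants $C_1,C_3$ enters. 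The main obstacle is precisely this last bookkeeping step: keeping track of the many ASD, dual-volume-form and Killing-field contractions and verifying that the non-Wong pieces cancel exactly, so that only the four terms on the right-hand side survive.
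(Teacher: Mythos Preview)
Your proposal is correct and follows essentially the same route as the paper: both compute $\H^{ab}\nabla_b\F_{ab}$ via the Ricci identity, expand $\C=\Q+3\Ph(\F\tilde\otimes\H)$ using the ASD projector identity $\I_{dbcd}\H^{cd}=\H_{db}$, and re-substitute $\F=C_1^{-1}\B+(4\Xib-2\Cb_3)\H$. The only organizational difference is that the paper computes $\H^{ab}\nabla_c\F_{ab}$ and $\H^{ab}\nabla_c[(2\Cb_3-4\Xib)\H_{ab}]$ separately and then adds them to isolate $C_1^{-1}\H\cdot\nabla_c\B$, whereas you first write the left-hand side directly as $8C_1^{-1}(\Cb_3-2\Xib)\H_{ab}T^a+2C_1(\Cb_3-2\Xib)\Ph^3\H^{cd}\nabla_b\H_{cd}$ and then solve for $\H^{cd}\nabla_b\H_{cd}$; the bookkeeping for the $\E$-term and the $\Hb_{eb}T^e\H^2$ cancellation is identical in substance.
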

\begin{proof}
 We start by calculating
\begin{align*}
\H^{ab}\nabla_c\F_{ab} & = 2(\C_{dcab}+\E_{dcab})T^d\H^{ab}\\
&=2[ \Q_{dcab} + 3\Ph(\F\tensor\H)_{dcab}]T^d\H^{ab} + 2(\T_{ad}\H^a{}_c + \T_{bc}\H_d{}^b)T^d \\
&= 2\Q_{dcab}\H^{ab}T^d + \Ph (3\F_{dc}\H_{ab}\H^{ab} + \H_{dc}\F_{ab}\H^{ab})T^d  + 8(\H_{af}\Hb_d{}^f\H^a{}_c + \H_{bf}\Hb_c{}^f\H_d{}^b)T^d \\
& = 2\Q_{dcab}\H^{ab}T^d + \Ph(3[C_1^{-1}\B_{dc}- (2\Cb_3-4\Xib)\H_{dc}]\H_{ab}\H^{ab} \\
& \quad + \H_{dc}[C_1^{-1}\B_{ab} - (2\Cb_3 - 4\Xib)\H_{ab}]\H^{ab})T^d + 4\H_{ab}\H^{ab}\Hb_{dc}T^d\\
&= 2\Q_{dcab}\H^{ab}T^d-3C_1^{-3}\Ph^{-3}\B_{dc}T^d + 4C_1^{-2}\Ph^{-3}(2\Cb_3-4\Xib)\H_{dc}T^d\\
&\quad+ C_1^{-1}\Ph \B\cdot \H \H_{dc}T^d-4C_1^{-2}\Ph^{-4}\Hb_{dc}T^d.
\end{align*}
On the other hand, we can compute
\begin{align*}
 \H^{ab}\nabla_c[(2\Cb_3-4\Xib)\H_{ab}]&=4C_1^{-2}\Ph^{-4}\Hb_{cd}T^d -4\Ph^{-3}C_1^{-2}(\Cb_3-2\Xib)\nabla_c\frac{1}{\Ph}.
\end{align*}
By adding them together, we have
\begin{align*}
 C^{-1}\H^{ab}\nabla_c\B_{ab} &= 4\Ph^{-3}C_1^{-2}(\Cb_3 -2\Xib)(2\nabla_c\Xi -\nabla_c\frac{1}{\Ph})\\
&\qquad +2\Q_{dcab}\H^{ab}T^d -3C_1^{-3}\Ph^{-3}\B_{dc}T^d + C_1^{-1} \Ph(\B\cdot\H) \H_{dc}T^d,
\end{align*}
which proves the lemma.
\end{proof}
\begin{lemma}\label{lemmaTsquare}
 We have the following identities:
\begin{align*}
 T^2 &= -|\frac{1}{P}-C_2|^2 + C_4 - 2\Re(V), \\
\nabla(C_1 P) \cdot \nabla(C_1 P) &= -\frac{P^4}{\Ph^4}T^2, \\
\Box_g (C_1 P) &= -\frac{2(\Cb_1 C_2-\Cb_1\Pb)}{C_1 P \Cb_1 \Pb}-(\frac{1}{P^4}-\frac{1}{\Ph^4})\frac{2P^3}{C_1}(\frac{C_2}{\Pb}-1)+ P^2 \H \cdot \B -4C_1P^3 \H^2\Re(V).
\end{align*}
\end{lemma}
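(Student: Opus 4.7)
The three identities rest on three basic ingredients: the Killing relation $\nabla_a T_b = \tfrac{1}{2}F_{ab}$ (so that $F = \F + \Fb$ decomposes into its anti-self-dual and self-dual parts), the Maxwell equation $\nabla^a \H_{ab} = 0$, and the algebraic identity $\H_{ac}\H_b{}^c = \tfrac{1}{4}\H^2 g_{ab}$ valid for any anti-self-dual complex $2$-form $\H$ --- an immediate consequence of the standard real $2$-form identity $F_{ac}F_b{}^c - (*F)_{ac}(*F)_b{}^c = \tfrac{1}{2}g_{ab}(F\cdot F)$ applied to the real and imaginary parts of $\H = \tfrac{1}{2}(H + i\,{}^{*}H)$. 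A further useful observation, from the orthogonality of anti-self-dual and self-dual $2$-forms under the pairing $(A,B)\mapsto A^{ab}B_{ab}$, is that $\H\cdot\Fb = 0$.

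For identity $(1)$, I would differentiate both sides and show the gradients agree. On the left, the Killing identity yields $\nabla_b T^2 = -T^a F_{ab}$. On the right, $\nabla_b(1/P) = 2T^a\H_{ab}$ and $\nabla_b V = C_1^{-1}\B_{ab}T^a$; combining these with the definition $C_1^{-1}\B_{ab} = \F_{ab} + (2\Cb_3 - 4\Xib)\H_{ab}$ and with $\bar C_2 = \Cb_3$, all the cross terms proportional to $\H T$ or $\Hb T$ cancel pairwise, leaving exactly $-T^a(\F + \Fb)_{ab} = -T^a F_{ab}$. The additive constant is fixed to zero by the asymptotic normalizations $T^2 \to -1$, $\Xi \to 0$, $V \to 0$ at spatial infinity together with $C_4 = |C_2|^2 - 1$. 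For identity $(2)$, the chain rule produces $\nabla_a(C_1 P) = -2C_1 P^2 T^b \H_{ba}$, and then $\nabla^a(C_1 P)\nabla_a(C_1 P) = 4C_1^2 P^4 T^b T^c \H_{ba}\H_c{}^a$; the ASD identity collapses $\H_{ba}\H_c{}^a$ to $\tfrac{1}{4}\H^2 g_{bc}$, and $\Ph^4 = -1/(C_1^2\H^2)$ converts the prefactor into $-P^4/\Ph^4$.

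For identity $(3)$, expand $\Box_g(C_1 P) = \nabla^b \nabla_b(C_1 P)$ via the product rule on $\nabla_b(C_1 P) = -2C_1 P^2 T^c \H_{cb}$. The three resulting pieces contribute: $2 C_1 P^3 \H^2 T^2$ (from $\nabla^b(P^2)$, again collapsed via the ASD identity); $P^2\H\cdot\B - C_1 P^2(2\Cb_3 - 4\Xib)\H^2$ (from $\nabla^b T^c$, using the Killing equation together with $\H\cdot\Fb = 0$ and $\F = C_1^{-1}\B - (2\Cb_3 - 4\Xib)\H$); and zero (from $\nabla^b \H_{cb}$, by Maxwell). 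Substituting identity $(1)$ into $T^2$ and using $2\Cb_3 - 4\Xib = -2(1/\Pb - \Cb_3)$, a short manipulation gives $2PT^2 - (2\Cb_3 - 4\Xib) = 2P(C_2/\Pb - 1 - 2\Re V)$, so the non-$\H\cdot\B$ part simplifies to $2C_1 P^3 \H^2(C_2/\Pb - 1) - 4C_1 P^3 \H^2 \Re V$. Finally, $C_1^2 \H^2 = -1/\Ph^4$ together with the algebraic decomposition $P^3/\Ph^4 = 1/P - P^3(1/P^4 - 1/\Ph^4)$ extracts the ``Kerr-Newman'' term $-2(\Cb_1 C_2 - \Cb_1 \Pb)/(C_1 P \Cb_1 \Pb)$ from the error proportional to $(1/P^4 - 1/\Ph^4)$, reproducing the stated formula. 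The main bookkeeping obstacle lies precisely in this last step: one must juggle the anti-self-dual/self-dual decomposition, apply the algebraic identity for $\H_{ac}\H_b{}^c$ correctly, substitute $(1)$ for $T^2$, and then rearrange so as to isolate the Kerr-Newman contribution from the algebraic error factor $(1/P^4 - 1/\Ph^4)$, which vanishes on the reference Kerr-Newman background.
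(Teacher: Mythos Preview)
Your proposal is correct and follows essentially the same approach as the paper: differentiate $T^2$ and match gradients (fixing the constant at spatial infinity), use the anti-self-dual identity $\H_{ac}\H_b{}^c=\tfrac14\H^2 g_{ab}$ for the second identity, and for the third expand $\Box_g P$ via the product rule, use Killing, Maxwell, and $\H\cdot\Fb=0$ to reduce to $2P^3\H^2 T^2 + C_1^{-1}P^2\H\cdot\B - P^2(2\Cb_3-4\Xib)\H^2$, then substitute identity~(1) and split off the Kerr--Newman piece via $P^3/\Ph^4 = 1/P - P^3(1/P^4-1/\Ph^4)$. In fact you supply more detail than the paper does (it omits the proof of~(2) entirely and leaves the ASD contraction identity and the orthogonality $\H\cdot\Fb=0$ implicit).
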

\begin{proof}
 For the first identity, we compute as follows
\begin{align*}
 \nabla_a T^2 &= -F_{ba}T^b = -2\Re(\F_{ba}T^b)\\
&=-2\Re[C_1^{-1}\B_{ba}T^b -(2\Cb_3-4\Xib)\H_{ba}T^b]\\
&=-2\Re(\nabla_a V)+4\Re[(2\Xib -\Cb_3)\nabla_a \Xi]\\
&=-2\Re(\nabla_a V)-\nabla_a |\frac{1}{P}-C_2|^2.
\end{align*}
This shows the first identity up to a constant $C_4$.
\begin{remark}
In view of the assumption $\bf A_1$, it is straightforward to derive $C_4 = |C_2|^2-1 =\frac{M^2}{q^2}-1$ by studying the asymptotics at spatial infinity.
\end{remark}
The second identity is a direct computation and we omit the proof.

Now we turn to the third one
\begin{align*}
 \Box_g P &=-2\nabla^a (P^2 \H_{ba}T^b) \\
 &= 8P^3\H_{c}{}^{a}T^c\H_{ba}T^b+ P^2 \H_{ab}\F^{ab}\\
          &=2P^3\H^2 T^2 + P^2\H^{ab}[C_1^{-1}\B_{ab}-(2\Cb_3-4\Xib)\H_{ab}]\\
      &=2P^2 \H^2[T^2 P+(\frac{1}{\Pb}-\Cb_3)]+C_1^{-1} P^2 \H \cdot \B\\
          &=2P^2 \H^2[( -|\frac{1}{P}-C_2|^2 + C_4 - 2\Re(V))P+(\frac{1}{\Pb}-\Cb_3)]+C_1^{-1} P^2 \H \cdot \B\\
      &=2P^2 \H^2[( -|\frac{1}{P}-C_2|^2 + C_4 )P+(\frac{1}{\Pb}-\Cb_3)]+C_1^{-1} P^2 \H \cdot \B -4P^3 \H \cdot\H\Re(V) \\
      &=-\frac{2P^2}{C_1^2\Ph^4}(\frac{C_2 P}{\Pb}+(C_4-|C_2|^2)P)+C_1^{-1} P^2 \H \cdot \B -4P^3 \H \cdot\H\Re(V) \\
      &=-\frac{2P^3}{C_1^2\Ph^4}(\frac{C_2}{\Pb}-1)+C_1^{-1} P^2 \H \cdot \B -4P^3 \H \cdot\H\Re(V) \\
      &=-\frac{2(\Cb_1 C_2-\Cb_1\Pb)}{C_1^2 P \Cb_1 \Pb}-(\frac{1}{P^4}-\frac{1}{\Ph^4})\frac{2P^3}{C_1^2}(\frac{C_2}{\Pb}-1)+C_1^{-1} P^2 \H \cdot \B -4P^3 \,\H \cdot\H\,\Re(V).
\end{align*}
This completes the proof of the lemma.
\end{proof}
\begin{corollary}\label{cornablas}We also have the following identities:
\begin{equation}\label{nablayandz}
\nabla y \cdot \nabla z =\frac{1}{2} \Im ((1-\frac{P^4}{\Ph^4})T^2), \quad (\nabla y)^2 -(\nabla z)^2 =-T^2+\Re ((1-\frac{P^4}{\Ph^4})T^2),
\end{equation}
\begin{equation}\label{boxy}
\Box_g y = -\frac{2(M-y)}{y^2+z^2}+e(y), \quad \Box_g z = -\frac{2z}{y^2+z^2}+e(z).
\end{equation}
where the algebraic error terms $e(y)$ and $e(z)$ are given by
\begin{align*}
 e(y) &= \Re[-(\frac{1}{P^4}-\frac{1}{\Ph^4})\frac{2P^3}{C_1}(\frac{C_2}{\Pb}-1)+ P^2 \H \cdot \B -4C_1P^3 \H^2 \Re(V)],\\
 e(z) &= \Im[-(\frac{1}{P^4}-\frac{1}{\Ph^4})\frac{2P^3}{C_1}(\frac{C_2}{\Pb}-1)+ P^2 \H \cdot \B -4C_1P^3 \H^2 \Re(V)].
\end{align*}
\end{corollary}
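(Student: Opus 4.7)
The plan is to obtain Corollary \ref{cornablas} as a direct consequence of Lemma \ref{lemmaTsquare} by decomposing each complex identity into its real and imaginary parts. The only nontrivial algebraic input is the identity $\Cb_1 C_2 = M$, which follows immediately from the definitions in \eqref{constants}:
\[
\Cb_1 C_2 = (q_E - iq_B)\cdot\frac{M}{q_E - i q_B} = M.
\]
In view of \eqref{yandz}, one also has $C_1 P = y + iz$ and $\Cb_1 \Pb = y - iz$, so that $C_1 P \cdot \Cb_1 \Pb = y^2 + z^2$.

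For the two formulas in \eqref{nablayandz}, I would start from $\nabla(C_1 P) = \nabla y + i \nabla z$ and expand the bilinear pairing:
\[
\nabla(C_1 P)\cdot \nabla(C_1 P) = (\nabla y)^2 - (\nabla z)^2 + 2 i\, \nabla y \cdot \nabla z.
\]
The second identity of Lemma \ref{lemmaTsquare} equates this with $-\frac{P^4}{\Ph^4} T^2$. Since $T$ is a real vector field, $T^2$ is real, so we may rewrite
\[
-\tfrac{P^4}{\Ph^4} T^2 = \bigl(1 - \tfrac{P^4}{\Ph^4}\bigr) T^2 - T^2,
\]
and taking real and imaginary parts (noting $\Im T^2 = 0$) delivers exactly the two identities in \eqref{nablayandz}.

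For \eqref{boxy}, the key step is to apply $\Box_g$ to $C_1 P = y + iz$, which gives $\Box_g(C_1 P) = \Box_g y + i \Box_g z$, and then invoke the third identity of Lemma \ref{lemmaTsquare}. Using $\Cb_1 C_2 = M$ and $\Cb_1 \Pb = y - iz$, the leading term becomes
\[
-\frac{2(\Cb_1 C_2 - \Cb_1 \Pb)}{C_1 P\,\Cb_1 \Pb} = -\frac{2\bigl(M - y + i z\bigr)}{y^2 + z^2},
\]
whose real part is $-\frac{2(M-y)}{y^2+z^2}$ and imaginary part is $-\frac{2z}{y^2+z^2}$. The three remaining terms on the right-hand side of Lemma \ref{lemmaTsquare} are, by definition, precisely the complex combination $e(y) + i e(z)$, so separating real and imaginary parts simultaneously yields both identities in \eqref{boxy}.

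There is no real obstacle here: the corollary is a bookkeeping consequence of Lemma \ref{lemmaTsquare}. The only thing worth highlighting is the clean cancellation $\Cb_1 C_2 = M$, which produces the Schwarzschild/Kerr–Newman-type structure $-\frac{2(M-y)}{y^2 + z^2}$ in the principal part of $\Box_g y$ and exhibits $e(y)$, $e(z)$ as genuine algebraic error terms in the sense introduced at the start of the section (i.e.\ vanishing identically when $\B$ and $\Q$ vanish).
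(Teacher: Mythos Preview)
Your proposal is correct and matches the paper's intent: the corollary is stated without proof in the paper precisely because it follows from Lemma \ref{lemmaTsquare} by taking real and imaginary parts of the second and third identities, exactly as you outline. Your observation that $\Cb_1 C_2 = M$ and $C_1 P\,\Cb_1\Pb = y^2+z^2$ is the only algebraic input needed, and you have handled it correctly.
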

In the remaining part of this section, we decompose various geometric objects with respect to a naturally defined null frame adapted to the Maxwell field. Because it captures the Lorentian character of the theory,  when one computes, all kinds of cancelations can be easily observed via this decomposition.

Since $\H_{ab}$ is an anti-self-dual 2-form, if we assume $\H^2 \neq 0$ (this condition will be automatically true when run bootstrap argument in the next section, in particular, the consequence of the bootstrap argument will show that $\H^2 \neq 0$ on the whole $\Sigma$. So in what follows, we always assume this condition), then it has two distinct future directed principal null directions, which we denote by $\lb^a$ and $l^a$, with the normalization $g(\lb,l)=-1$. Thus $\H_{ab}$ can be written as
\begin{equation}\label{nullH}
 \H_{ab} = \frac{1}{2C_1 \Ph ^2}(\lb_a l_b -l_a\lb_b + i\varepsilon_{abcd}\lb^c l^d).
\end{equation}
Recall that $P = \frac{1}{2\Xi}$, together with the definition of $\Xi$, this implies
\begin{equation*}
 2C_1 P^2 \H_{ab}T^a =-C_1 \nabla_b P = -(\nabla_b y + i \nabla_b z),
\end{equation*}
which implies
\begin{equation}\label{nullnablayz}
 \nabla_b y = (l\cdot T)\lb_b -(\lb \cdot T) l_b + E_b(y), \quad \nabla_b z = \varepsilon_{bacd}T^a \lb^c l^d +E_b(z)
\end{equation}
with the algebraic error terms defined as follows
\begin{align}\label{errornablayz}
 E_b(y) &= \Re\big[(1-\frac{P^2}{\Ph^2})\big((\lb\cdot T) l_b -(l\cdot T)\lb_b + i\varepsilon_{abcd}T^a\lb^c l^d\big)\big],\notag\\
 E_b(z) &= \Im\big[(1-\frac{P^2}{\Ph^2})\big((\lb\cdot T) l_b -(l\cdot T)\lb_b + i\varepsilon_{abcd}T^a\lb^c l^d\big)\big].
\end{align}

We are going to derive a key lemma in this section. The first
version of the lemma appears in \cite{Mars_99} for stationary vacuum
space-time. Another version of the lemma is proved by W. W-Y. Wong in
\cite{Wong_09} where one assumes the vanishing of $\B_{ab}$ and
$\Q_{abcd}$. The following covariant proof is inspired by the work of Alexakis, Ionescu and Klainerman in \cite{Alexakis_Ionescu_Klainerman_Perturbation}.

\begin{lemma}\label{algebraicmars}
 Let $U = (y^2 + z^2)\nabla^a z \nabla_a z + z^2$, then the gradient $\nabla_b U$ is an algebraic error term. The precise
algebraic expression of $\nabla_b U$ will be given in the proof.
\end{lemma}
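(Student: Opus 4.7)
\textbf{Plan}: My plan is to compute $\nabla_b U$ directly and organize the result so that every remaining term is manifestly proportional to one of the designated error quantities $\B$, $\nabla\B$, $\Q$, $V$, or $(P^4/\Ph^4-1)$ (which is itself an error by Lemma~\ref{lemmaPPh}). The guiding principle is that in the exact Kerr-Newman limit, where $\B=\nabla\B=\Q=V\equiv 0$ and $P=\Ph$, one has $U\equiv\mathfrak{a}^2$ (one checks this in Boyer-Lindquist, where $y=r$, $z=\mathfrak{a}\cos\theta$, $(\nabla z)^2=\mathfrak{a}^2\sin^2\theta/(r^2+\mathfrak{a}^2\cos^2\theta)$); hence whatever bulk terms appear in the computation must cancel identically.

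By the product rule,
\begin{equation*}
\nabla_b U = 2\bigl(y\nabla_b y + z\nabla_b z\bigr)(\nabla z)^2 + 2(y^2+z^2)\,\nabla^a z\,\nabla_a\nabla_b z + 2z\nabla_b z.
\end{equation*}
The first and third terms are handled directly by the gradient formulas \eqref{nullnablayz} together with the identity for $T^2$ in Lemma~\ref{lemmaTsquare}. For the middle term I avoid the full Hessian of $z$ by instead computing the Hessian of the complex potential $\sigma = C_1P = y+iz$ through the compact formula $\nabla_b\sigma = -2C_1P^2\H_{ab}T^a$. A second differentiation, substituting $\nabla_cP = -2P^2\H_{cd}T^d$, the Killing identity $\nabla_cT^a = \tfrac{1}{2}F_c{}^a = \Re\F_c{}^a$ with $\F = C_1^{-1}\B - (2\Cb_3-4\Xib)\H$, and the expression for $\nabla_c\H_{ab}$ obtained from the Ricci identity $\nabla_cF_{ab} = 2R_{dcab}T^d$ combined with Wong's decomposition $\C = \Q + 3\Ph(\F\tensor\H)$, yields $\nabla_c\nabla_b\sigma$ as a bulk piece depending only on $\H,T,P$ plus explicit algebraic errors. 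Taking imaginary parts then extracts $\nabla_c\nabla_b z$.

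Substituting back into the product-rule expansion and expanding $\H$ through the null-frame formula \eqref{nullH}, the bulk part of $\nabla_b U$ reproduces the corresponding Kerr-Newman computation; since $U\equiv\mathfrak{a}^2$ there, this bulk cancels identically. What remains is the precise algebraic expression for $\nabla_bU$ as a finite sum of smooth bounded coefficients multiplied by $\B$, $\nabla\B$, $\Q$, $V$, or the Lemma~\ref{lemmaPPh}-error $(P^4/\Ph^4 - 1)$.

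The main difficulty is the Hessian step: the chain $\nabla\H \leadsto \nabla F \leadsto R \leadsto \C + \E \leadsto (\Q + 3\Ph(\F\tensor\H)) + \E$, compounded with $\F = C_1^{-1}\B - (2\Cb_3-4\Xib)\H$, produces a proliferation of cross-terms, and tracking which combinations reconstitute the Kerr-Newman bulk versus which survive as genuine error contributions requires careful algebraic bookkeeping. The null-frame identities \eqref{nullH}--\eqref{nullnablayz} are essential here, exposing the Lorentzian cancellations that remain opaque in raw index notation.
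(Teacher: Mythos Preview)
Your overall plan matches the paper's approach exactly: differentiate $\nabla_b(C_1P)=-2C_1P^2\H_{ab}T^a$ once more, feed in the Ricci identity $\nabla_c F_{ab}=2R_{dcab}T^d$, decompose $\C=\Q+3\Ph(\F\tensor\H)$ and $\F=C_1^{-1}\B-(2\Cb_3-4\Xib)\H$, and then use the null-frame expansion \eqref{nullH} to track the cancellations. The paper organizes the Hessian into five structural pieces $S(1),\dots,S(5)$ plus an error $E(1)$, contracts with $2|C_1P|^2\nabla^a z$ to obtain six terms $R(1),\dots,R(6)$, and then verifies term by term---using \eqref{nullnablayz} and Corollary~\ref{cornablas}---that each $R(i)$ either cancels against $I_1,I_2$ or is an error.

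The one genuine gap in your proposal is the inference ``since $U\equiv\mathfrak{a}^2$ in Kerr--Newman, this bulk cancels identically.'' Knowing that a tensorial expression built from $\H,T,P,g$ vanishes on one particular solution does not establish that it vanishes as an algebraic identity; the bulk here is not a polynomial in free indeterminates but a specific combination of geometric fields on an arbitrary stationary Einstein--Maxwell background. The Kerr--Newman check is a valuable consistency heuristic (and a good guide for what groupings to look for), but it cannot replace the explicit term-by-term verification, which is exactly the ``careful algebraic bookkeeping'' you acknowledge at the end. You will need to actually carry it out.
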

\begin{proof}
We compute $\nabla_b U$ and divide it into three terms:
\begin{align*}
 \nabla_b U &= \underbrace{(2y\nabla_b y  + 2z\nabla_b z)\nabla^a z \nabla_a z}_{{I_1}} + \underbrace{2z\nabla_b z}_{{I_2}} + \underbrace{2|C_1 P|^2\nabla^a z \nabla_a\nabla_b z}_{{I_3}}.
\end{align*}
The most difficult term is $I_3$. Since $y$ is the real part of $C_1 P$, we will compute $\nabla_a \nabla_b(C_1 P)$ and then take the real part to derive the expression for $T_3$. Thus,
\begin{align*}
 \nabla_a \nabla_b(C_1 P) &= C_1 \nabla_a\nabla_b(\frac{1}{2\Xi})\\
 &=-2 C_1 \nabla_a(P^2 \nabla_b \Xi)\\
 &=\frac{2C_1}{P}\nabla_a P \nabla_b P-2C_1 P^2 \nabla_a(\H_{cb} T^c).
\end{align*}
Now we use the definition of $\B_{ab}$ to convert $\H_{ab}$ in the
second term into terms involving $\B_{ab}$ and $\F_{ab}$. In what follows, as long as a term involves $\B_{ab}$ or $\Q_{abcd}$, it will be treated as an error term. As to the
$\F_{ab}$ part, by virtue of Ricci identity \eqref{ricciidentity},
one can create a curvature term through $\nabla F_{ab}$:
\begin{align*}
&\quad \nabla_a \nabla_b(C_1 P)\\
 &=\frac{2C_1}{P}\nabla_a P \nabla_b P- C_1 P^2 \H_{cb} F_a{}^c -2 C_1 P^2 T^c \nabla_a (\frac{\F_{cb}-C_1^{-1}\B_{cb}}{4\Xib - 2\Cb_3})\\
&= \frac{2C_1}{P}\nabla_a P \nabla_b P- C_1 P^2 \H_{cb}{ (\F_a{}^c+\overline{\mathcal{F}}_a{}^c) }+8 C_1 P^2 T^c \frac{\nabla_a \Xib}{(4\Xib - 2\Cb_3)^2}(\F_{cb}-C_1^{-1}\B_{cb})\\
&\quad - C_1 P^2 T^c \frac{\nabla_a \F_{cb}}{2\Xib - \Cb_3} + P^2 T^c \frac{\nabla_a \B_{cb}}{2\Xib - \Cb_3}.
\end{align*}
Now we try to write things in terms of $\H$ and $P$, and treat the
terms involving $\Q$ and $\B$ as error terms:
\begin{align*}
 & \quad \nabla_a \nabla_b(C_1 P)\\
 &=\frac{2C_1}{P}\nabla_a P \nabla_b P- C_1 P^2 \H_{cb} (C_1^{-1} \B_a{}^c-(2\Cb_3 -4\Xib)\H_a{}^c)-C_1 P^2 \H_{cb} {(\overline{C}_1^{-1} \Bb_a{}^c-(2C_3 -4\Xi)\Hb_a{}^c)}\\
 & +4 C_1 P^2 T^c \frac{\H_{cb} \nabla_a \Xib}{2\Xib - \Cb_3}- 2C_1 P^2 T^c  \frac{(\C_{dacb}+\E_{dacb})T^d}{2\Xib - \Cb_3} + P^2 T^c \frac{\nabla_a \B_{cb}}{2\Xib - \Cb_3}.
\end{align*}
Thus,
\begin{align*}
 &\quad \nabla_a \nabla_b(C_1 P)\\
 &=\frac{2C_1}{P}\nabla_a P \nabla_b P- 2C_1 P^2 [\H_{cb} \H_a{}^c(2\Xib -\Cb_3)+\H_{cb} \Hb_a{}^c(2\Xi -C_3)]+\frac{C_1 \nabla_a \Pb \nabla_b P}{(2\Xib - \Cb_3)\Pb^2} - \frac{2C_1 P^2}{2\Xib - \Cb_3}\E_{dacb} T^c T^d \\
&\quad-\frac{6C_1 P^2\Ph}{2\Xib - \Cb_3}(\F \tensor \H)_{dacb} { T^c T^{d} }+ P^2 T^c \frac{\nabla_a \B_{cb}}{2\Xib - \Cb_3}-\frac{2C_1 P^2}{2\Xib - \Cb_3}\Q_{dacb} T^c T^d-2C_1 P^2 \H_{cb}\Re(C_1^{-1}\B_{a}{}^c).
\end{align*}
Finally, we obtain
\begin{align*}
\nabla_a \nabla_b(C_1 P)
&=\frac{2C_1 }{P}\nabla_a P \nabla_b P- 2C_1 P^2 \H_{cb} \Big[\H_a{}^c(2\Xib -\Cb_3)+\Hb_a{}^c(2\Xi -C_3)\Big]	\\
&\quad +\frac{C_1 \nabla_a \Pb \nabla_b P}{(2\Xib - \Cb_3)\Pb^2}- \frac{2C_1 P^2}{2\Xib - \Cb_3}\E_{dacb} T^c T^d  -12C_1 P^3(\H \tensor \H)_{dacb} T^c T^d\\
&\quad + P^2 T^c \frac{\nabla_a \B_{cb}}{2\Xib - \Cb_3}-\frac{2C_1 P^2}{2\Xib - \Cb_3}\Q_{dacb} T^c T^d + 12C_1P^2(P-\Ph)(\H\tensor\H)_{dacb}T^c T^d\\
&\quad-2C_1 P^2 \H_{cb}\Re(C_1^{-1}\B_{a}{}^c)-\frac{6 P^2\Ph}{2\Xib - \Cb_3}(\B \tensor \H)_{dacb} T^c T^d.
\end{align*}
We define the first error term
\begin{align*}
 E(1)_{ab} &= P^2 T^c \frac{\nabla_a \B_{cb}}{2\Xib - \Cb_3}-\frac{2C_1 P^2}{2\Xib - \Cb_3}\Q_{dacb} T^c T^d+ 12C_1P^2(P-\Ph)(\H\tensor\H)_{dacb}T^c T^d\\
&\quad-2C_1 P^2 \H_{cb}\Re(C_1^{-1}\B_{a}{}^c)-\frac{6 P^2\Ph}{2\Xib - \Cb_3}(\B \tensor \H)_{dacb} T^c T^d.
\end{align*}
By the previous computation, we have
\begin{align*}
\nabla_a \nabla_b(C_1 P) &= \underbrace{\frac{2C_1 }{P}\nabla_a P \nabla_b P}_{S(1)_{ab}} \underbrace{-12C_1 P^3(\H \tensor \H)_{dacb} T^c T^d}_{S(2)_{ab}} +\underbrace{\frac{C_1 \nabla_a \Pb \nabla_b P}{(2\Xib - \Cb_3)\Pb^2}}_{S(3)_{ab}} \underbrace{-\frac{2C_1 P^2}{2\Xib - \Cb_3}\E_{dacb} T^c T^d}_{S(4)_{ab}} \\
&\quad \underbrace{-2C_1 P^2 \H_{cb} [\H_a{}^c(2\Xib -\Cb_3)+\Hb_a{}^c(2\Xi -C_3)]}_{S(5)_{ab}}  +E(1)_{ab}\\
&=[S(1)_{ab}+S(2)_{ab}] + [S(3)_{ab}+ S(4)_{ab}] + S(5)_{ab} + E(1)_{ab}.
\end{align*}
Now we go further to compute these $S(i)_{ab}$'s one by one as follows:
\begin{align*}
 S(2)_{ab} &= -12C_1 P^3(\H \tensor \H)_{dacb} T^c T^d\\
&=-12C_1 P^3(\H_{da}\H_{cb} -\frac{1}{3}\I_{dacb}\H \cdot \H) T^c T^d\\
&=  -\frac{3C_1 }{P}\nabla_a P \nabla_b P + C_1 P^3 \H\cdot \H(g_{dc}g_{ab}-g_{db}g_{ac} +i \varepsilon_{dacb})T^cT^d \\
&=-\frac{3C_1 }{P}\nabla_a P \nabla_b P -\frac{1}{C_1}\frac{P^3}{\Ph^4}T^2 g_{ab} + \frac{1}{C_1} \frac{P^3}{\Ph^4}T_a T_b.
\end{align*}
So we have
\begin{align*}
 S(1)_{ab}+S(2)_{ab} = -\frac{C_1 }{P}\nabla_a P \nabla_b P -\frac{T^2}{C_1 P}g_{ab}+ \frac{T^2}{C_1}(\frac{1}{P}-\frac{P^3}{\Ph^4})g_{ab} + \frac{1}{C_1} \frac{P^3}{\Ph^4}T_a T_b.
\end{align*}
Now compute $S(4)_{ab}$:
\begin{align*}
 S(4)_{ab} &= -\frac{2C_1 P^2}{2\Xib - \Cb_3}\E_{dacb} T^c T^d\\
       &= -\frac{C_1 P^2}{4\Xib - 2\Cb_3}[g_{dc}T_{ab}+g_{ab}T_{dc}-g_{db}T_{ac}-g_{ac}T_{db}+i (T_d{}^e \varepsilon_{eacb}+T_a{}^f \varepsilon_{dfcb})] T^c T^d\\
       &= -\frac{C_1 P^2}{4\Xib - 2\Cb_3}(T^2T_{ab}+ g_{ab}T_{dc}T^c T^d-T_b T_{ac}T^c-T_a T_{db}T^d+i T^dT_{d}{}^e\varepsilon_{eacb}T^c)\\
           &= -\frac{2C_1 P^2}{2\Xib - \Cb_3}(T^2 \H_{ae}\Hb_b{}^{e} + g_{ab}\H_{de}T^d\Hb_{c}{}^e T^c -T_b \Hb_a{}^e \H_{ce}T^c-T_a T_{db} T^d +i \H_{df}T^d \Hb^{ef}\varepsilon_{eacb}T^c)\\
       &= -\frac{2C_1 P^2}{2\Xib - \Cb_3}(T^2 \H_{ae}\Hb_b{}^{e}+ \frac{\nabla P \cdot \nabla \Pb}{4P^2\Pb^2}g_{ab} + { \frac{1}{2P^2} T_b \Hb_a{}^e \nabla_e P-T_a T_{db} T^d) }-\frac{2C_1 P^2 i}{2\Xib - \Cb_3}\nabla_f \Xi \Hb^{ef}\varepsilon_{eacb}T^c.
\end{align*}
By virtue of the identity $i \Hb^{hk}\varepsilon_{wyzk} = -3g^h{}_{[w}\Hb_{yz]}$, we have
\begin{align*}
 S(4)_{ab} &= -\frac{2C_1 P^2}{2\Xib - \Cb_3}(T^2 \H_{ae}\Hb_b{}^{e}+ \frac{\nabla P \cdot \nabla \Pb}{4P^2\Pb^2}g_{ab} +\frac{1}{2P^2} T^b \Hb_a{}^e \nabla_e P-T_a T_{db} T^d)\\
       &\quad +{ \frac{2iC_1 P^2}{2\Xib - \Cb_3} }
          \nabla^f \Xi (g_{fa}\Hb_{cb}+g_{fc}\Hb_{ba}+g_{fb}\Hb_{ac})T^c\\
       &= -\frac{2C_1 P^2}{2\Xib - \Cb_3}(T^2 \H_{ae}\Hb_b{}^{e}-T_a T_{db} T^d)-\frac{C_1 \nabla P \cdot \nabla \Pb}{(4\Xib - 2\Cb_3)\Pb^2}g_{ab}-\frac{C_1}{2\Xib-\Cb_3}\Hb_a{}^e\nabla_e P T_b\\
       &\quad +\frac{2C_1 P^2}{2\Xib - \Cb_3}(\nabla_a \Xi \nabla_b \Xib -\nabla_b \Xi \nabla_a \Xib)\\
       &= -\frac{2C_1 P^2}{2\Xib - \Cb_3}(T^2 \H_{ae}\Hb_b{}^{e}-T_a T_{db} T^d) -\frac{C_1}{2\Xib-\Cb_3}\Hb_a{}^e\nabla_e P T_b\\
       &\quad +\frac{C_1}{(4\Xib - 2\Cb_3)\Pb^2}(\nabla_a P \nabla_b \Pb -\nabla_b P \nabla_a \Pb-\nabla P \cdot \nabla \Pb g_{ab}).
\end{align*}
This implies
\begin{align*}
S(3)_{ab}+S(4)_{ab} &= -\frac{2C_1 P^2}{2\Xib - \Cb_3}(T^2 \H_{ae}\Hb_b{}^{e}-T_a T_{db} T^d) -\frac{C_1}{2\Xib-\Cb_3}\Hb_a{}^e\nabla_e P T_b\\
       &\quad +\frac{C_1}{(4\Xib - 2\Cb_3)\Pb^2} {(3\nabla_a P \nabla_b \Pb + 3\nabla_b P \nabla_a \Pb-\nabla P \cdot \nabla \Pb g_{ab}) }.
\end{align*}
In conclusion, one can write
\begin{align*}
\nabla_a \nabla_b(C_1 P) &=[S(1)_{ab}+S(2)_{ab}] + [S(3)_{ab}+ S(4)_{ab}] + S(5)_{ab} + E(1)_{ab}\\
&= -\frac{C_1 }{P}\nabla_a P \nabla_b P -\frac{T^2}{C_1 P}g_{ab} -\frac{2C_1 P^2}{2\Xib - \Cb_3}T^2 \H_{ae}\Hb_b{}^{e} -\frac{C_1}{2\Xib-\Cb_3}\Hb_a{}^e\nabla_e P T_b\\
       &\quad +\frac{C_1}{(4\Xib - 2\Cb_3)\Pb^2}(\nabla_a P \nabla_b \Pb +\nabla_b P \nabla_a \Pb-\nabla P \cdot \nabla \Pb g_{ab})\\
       &\quad -2C_1 P^2 \H_{cb} [\H_a{}^c(2\Xib -\Cb_3)+\Hb_a{}^c(2\Xi -C_3)]\\
&\quad + E(1)_{ab}+\frac{T^2}{C_1}(\frac{1}{P}-\frac{P^3}{\Ph^4})g_{ab} + \frac{1}{C_1} \frac{P^3}{\Ph^4}T_a T_b + \frac{2C_1 P^2 T_{cb} T^c}{2\Xib - \Cb_3}T_a.
\end{align*}
In view of the expression of $I_3$, one needs to multiply the
previous expression by the factor $\nabla^a z$, the last two terms
will drop off since $\nabla_T z =0$ which comes from the symmetry of
the space-time and Maxwell field. To simplify, we define the
second error term
\begin{equation} \label{E(2)}
 E(2)_{ab}= E(1)_{ab} + \frac{T^2}{C_1}(\frac{1}{P}-\frac{P^3}{\Ph^4})g_{ab}.
\end{equation}
Thus,
\begin{align*}
 &\quad 2|C_1 P|^2\nabla^a z \nabla_a \nabla_b (C_1 P) \\
 &= -2C_1^2 \Cb_1 \Pb \nabla_a P \nabla_b P \nabla^a z -\frac{2C_1^2 \Cb_1 P\Pb}{2\Xib -\Cb_3}\Hb_a{}^e \nabla^a z \nabla_e P T_b- 2T^2\Cb_1 \Pb \nabla_b z\\
 &\quad -\frac{4C_1^2 \Cb_1 P^3 \Pb}{2\Xib - \Cb_3}T^2 \H_{ae}\Hb_b{}^{e} \nabla^a z  + \frac{C_1^2 \Cb_1 P}{(2\Xib - \Cb_3)\Pb}(\nabla_a P \nabla^a z\nabla_b \Pb +\nabla_b P \nabla_a \Pb \nabla^a z-\nabla P \cdot \nabla \Pb \nabla_b z)\\
&\quad -4C_1^2 \Cb_1 P^3 \Pb \H_{cb} [\H_a{}^c(2\Xib -\Cb_3)+\Hb_a{}^c(2\Xi -C_3)]\nabla^a z+ 2|C_1 P|^2 E(2)_{ab}\nabla^a z\\
&=R(1)_b+R(2)_b+R(3)_b+R(4)_b+R(5)_b+R(6)_b+2|C_1 P|^2 E(2)_{ab}\nabla^a z.
\end{align*}
With the help of the null decompositions, we now show that $R(2)_b$
is actually an error term.
\begin{align*}
 R(2)_b &= -\frac{2C_1^2 \Cb_1 P\Pb}{2\Xib -\Cb_3}\Hb_a{}^e \nabla^a z \nabla_e P T_b = -\frac{2|C_1 P|^2}{2\Xib -\Cb_3}\Hb_a{}^e \nabla^a z \nabla_e (C_1 P) T_b\\
&\stackrel{\Hb(\nabla z,\nabla z)=0}{=}  -\frac{T_bC_1 \Cb_1 P\Pb}{(2\Xib -\Cb_3)\Ph^2}(\lb^a l^e -l^a \lb^e + i\varepsilon^{ae}{}_{\lb l})[\varepsilon_{aT\lb l}+E_a(z)][(l\cdot T)\lb_e -(\lb \cdot T) l_e + E_e(y)]\\
&=-\frac{T_bC_1 \Cb_1 P\Pb}{(2\Xib -\Cb_3)\Ph^2}(\lb^a l^e -l^a \lb^e + i\varepsilon^{ae}{}_{\lb l})E_a(z)E_e(y)-\frac{iT_bC_1 \Cb_1 P\Pb}{(2\Xib -\Cb_3)\Ph^2} \varepsilon^{ae}{}_{\lb l}\varepsilon_{aT\lb l}E_e(y)\\
&\quad+\frac{T_bC_1 \Cb_1 P\Pb}{(2\Xib -\Cb_3)\Ph^2}[(l\cdot T)\lb^a + (\lb \cdot T) \lb^a]E_a(z).
\end{align*}
We now turn to the first term $R(1)_b$, we show that its imaginary part cancels the term $I_1$ up to an error term:
\begin{align*}
 \Im(R(1)_b)&= - \Im(-2 \Cb_1 \Pb \nabla_a(C_1 P) \nabla_b (C_1 P) \nabla^a z)\\
 &= - \Im(-2 (y-iz) \nabla_a(y+iz) \nabla_b (y+iz) \nabla^a z)\\
&= - 2(y\nabla_b y  + z\nabla_b z)(\nabla z \cdot \nabla z)+2(z\nabla_b y-y\nabla_b z)(\nabla y \cdot \nabla z)\\
&= -I_1 + (z\nabla_b y-y\nabla_b z) \Im ((1-\frac{P^4}{\Ph^4})T^2).
\end{align*}
We now show that, up to an error term, $R(4)_b$ and $R(5)_b$ cancel out.
\begin{align*}
R(4)_b &= -\frac{4C_1^2 \Cb_1 P^3 \Pb}{2\Xib - \Cb_3}T^2 \H_{ae}\Hb_b{}^{e} \nabla^a z\\
       &= -\frac{4C_1^2 \Cb_1 P^3 \Pb}{2\Xib - \Cb_3}T^2\nabla^a z\frac{1}{2C_1 \Ph^2}(\lb_a l_e -l_a \lb_e + i\varepsilon_{ae\lb l})\frac{1}{2\Cb_1 \Phb^2}(\lb_b l^e -l_b \lb^e - i\varepsilon_{b}{}^e{}_{\lb l})\\
       &= -\frac{C_1 P^3 \Pb T^2}{(2\Xib - \Cb_3)\Ph^2 \Phb^2}  \nabla^a z (\lb_a l_b +l_a \lb_b + \varepsilon_{ae\lb l} \varepsilon_{b}{}^{e}{}_{\lb l})\\
       &= -\frac{C_1 P^3 \Pb T^2}{(2\Xib - \Cb_3)\Ph^2 \Phb^2}  \nabla^a z (2 \lb_a l_b +2l_a \lb_b + g_{ab})\\
       &= -\frac{C_1 P^3 \Pb T^2}{(2\Xib - \Cb_3)\Ph^2 \Phb^2}  \nabla_b z  -\frac{2C_1 P^3 \Pb T^2}{(2\Xib - \Cb_3)\Ph^2 \Phb^2} (\varepsilon^a{}_{T\lb l}+E^a(z))(\lb_a l_b +l_a \lb_b)\\
       &= -\frac{C_1 P^3 \Pb T^2}{(2\Xib - \Cb_3)\Ph^2 \Phb^2}  \nabla_b z  -\frac{2C_1 P^3 \Pb T^2}{(2\Xib - \Cb_3)\Ph^2 \Phb^2} [(E(z) \cdot \lb)l_b +(E(z) \cdot l)\lb_b].
\end{align*}
Because of the presence of $E(z)$, it is easy to see that the last term in the previous computations is an error term.
\begin{align*}
 R(5)_b &= \frac{C_1 P}{(2\Xib - \Cb_3)\Pb}(\nabla_a (C_1 P) \nabla^a z\nabla_b (\Cb_1\Pb) +\nabla_b (C_1 P) \nabla_a (\Cb_1 \Pb) \nabla^a z-\nabla (C_1 P) \cdot \nabla (\Cb_1\Pb) \nabla_b z)\\
&=  \frac{C_1 P}{(2\Xib - \Cb_3)\Pb}[2(\nabla z \cdot \nabla y) \nabla_b y +(\nabla z \cdot \nabla z -\nabla y \cdot \nabla y)\nabla_b z]\\
&=  \frac{C_1 P}{(2\Xib - \Cb_3)\Pb}[2 \nabla z \cdot \nabla y \nabla_b (C_1 P) + \frac{P^4}{\Ph^4}T^2 \nabla_b z]\\
&=  \frac{P^2 \Phb^2}{\Pb^2 \Ph^2} \frac{C_1 P^3 \Pb T^2}{(2\Xib - \Cb_3)\Ph^2 \Phb^2} \nabla_b z + \frac{C_1^2 P}{(2\Xib - \Cb_3)\Pb}\Im((1-\frac{P^4}{\Ph^4})T^2)\nabla_b P.
\end{align*}
Finally, we have
\begin{align*}
R(4)_b + R(5)_b &= (\frac{P^2 \Phb^2}{\Pb^2 \Ph^2}-1) \frac{C_1 P^3 \Pb T^2}{(2\Xib - \Cb_3)\Ph^2 \Phb^2} \nabla_b z + \frac{C_1^2 P}{(2\Xib - \Cb_3)\Pb}\Im((1-\frac{P^4}{\Ph^4})T^2)\nabla_b P\\
&\quad  -\frac{2C_1 P^3 \Pb T^2}{(2\Xib - \Cb_3)\Ph^2 \Phb^2} [(E(z) \cdot \lb)l_b +(E(z) \cdot l)\lb_b].
\end{align*}
Again, with the help of this formula, one shows easily that $R(4)_b + R(5)_b$ is an error term. In fact, we will show in the next section that $P$ and $\Ph$ are the same up to an error term by looking at the asymptotics. So we define the following error term
\begin{align}\label{E3}
 E(3)_b &=\Im(R(4)_b + R(5)_b).
\end{align}
Now we treat the last term. Notice that a certain amount of
calculation has already been done when we deal with the term
$R(4)_b$:
\begin{align*}
 R(6)_b &= -4C_1^2 \Cb_1 P^3 \Pb \H_{cb} [\H_a{}^c(2\Xib -\Cb_3)+\Hb_a{}^c(2\Xi -C_3)]\nabla^a z\\
&= C_1^2 \Cb_1 P^3 \Pb (2\Xib -\Cb_3) \H \cdot \H \nabla_b z + 4C_1^2 \Cb_1 P^3 \Pb \H_{bc} \Hb_a{}^c(2\Xi -C_3)\nabla^a z\\
&=  -\frac{\Cb_1 P^3 \Pb}{\Ph^4} (2\Xib -\Cb_3) \nabla_b z-\frac{(2\Xib -\Cb_3)(2\Xi -C_3)}{T^2}R(4)_b\\
&= -\frac{\Cb_1 P^3 \Pb}{\Ph^4} (2\Xib -\Cb_3) \nabla_b z +\frac{C_1 P^3 \Pb }{\Ph^2 \Phb^2}(2\Xi -C_3)  \nabla_b z +\frac{2C_1 P^3 \Pb }{\Ph^2 \Phb^2}(2\Xi -C_3) [(E(z) \cdot \lb)l_b +(E(z) \cdot l)\lb_b].
\end{align*}
so we have
\begin{align*}
 \Im(R(6)_b) &= -2 \Im(\frac{\Cb_1 \Pb}{P} (2\Xib -\Cb_3) \nabla_b z)  -\Im[\Cb_1(\frac{P^4}{\Ph^4}-1)(2\Xib -\Cb_3) -C_1 (\frac{P^2\Pb^2}{\Ph^2\Phb^2}-1)\frac{P}{\Pb}(2\Xi -C_3)]\nabla_b z \\
&\quad +\Im(\frac{2C_1 P^3 \Pb }{\Ph^2 \Phb^2}(2\Xi -C_3) [(E(z) \cdot \lb)l_b +(E(z) \cdot l)\lb_b])\\
&= -2 \Im(\frac{\Cb_1 \Pb}{P} (2\Xib -\Cb_3) \nabla_b z) + E(4)_b.
\end{align*}
where we have another error term:
\begin{align*}
 E(4)_b &=-\Im[\Cb_1(\frac{P^4}{\Ph^4}-1)(2\Xib -\Cb_3) -C_1 (\frac{P^2\Pb^2}{\Ph^2\Phb^2}-1)\frac{P}{\Pb}(2\Xi -C_3)]\nabla_b z \notag\\
&\qquad +\Im(\frac{2C_1 P^3 \Pb }{\Ph^2 \Phb^2}(2\Xi -C_3) [(E(z) \cdot \lb)l_b +(E(z) \cdot l)\lb_b])
\end{align*}
Finally, by putting all the identities together, we obtain
\begin{align*}
 \nabla_b U &= {I_1 + I_2 + I_3= I_1 + I_2} +\Im(2|C_1 P^2|\nabla^a z \nabla_a \nabla_b (C_1 P)) \\
&= 2z\nabla_b z -2\Im(T^2 \Cb_1 \Pb \nabla_b z)-2\Im(\frac{\Cb_1 \Pb}{P} (2\Xib -\Cb_3) \nabla_b z) +E(2)_b+E(3)_b+E(4)_b\\
&= 2|C_1 P|^2 \nabla^a z\Im(E(2))_b+E(3)_b+E(4)_b.
\end{align*}
This completes the proof of the lemma.
\end{proof}

\section{The study of the function $y$}
\subsection{The domain of definition for $y$}
In this subsection, we use a bootstrap argumen to prove that the electromagnetic potential $\Xi$ does not vanish on the hypersurface $\Sigma$. In particular, it shows that $y$ is a well-defined smooth function on $\Sigma$. The proof roughly goes as follows: We first use asymptotic information at spatial infinity bound $\nabla P$. Then, we can integrate this bound from spatial infinity to show that $P$ can not blow up, i.e. $\Xi$ has no zeroes. The study of the asymptotic behavior of various geometric quantities is the first step of the bootstrap argument. This is also the step where we show that all the algebraic terms in previous section are truly error terms (with bounds in terms of $\varepsilon$).

\subsubsection{Asymptotic identities}
In terms of the Cartesian coordinates mear spatial infinity (where we take $\partial_0 = T$), the assumption ($\bf A_1$) implies
\begin{equation*}
{ g^{00} }=-1-\frac{2M}{r}+O_4(r^{-2}),\quad  { g^{ij} } =\delta_{ij}-\frac{2M}{r}\delta_{ij}+O_4(r^{-2}),\quad  { g^{0i} }=-\varepsilon_{ijk}\frac{2 S^j x^k}{r^3}+O_4(r^{-3}),
\end{equation*}
and
\begin{equation*}
 { H_{0i} }=-\frac{q x_i}{r^3} + O_4(r^{-3}),\quad { H_{ij} }=\left(\frac{3 S_l x^l}{r^2}x^k-S^k\right)\frac{q\varepsilon_{ijk}}{Mr^3}+O_4(r^{-4}).
\end{equation*}
The Hodge dual $^*\!H_{ab}$ can be expressed as
\begin{align*}
 ^*\!H_{0i}&=\frac{1}{2} \varepsilon_{0iab}H_{cd}g^{ac} g^{bd}=\left(\frac{3 S_l x^l}{r^2}x^i-S^i\right)\frac{q}{Mr^3} + O_4(r^{-4}),\\
 ^*\!H_{ij}&=\frac{1}{2} \varepsilon_{ijab}H_{cd}g^{ac} g^{bd}=\varepsilon_{ijk}\frac{q x^k}{r^3}+O_3(r^{-3}).
\end{align*}
Therefore, we can compute the asymptotics of $\H^2$ where $\H_{ab} = \frac{1}{2}(H_{ab}+i{}{}^*\!H_{ab})$:
\begin{align*}
 \H^2 &= \H_{ab}\H^{ab}=\frac{1}{4}(H_{ab}H^{ab}-{}^*\!H_{ab}{}^*\!H^{ab})+\frac{i}{2}H_{ab}{}^*\!H^{ab}\\
      &= \frac{1}{2} H_{ab}H^{ab}+\frac{i}{2}H_{ab}{}^*\!H^{ab} \\
      &= -\frac{q^2}{r^4} + O(r^{-5}) + i\left(4\frac{q^2 S^l x_l}{M r^6} + O(r^{-6})\right).
\end{align*}
The Ernst 2-form is given as follows
\begin{equation*}
F_{0j}=2M \frac{x^j}{r^3} +O(r^{-3}), \quad F_{ij}=\frac{2}{r^3}\varepsilon_{ijk}S^k - \frac{6 S^l x_l}{r^5}\varepsilon_{ijk}x^k+O(r^{-4}).
\end{equation*}

We can fix a gauge of $\H$ at spatial infinity so that $q_B =0$. Hence, $C_1 = q = q_E + i q_B$. Since $\Ph^{-4} = -C_1^2 \H^2$, we have
\begin{equation*}
 \frac{1}{\Ph} = \frac{q}{r}+O(r^{-2})-i\left[\frac{q S^l x_l}{Mr^3} + O(r^{-3})\right].
\end{equation*}
Recall that $\nabla_b \Xi = \H_{ab}T^a =\H_{0b}$, thus,
\begin{equation*}
 \nabla_i \Xi =  -\frac{q x_i}{2r^3} + O_4(r^{-3})+ i\left[\frac{1}{2}(\frac{3 S_l x^l}{r^2}x^i-S^i)\frac{q}{Mr^3} + O_4(r^{-4})\right].
\end{equation*}
We impose the renormalization condition $\Xi = 0$ at infinity. We integrate the above equation to derive
\begin{equation*}
 \Xi = \frac{1}{2} \left[\frac{q}{r}+O(r^{-2})+i\left(-\frac{q S^l x_l}{Mr^3}+O(r^{-3})\right)\right],
\end{equation*}
which implies
\begin{equation*}
  \frac{1}{P} = \frac{q}{r}+O(r^{-2})-i\left[\frac{q S^l x_l}{Mr^3} + O(r^{-3})\right].
\end{equation*}
In particular, near infinity, $\frac{1}{\Ph}$ and $\frac{1}{P}$ agree up to first order in $r^{-1}$, i.e.
\begin{equation}
 |\frac{1}{\Ph} -\frac{1}{P}| = O_2(r^{-2}).
\end{equation}
We turn to the asymptotics for functions $y$ and $z$.
\begin{equation*}
 \frac{1}{C_1 P} = \frac{1}{r}+O(r^{-2})-i\left[\frac{ S^l x_l}{Mr^3} + O(r^{-3})\right].
\end{equation*}
Hence,
\begin{equation*}
 y = r + O(1), \quad  z = \frac{S^l x_l}{Mr} + O(r^{-1}).
\end{equation*}
We can prove an asymptotical version of Mars type lemma: The previous aysmptotics and the fact that $\nabla_0 z = \nabla_T z =0$ imply
\begin{align*}
 (\nabla z)^2 = \nabla_i z \nabla^i z &= \sum_{i=1,2,3}\left[\partial_i(\frac{S^l x_l}{Mr})\right]^2 +O(r^{-3})\\
 &= \frac{|S|^2}{M^2 r^2}-\frac{(S^l x_l)^2}{M^2 r^4}+O(r^{-3}).
\end{align*}
Hence,
\begin{align*}
 z^2 + (y^2+z^2)(\nabla z)^2 &= r^2\left(\frac{|S|^2}{M^2 r^2}-\frac{(S^l x_l)^2}{M^2 r^4}\right) + \frac{(S^l x_l)^2}{M^2 r^2} +O(r^{-1})\\
 &=\frac{|S|^2}{M^2} +O(r^{-1}).
\end{align*}
i.e.,
\begin{equation}\label{Mars_at_infinity}
 z^2 + (y^2+z^2)(\nabla z)^2 = \mathfrak{a}^2 + O(r^{-1}).
\end{equation}

\subsubsection{Bootstrap assumptions}
We study the exhaustion $\displaystyle \bigcup_{R\geqslant 0} E_R$ of $\Sigma$, where $E_R$'s are adapted to the bifurcate sphere $\S$:
\begin{equation*}
 E_{R} = \big\{p \in \Sigma~\big|~  d(p, \S) \leqslant R\big\},
\end{equation*}
where $d$ is the distance function on the space-like hypersurface $\Sigma$. For each $R$, $E_R$ is a connected set. In view of the asymptotical flatness, if $R$ is large enough (such that the function $r$ makes sense), on the boundary of $E_R$, $r$ is almost $R$ up to a constant.

We will run a bootstrap argument with the help of the asymptotical behavior of various geometric quantities. Since
\begin{equation*}
\Xi = \frac{q}{2r} + O(r^{-2}),
\end{equation*}
we can choose a $R^*$, such that on $\Sigma - B_{R^*}$ (the set $B_{R^*}$ is well defined when $R^*$ is sufficiently large), $\Xi$ is never zero. The number $R^*$ is fixed once forever and it will be useful later on. In particular, for $r\geqslant R^*$, we have
\[|P|\lesssim_{R^*} 1.\]

We then choose a large $R_0$, such that on $\Sigma-B_{R_0}$ (the set $B_{R_0}$ is well defined when $R_0$ is sufficiently large), $\Xi$ is not vanishing and on $\partial B_{R_0}$ we have
\begin{equation*}
 |2\Xi| \geqslant R_0^{-2}.
\end{equation*}
The larege radius $R_0$ will be determined later on.

Let
\[A=\big\{r\in [0,R_0]\big|\text{for all $p \in B_{R_0}-E_{r_0}$, we have} |2\Xi| \geqslant R_0^{-2}\big\}.\]
We define
\[r_0=\inf A.\]
In view of the asymptotics, we can take sufficiently large $R_0$ so that $r_0 \leqslant R_0 -1$.

We will use a bootstrap  argument to improve the previous inequality of $\Xi$, i.e., to show that there is a sufficiently large $R_0$ so that we indeed have
\[ |2\Xi| \geqslant 2 R_0^{-2}\]
for all $p \in B_{R_0}-E_{r_0}$. Once we have proved the above improved estimate, the standard continuity argument implies that $r_0 = 0$, i.e., $\Xi$ does not vanish. Hence, $P$ is well-defined.

\begin{center}
\includegraphics[width=2.8in]{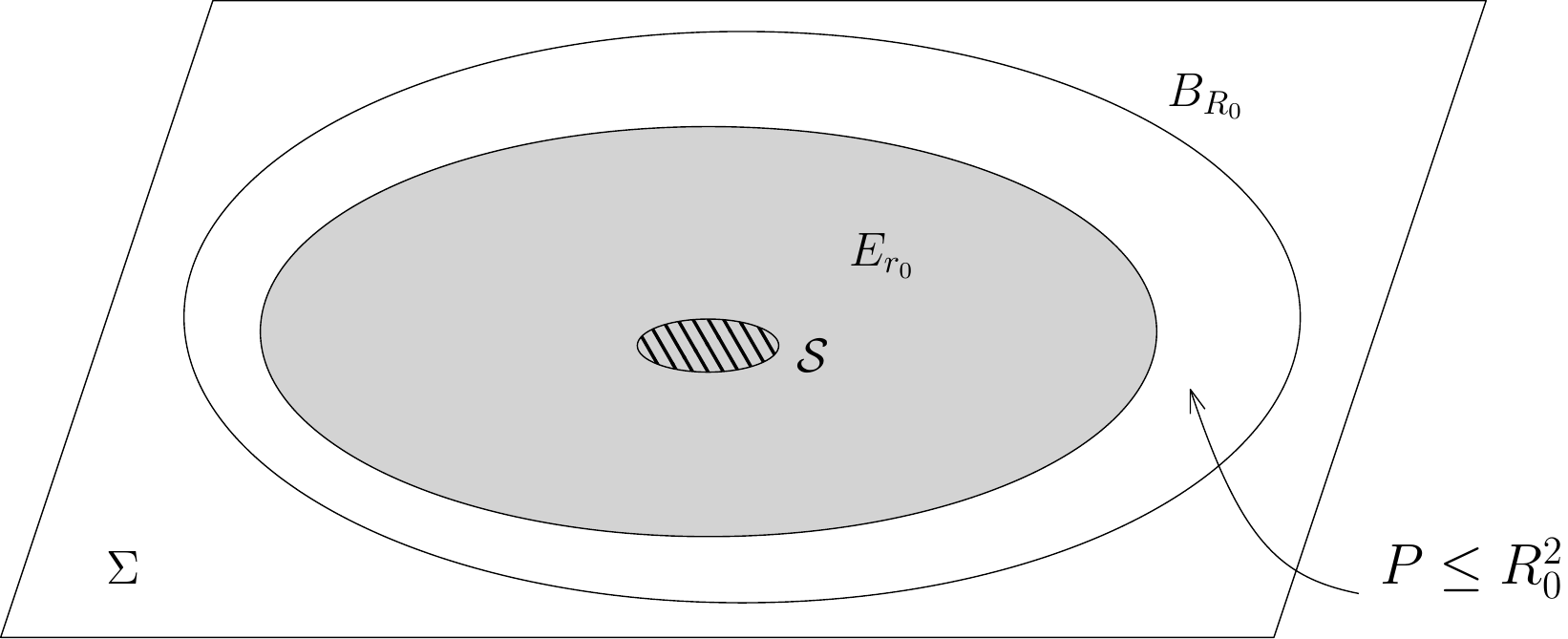}
\end{center}

In view of $P=(2\Xi)^{-1}$, it is suffices to prove the following lemma:
\begin{lemma}\label{bootslemma}
Under the above bootstrap assumptions, for all $p \in B_{R_0}-E_{r_0}$, we have
\begin{equation}
 |P| \leqslant \frac{1}{2} R_0^2
\end{equation}
\end{lemma}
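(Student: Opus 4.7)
The plan is to establish a uniform gradient bound $|\nabla^{\Sigma} P|\leq C$ on $B_{R_0}\setminus E_{r_0}$, independent of $R_0$, and then integrate this bound from $\partial B_{R_0}$ inward, using the asymptotic expansions of Section 3.1.1 to supply the boundary value $|P|_{\partial B_{R_0}}=R_0/q+O(1)$. The starting point is the explicit identity
\[\nabla_b P=-2P^{2}\,\H_{ab}T^{a},\]
so that the required gradient bound follows from a pointwise estimate $|\H|_{\Sigma}\lesssim|P|^{-2}$ together with the boundedness of $T$ on $\Sigma$ guaranteed by $\bf A_1$.

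The pointwise bound on $\H$ is the heart of the argument. From the algebraic identity $\H^{2}=-(C_{1}\Ph^{2})^{-2}$ and the null decomposition \eqref{nullH}, the pointwise Riemannian norm of the anti-self-dual form $\H$ on $\Sigma$ is of size $|C_{1}\Ph^{2}|^{-1}$ times bounded frame factors, so it suffices to show $|\Ph|\sim|P|$. For this I apply Lemma 2.1: the bootstrap keeps $|\Xi|$ small so that $|2\Xib-\Cb_{3}|\sim|\Cb_{3}|$ is uniformly bounded below, and the coefficient $4C_{1}^{-1}(\Cb_{3}-2\Xib)$ is invertible with uniform bound. The right-hand side of Lemma 2.1 is an algebraic combination of $\B$, $\nabla\B$, $\Q$, $\H$, $T$, $\Ph$ which, under $\bf A_{3}$, is controlled pointwise by $\varepsilon$ times geometric factors; the asymptotic decay of Section 3.1.1 makes these factors integrable at infinity. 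Integrating $\nabla(1/P-1/\Ph)$ along paths from spatial infinity, where both $1/P$ and $1/\Ph$ vanish, yields $|1/P-1/\Ph|\lesssim\varepsilon$ throughout $B_{R_0}\setminus E_{r_0}$ provided $\varepsilon$ is small relative to $R_0$. Combined with the bootstrap $|1/P|\geq R_0^{-2}$, this forces $|1/\Ph|$ to be comparable to $|1/P|$, hence $|\Ph|\sim|P|$.

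Putting the pieces together gives $|\nabla^{\Sigma}P|\leq 2|P|^{2}\cdot C|P|^{-2}\cdot|T|\leq C$, uniformly in $R_0$. Since every point of $B_{R_0}\setminus E_{r_0}$ lies within distance $R_0$ of $\partial B_{R_0}$ and $|P|=R_0/q+O(1)$ on that boundary, integrating the gradient bound along a minimizing path yields $|P|=O(R_0)$, which is dominated by $R_0^{2}/2$ for $R_0$ large. The main obstacle is the middle step, i.e.\ extracting a \emph{pointwise} Riemannian bound on $\H$ from the complex scalar identity $\H^{2}=-(C_{1}\Ph^{2})^{-2}$, and ensuring that the right-hand side of Lemma 2.1 admits integrable bounds at infinity via the refined asymptotics rather than only the crude pointwise $\bf A_{3}$ estimate; this is what pins down how small $\varepsilon$ in $\bf A_3$ must be, in terms of $R_0$, for the improved bootstrap to close.
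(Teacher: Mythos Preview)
Your strategy of bounding $|\nabla^{\Sigma}P|$ uniformly and integrating from $\partial B_{R_0}$ is sound, and the endpoint arithmetic $|P|=R_0/q+O(1)+C R_0\leq \tfrac12 R_0^2$ is exactly how the paper concludes. The gap is precisely the step you flagged as ``the main obstacle'': you cannot extract a Riemannian bound on $\H_{ab}T^a$ from the complex scalar $\H^2=-(C_1\Ph^2)^{-2}$ alone. The null frame in \eqref{nullH} is only normalized by the Lorentzian condition $g(l,\lb)=-1$, so the individual Riemannian sizes of $l,\lb$ on $\Sigma$ are \emph{not} a priori bounded; the ``bounded frame factors'' claim is unjustified. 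Equivalently, the identity $\H_{ac}\H_b{}^c=\tfrac14\H^2 g_{ab}$ gives only the Lorentzian norm $g^{ab}\nabla_a P\,\nabla_b P$, and turning that into a Riemannian bound on $\Sigma$ requires quantitative control on the causal character of $T$, which you have not established on the whole region $B_{R_0}\setminus E_{r_0}$ (and which genuinely degenerates as $r_0\to 0$, since $g(T,T)\to 0$ at the horizon).

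The paper closes this gap by a dichotomy you are missing. It splits $B_{R_0}\setminus E_{r_0}$ into $W_1=\{|P|\leq\tfrac12R_0^2\}$, where there is nothing to prove, and $W_2=\{\tfrac12R_0^2\leq|P|\leq R_0^2\}$. On $W_2$ one has $|1/P|\leq 2R_0^{-2}$, so the exact identity $T^2=-|1/P-C_2|^2+C_4-2\Re V$ from Lemma~\ref{lemmaTsquare} forces $g(T,T)\in[-\tfrac{101}{100},-\tfrac{99}{100}]$; thus $T$ is \emph{uniformly} timelike on $W_2$. Since $T(y)=T(z)=0$, the gradients $\nabla y,\nabla z$ are spacelike there, and the Lorentzian formulas of Lemma~\ref{bootstrapmars} for $(\nabla y)^2,(\nabla z)^2$ (which are $O(1)$ because $y^2+z^2\sim|P|^2\geq\tfrac12R_0^2$) now control the Riemannian norm $|\nabla^\Sigma P|$. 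Only after this does the integration argument run. Your route could be repaired by inserting the same $W_1/W_2$ split and the timelikeness-of-$T$ step; without it the argument does not close.
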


\subsubsection{Approximate identities under bootstrap assumptions}
We will use the following auxiliary lemma repeatedly to integrate geometric quantities from spatial infinity to a finite region:
\begin{lemma}\label{ODE}
Let $\delta>0$ be a constant and $f$ be a smooth function defined on $\Sigma-E_{r_0}$. We assume that
\begin{itemize}
\item[1)] For all $x\in \Sigma-E_{r_0}$, we have
\begin{equation} \label{a}
|\nabla_b f| \lesssim \delta.
\end{equation}
\item[2)]For $r \rightarrow \infty$, i.e., on the region near spatial infinity, we have
\begin{equation}\label{b}
 |f| \lesssim r^{-\alpha},
\end{equation}
where $\alpha >0$ is a constant.
\end{itemize}
Then, we have
\begin{equation*}
 |f| \lesssim \min(\delta^{\frac{\alpha}{\alpha+1}},r^{-\alpha}).
\end{equation*}
\end{lemma}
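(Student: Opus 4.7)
The plan is to prove this by a standard interpolation between the pointwise gradient bound (1) and the decay hypothesis (2) at infinity, with an optimized choice of length-scale. The bound $|f|\lesssim r^{-\alpha}$ is already contained in hypothesis (2) for the asymptotic region, so the nontrivial content is to show $|f(p)|\lesssim \delta^{\alpha/(\alpha+1)}$ uniformly on $\Sigma-E_{r_0}$; the minimum of the two bounds then gives the stated conclusion.

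Fix $p\in \Sigma-E_{r_0}$ and a parameter $R$ large enough that the asymptotic decay in (2) applies on $\{r\geq R\}$. Using the asymptotic flatness assumption $\mathbf{A_1}$, I connect $p$ to a point $q$ with $r(q)=R$ by a curve $\gamma\subset \Sigma-E_{r_0}$ whose length is comparable to $R$; concretely, this can be arranged by a bounded-length detour to exit a fixed compact core, followed by a roughly radial segment in the asymptotic chart (where the induced metric on $\Sigma$ is close to Euclidean). By the fundamental theorem of calculus along $\gamma$ together with the gradient bound (1) and the decay (2) at $q$,
\[
|f(p)|\leq |f(q)|+\int_\gamma |\nabla f|\, ds \lesssim R^{-\alpha}+\delta R.
\]
Optimizing in $R$ by the choice $R=\delta^{-1/(\alpha+1)}$ balances the two terms and yields $|f(p)|\lesssim \delta^{\alpha/(\alpha+1)}$.

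The main technical point is the admissibility of the scale $R=\delta^{-1/(\alpha+1)}$: it must exceed the threshold radius beyond which the asymptotic decay in (2) is valid. For small $\delta$ this is automatic; for $\delta$ bounded below by a positive constant, the target bound $\delta^{\alpha/(\alpha+1)}$ is itself a positive constant and the estimate follows from boundedness of $f$ on compact subsets of $\Sigma-E_{r_0}$, which one obtains by iterating (1) along a bounded-length path initialized at a reference point where (2) applies. No cancellation, curvature-specific structure, or Carleman-type input enters the argument; this is an elementary interpolation lemma, to be invoked in the bootstrap of the next subsection to convert Wong-tensor smallness into smallness of various derived geometric quantities.
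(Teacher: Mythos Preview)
Your argument is correct and is essentially identical to the paper's own proof: both pick the scale $R=\delta^{-1/(\alpha+1)}$, integrate the gradient bound along a path of length $\lesssim R$ to a point where the decay hypothesis applies, and balance $R^{-\alpha}$ against $\delta R$. Your discussion of path construction via asymptotic flatness and of the admissibility of the chosen scale is slightly more explicit than the paper's, but the strategy and computation coincide.
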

\begin{proof}
 We fix a $p\in \Sigma-E_{r_0}$. Let $M = \delta^{-\frac{1}{\alpha+1}}$. If  $r(p) \geqslant M$,  the lemma automatically holds in view of \eqref{b}. If  $r(p) \leqslant M$, one can integrate \eqref{b} from a point $q$ with $r(q) = M$ to the point $p$. Thus, we have
\begin{align*}
  |f(p)| & \leqslant |f(q)|  +\|\nabla_b f\|_{L^\infty} R_0 \\
 &\lesssim \delta^{\frac{\alpha}{\alpha+1}} + \delta \delta^{-\frac{1}{\alpha+1}}.
\end{align*}
This completes the proof of the lemma.
\end{proof}
In the previous section, we derived various identities with algebraic error terms. Under the bootstrap assumptions, we now show that those algebraic error terms can be bounded in terms of $\varepsilon$. In view of Lemma \ref{lemmaPPh}, we fisrt derive
\begin{lemma}\label{bootsPPh}
 On $\Sigma-E_{r_0}$, we have
\begin{equation}\label{PminusPh}
 \Big|\frac{1}{P} - \frac{1}{\Ph}\Big| \lesssim \min(\varepsilon^{\frac{2}{3}}, r^{-2}).
\end{equation}
\end{lemma}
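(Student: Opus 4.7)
The strategy is to combine Lemma \ref{lemmaPPh} with the already-computed asymptotics and then apply Lemma \ref{ODE} with the decay exponent $\alpha = 2$. Concretely, I would set $f = \frac{1}{P} - \frac{1}{\Ph}$ and seek to verify the two hypotheses of Lemma \ref{ODE}: a uniform gradient bound of size $\varepsilon$ on $\Sigma - E_{r_0}$, and the decay $|f| = O(r^{-2})$ at spatial infinity. The latter is already established in the asymptotic computation preceding this lemma, where the explicit expansions of $\frac{1}{P}$ and $\frac{1}{\Ph}$ are shown to agree to first order in $r^{-1}$, giving exactly $|f| = O_2(r^{-2})$.

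For the gradient estimate, the plan is to use Lemma \ref{lemmaPPh}, which expresses $4C_1^{-1}(\Cb_3 - 2\Xib)\nabla_b f$ as a linear combination of $\Ph^3\H\cdot\nabla_b\B$, $\Ph^3\Q_{dbef}\H^{ef}T^d$, $\B_{db}T^d$, and $\Ph^4(\B\cdot\H)\H_{db}T^d$. Each of these is of the form (bounded geometric tensor)$\times(\B\text{ or }\nabla\B\text{ or }\Q)$, and by assumption $\bf A_3$, dividing $\B$, $\nabla\B$ and $\Q$ by $(2\Xib - C_3)$ produces a quantity bounded by $\varepsilon$. So the estimate reduces to showing that, on $\Sigma - E_{r_0}$ under the bootstrap assumptions, the geometric prefactors together with the quotient $\frac{2\Xib - C_3}{\Cb_3 - 2\Xib}$ are uniformly bounded.

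The bootstrap on $B_{R_0} - E_{r_0}$ gives $|P|\leqslant \tfrac{1}{2}R_0^2$, i.e.\ $|\Xi|\geqslant R_0^{-2}$; on $\Sigma - B_{R^*}$ one already has $|P|\lesssim_{R^*} 1$ by the choice of $R^*$, and hence $|\Xi|$ and $|\Xib|$ are uniformly bounded on all of $\Sigma - E_{r_0}$. The asymptotic expansion of $\H^2$ furnishes an analogous uniform bound for $|\Ph|$ and $|\H|$ once $\varepsilon$ is taken small (the non-vanishing of $\H^2$ was already noted in the paper as a consequence of the bootstrap). Since at infinity $\Cb_3 - 2\Xib\to\Cb_3\neq 0$, the denominator $|\Cb_3 - 2\Xib|$ is bounded below near infinity; an elementary pointwise estimate $|\Cb_3 - 2\Xib|\geqslant |C_3| - 2|\Xib|$ together with the smallness of $\varepsilon$ (and thus control on the drift of $\Xib$ away from $0$ via $P\approx \Ph$) propagates this lower bound throughout $\Sigma - E_{r_0}$. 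Putting these facts together gives $|\nabla_b f|\lesssim \varepsilon$ uniformly on $\Sigma - E_{r_0}$, and Lemma \ref{ODE} with $\alpha=2$, $\delta = \varepsilon$ then delivers the claimed $|f|\lesssim \min(\varepsilon^{2/3}, r^{-2})$.

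The main obstacle I anticipate is the control of $(\Cb_3 - 2\Xib)^{-1}$: the bootstrap on $|P|$ only guarantees a very weak lower bound on $|\Xi|$, which by itself does not prevent $\Cb_3 - 2\Xib$ from becoming small at interior points. The cleanest way around this is to observe that the \emph{same} combination $2\Xib - C_3$ (or its conjugate) appears inside assumption $\bf A_3$, and so the quotients $\B/(\Cb_3 - 2\Xib)$, $\nabla\B/(\Cb_3 - 2\Xib)$, $\Q/(\Cb_3 - 2\Xib)$ are controlled by $\varepsilon$ directly, up to a bounded factor $|(2\Xib - C_3)/(\Cb_3 - 2\Xib)|$ which is finite because both complex numbers tend to the nonzero limits $-C_3$ and $\Cb_3$ at infinity and a continuity/smallness argument (choosing $\varepsilon_0$ sufficiently small and exploiting the non-extremality $M^2 > q^2 + \mathfrak{a}^2$) extends the ratio bound to all of $\Sigma - E_{r_0}$.
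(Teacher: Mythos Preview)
Your approach is essentially the same as the paper's: divide the identity of Lemma~\ref{lemmaPPh} by $\Cb_3-2\Xib$, bound each resulting term by $\varepsilon$ using assumption $\mathbf{A_3}$ together with the uniform boundedness of the geometric prefactors (in particular $\Ph^3\H$), invoke the asymptotic computation $|\frac{1}{P}-\frac{1}{\Ph}|=O_2(r^{-2})$ for the decay at infinity, and then apply Lemma~\ref{ODE} with $\delta=\varepsilon$, $\alpha=2$.

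One simplification: your last two paragraphs worry about controlling the ratio $(2\Xib-C_3)/(\Cb_3-2\Xib)$, but this is not an issue. In the gauge fixed earlier in the section one has $q_B=0$, so $C_3=M/q$ is real and hence $\Cb_3=C_3$; the two denominators then have identical absolute value and assumption $\mathbf{A_3}$ applies directly with no extra continuity argument needed.
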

\begin{proof}
Lemma \ref{lemmaPPh} implies
\begin{align}\label{a1}
\nabla_b(\frac{1}{P}-\frac{1}{\Ph})&= \underbrace{\frac{C_1}{4}\Ph^3\H \cdot \frac{\nabla_b \B}{\Cb_3-2\Xib}}_{\mathbf{I}}-\frac{C^2_1}{2}\Ph^3 \frac{\Q_{dbef}}{\Cb_3-2\Xib}\H^{ef}T^d  \notag \\
&\qquad + \frac{3}{4C_1} \frac{\B_{db}}{\Cb_3-2\Xib}T^d-\frac{C_1}{4}\Ph^4 (\frac{\B}{\Cb_3-2\Xib}\cdot \H) \H_{db}T^d.
\end{align}
We start with the estimate the first term at the right hand side, i.e., $\mathbf{I}=\frac{C_1}{4}\Ph^3\H \cdot \frac{\nabla_b \B}{\Cb_3-2\Xib}$. Near spatial infinity, according to the asymptotics derived in the previous subsection, we have
\begin{equation*}
 |\Ph| \lesssim r, \ \ |\H| \lesssim r^{-2},\ \ |\nabla_b \B| \lesssim r^{-3}.
\end{equation*}
Therefore, near spatial infinity, we have
\begin{equation*}
  \mathbf{I}\lesssim \frac{1}{r^2}.
\end{equation*}
On the other hand, $\Ph^3\H$ is bounded. Therefore, in view of the assumption $\bf{A_3}$, we have
\begin{equation*}
  \mathbf{I}\lesssim \varepsilon.
\end{equation*}
To summarize, we have
\begin{equation*}
 \mathbf{I} \lesssim \min(\varepsilon, \frac{1}{r^2}).
\end{equation*}
We can proceed in the same manner to bound other terms at the right hand side of \eqref{a1}. This leads to
\begin{equation*}
  \Big|\nabla_b(\frac{1}{P}-\frac{1}{\Ph})\Big| \lesssim \min(\varepsilon, \frac{1}{r^2}).
\end{equation*}
We can apply the Lemma \ref{ODE} with $\delta = \varepsilon$ and $\alpha =2$ to obtain \eqref{PminusPh}.
\end{proof}
In principle, one can bound all the algebraic error terms by $\varepsilon$ and $r^{-1}$ in a similar way. While in the sequel, we shall estimate terms which are necessary in the proof.
\begin{lemma}\label{booxy}
On $\Sigma-E_{r_0}$, we have the following approximate wave equation for $y$
\begin{equation}\label{booooxy}
\Big|\Box_g y + \frac{2(M-y)}{y^2+z^2}\Big| \lesssim \min(\varepsilon^{\frac{1}{3}}, r^{-2}),
\end{equation}
provided $R_0^3\varepsilon^{\frac{1}{3}}\leqslant 1$.
\end{lemma}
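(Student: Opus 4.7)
The plan is to start from Corollary~\ref{cornablas}, which already gives $\Box_g y + \tfrac{2(M-y)}{y^2+z^2} = e(y)$ with
\[
e(y) = \Re\left[ -\Bigl(\tfrac{1}{P^4}-\tfrac{1}{\Ph^4}\Bigr)\tfrac{2P^3}{C_1}\Bigl(\tfrac{C_2}{\Pb}-1\Bigr) + P^2\, \H\cdot \B - 4C_1 P^3\H^2 \Re(V)\right].
\]
So it is enough to show $|e(y)|\lesssim \min(\varepsilon^{1/3}, r^{-2})$ on $\Sigma-E_{r_0}$, which I do term by term. For each summand I produce two estimates: one coming from the bootstrap bound $|P|\le R_0^2$ combined with $\mathbf{A_3}$ (giving the $\varepsilon^{1/3}$ part, after paying some powers of $R_0$ that are absorbed by the hypothesis $R_0^3\varepsilon^{1/3}\le 1$), and one coming from the asymptotic expansions $\mathbf{A_1}$ at spatial infinity (giving the $r^{-2}$ decay).

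For the first summand I factor
\[
\Bigl(\tfrac{1}{P^4}-\tfrac{1}{\Ph^4}\Bigr)P^3 = \Bigl(\tfrac{1}{P}-\tfrac{1}{\Ph}\Bigr)\left(1+\tfrac{P}{\Ph}+\tfrac{P^2}{\Ph^2}+\tfrac{P^3}{\Ph^3}\right),
\]
after which Lemma~\ref{bootsPPh} directly yields $\lesssim \min(\varepsilon^{2/3},r^{-2})$: the bootstrap together with Lemma~\ref{bootsPPh} shows $P/\Ph$ is uniformly bounded on $\Sigma-E_{r_0}$, so the second factor is harmless. This is already stronger than the claimed bound and the constant $C_2/\Pb-1$ is trivially controlled by the asymptotics and the bootstrap.

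The second summand $P^2\H\cdot\B$ is controlled using $\mathbf{A_3}$, which provides $|\B|\lesssim \varepsilon|2\Xib-C_3|$: on the interior of $\Sigma-E_{r_0}$ the remaining factors are bounded by a fixed power of $R_0$, so $R_0^3\varepsilon^{1/3}\le 1$ turns the product into $\lesssim\varepsilon^{1/3}$; at infinity I exploit that in $\B=C_1[\F+(2\Cb_3-4\Xib)\H]$ the leading $O(r^{-2})$ pieces of $\F$ and $(2\Cb_3-4\Xib)\H$ cancel (since the asymptotic data of $(g,H)$ matches Kerr-Newman), leaving faster decay, which multiplied by $|P^2\H|\lesssim 1$ produces the required $r^{-2}$. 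The third summand $P^3\H^2\Re(V)$ is handled similarly after first promoting the bound $|\nabla V|\lesssim |\B|$ from \eqref{potentialV}, together with the gauge $V\to 0$ at infinity and the asymptotic expansions of $\B$ and $T$, into a pointwise estimate on $V$ via Lemma~\ref{ODE}; the bootstrap factor $|P^3\H^2|\lesssim R_0^{N}$ is then again absorbed by $R_0^3\varepsilon^{1/3}\le 1$.

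The main obstacle I expect is precisely the decay of $\B$ at spatial infinity: the raw assumption $\mathbf{A_3}$ only delivers $|\B|\lesssim \varepsilon$ with no rate in $r$, so obtaining the $r^{-2}$ part of the bound for the second and third summands forces one to carry out the explicit asymptotic expansions of $\F$, $\H$ and $\Xi$ provided by $\mathbf{A_1}$ and verify the cancellation of the leading-order coefficients in $\F+(2\Cb_3-4\Xib)\H$; everything else reduces, modulo routine bookkeeping of powers of $R_0$, to combinations of Lemma~\ref{bootsPPh}, Lemma~\ref{ODE}, and the bootstrap bound on $P$.
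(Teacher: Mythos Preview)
Your proposal is correct and follows essentially the same route as the paper: start from Corollary~\ref{cornablas}, bound the three summands of $e(y)$ separately, invoke Lemma~\ref{bootsPPh} for the $(P^{-4}-\Ph^{-4})$ piece, assumption $\mathbf{A_3}$ for the $P^2\H\cdot\B$ piece, and Lemma~\ref{ODE} applied to \eqref{potentialV} for $V$, with the $r^{-2}$ bound read off from the asymptotics of $\mathbf{A_1}$. Your factorization $(P^{-4}-\Ph^{-4})P^3=(P^{-1}-\Ph^{-1})\sum_{j=0}^3 (P/\Ph)^j$ is slightly cleaner than the paper's (it avoids an explicit $R_0$-power here, at the cost of first checking $|P/\Ph-1|\lesssim R_0^2\varepsilon^{2/3}\ll 1$), and you make explicit the leading-order cancellation in $\B=C_1[\F+(2\Cb_3-4\Xib)\H]$ that the paper simply declares ``straightforward to check''; otherwise the arguments coincide.
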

\begin{proof}
According to Corollary \ref{cornablas}, it suffices to control the following algebraic error
\begin{equation*}
 e(y) = \Re[-\underbrace{(\frac{1}{P^4}-\frac{1}{\Ph^4})\frac{2P^3}{C_1}(\frac{C_2}{\Pb}-1)}_{\mathbf{I}_1}+ \underbrace{P^2 \H \cdot \B}_{\mathbf{I}_2} -\underbrace{4C_1P^3 \H^2 \Re(V)]}_{\mathbf{I}_3}.
\end{equation*}
First of all, thanks to the asymptotics near spatial infinity, it is straightforward to check that
\[|e(y)|\lesssim r^{-2}.\]
Secondly, to obtain the bound in terms of $\varepsilon$, we bound the terms in $e(y)$ one by one. For $\mathbf{I}_1$, we have
\begin{align*}
|\mathbf{I}_1|\lesssim \Big|\frac{1}{P}-\frac{1}{\Ph}\Big| \left(\frac{1}{|P|}+\frac{1}{|\Ph|}\right)\left(\frac{1}{|P|^2}+\frac{1}{|\Ph|^2}\right)\cdot|P|^3\big(\frac{1}{|\Pb|}+1\big)
\end{align*}
Since $\Sigma$ is well-defined and for $x\in \Sigma-B_{R^*}$ we have $|P|^{-1}\lesssim 1$ (this also implies that $|\Ph|^{-1}\lesssim 1$ because of \eqref{PminusPh}). Therefore, $|P|^{-1}$ and $|\Ph|^{-1}$ are a priori bounded on $\Sigma$. Therefore, under that bootstrap assumption, we can use Lemma \ref{bootsPPh} to obtain
\begin{align*}
|\mathbf{I}_1|\lesssim \varepsilon^{\frac{2}{3}}R_0^{3}\lesssim \varepsilon^{\frac{1}{3}},
\end{align*}
provided $R_0^3\varepsilon^{\frac{1}{3}}\leqslant 1$.

To bound $\mathbf{I}_2$, we simply use the assumption $\bf{A_3}$. To bound $\mathbf{I}_3$, we need to bound $V$: we repeat the process  by using Lemma \ref{ODE} and \eqref{potentialV}. This completes the proof.
\end{proof}
Finally, we need the following version of Mars type lemma which will be the key for the rest of the chapter
\begin{lemma}\label{bootstrapmars}
 On $\Sigma -E_{r_0}$, we have
\begin{align}
 \begin{cases}
  (\nabla z)^2 &= \frac{1}{y^2 +z^2}(\mathfrak{a}^2-z^2) + O(\varepsilon^{\frac{1}{3}}),\\
  (\nabla y)^2 &= \frac{1}{y^2 +z^2}(y^2 -2M y + q^2 + \mathfrak{a}^2) + O(\varepsilon^{\frac{1}{3}}),
 \end{cases}
\end{align}
provided $R_0^3\varepsilon^{\frac{1}{3}}\leqslant 1$.
\end{lemma}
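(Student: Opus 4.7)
The plan is to derive the first identity from the Mars-type Lemma \ref{algebraicmars} by an integration-from-infinity argument parallel to Lemma \ref{booxy}, and then derive the second identity algebraically from the first by using the expression for $T^2$ in Lemma \ref{lemmaTsquare} together with the relation for $(\nabla y)^2-(\nabla z)^2$ in Corollary \ref{cornablas}.

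For the first identity, I would take the function $U=(y^2+z^2)\nabla^a z\nabla_a z+z^2$ introduced in Lemma \ref{algebraicmars}, whose gradient is the sum of the three algebraic error tensors $E(2)_b$, $E(3)_b$, $E(4)_b$ written out in that lemma. Each of these is a linear combination of factors involving $\B$, $\nabla\B$, $\Q$, and differences of the form $(1-P^2/\Ph^2)$, $(1-P^4/\Ph^4)$, multiplied by quantities that stay a priori bounded on $\Sigma-E_{r_0}$ under the bootstrap (since $|P|^{-1}$ and hence $|\Ph|^{-1}$ are controlled there, and $|P|\lesssim R_0^2$). Exactly as in the proof of Lemma \ref{booxy}, assumption $\bf A_3$ bounds the $\B,\nabla\B,\Q$-factors by $\varepsilon$, while the asymptotic expansions from subsection 3.1.1 bound the same terms by $r^{-2}$ at infinity; and Lemma \ref{bootsPPh} together with the restriction $R_0^3\varepsilon^{1/3}\leqslant 1$ converts the $(1-P^k/\Ph^k)$-differences into an $O(\min(\varepsilon^{1/3},r^{-2}))$ contribution. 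I also need $|V|\lesssim\varepsilon^{1/3}$, obtained from $\nabla_bV=C_1^{-1}\B_{ab}T^a$, the smallness of $\B$ in $\bf A_3$, the normalization $V\to0$ at infinity, and another application of Lemma \ref{ODE}. Combining, $|\nabla_bU|\lesssim\min(\varepsilon^{1/3},r^{-2})$, and since \eqref{Mars_at_infinity} gives $U=\mathfrak{a}^2+O(r^{-1})$ near spatial infinity, Lemma \ref{ODE} yields $U=\mathfrak{a}^2+O(\varepsilon^{1/3})$ on $\Sigma-E_{r_0}$, which is the first identity after dividing by $y^2+z^2$ (which is bounded below by the bootstrap on $|P|^{-1}$).

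For the second identity, I would first substitute $1/P=C_1(y-iz)/(y^2+z^2)$ into the formula $T^2=-|1/P-C_2|^2+C_4-2\Re(V)$ of Lemma \ref{lemmaTsquare}. Using $C_2=M/(q_E-iq_B)=M/\bar C_1$ so that $\bar C_2 C_1=M$, and $C_4=M^2/q^2-1$, the algebra collapses to
\[
T^2=-\frac{y^2-2My+q^2+z^2}{y^2+z^2}-2\Re(V).
\]
Then Corollary \ref{cornablas} gives $(\nabla y)^2-(\nabla z)^2=-T^2+\Re\bigl((1-P^4/\Ph^4)T^2\bigr)$, where the second term is an error of order $\varepsilon^{1/3}$ by Lemma \ref{bootsPPh} (and $T^2$ is bounded on $\Sigma-E_{r_0}$). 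Adding the already-proved expression for $(\nabla z)^2$ finishes the second identity, with the $\Re(V)$ and $\Re((1-P^4/\Ph^4)T^2)$ contributions absorbed into the $O(\varepsilon^{1/3})$ remainder.

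The main obstacle is the bookkeeping in the first step: every error term in $E(2),E(3),E(4)$ must be shown to enjoy both the asymptotic decay needed for Lemma \ref{ODE} and the $\varepsilon$-smallness from $\bf A_3$, uniformly under the bootstrap, and one has to keep the $R_0^3\varepsilon^{1/3}\leqslant1$ constraint honest when converting powers of $|P|$ and $|\Ph|$ into effective error bounds. Once that tracking is done, the integration argument and the closed-form manipulation of $T^2$ go through cleanly.
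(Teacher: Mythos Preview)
Your proposal is correct and follows essentially the same approach as the paper: bound the algebraic errors in $\nabla_b U$ from Lemma \ref{algebraicmars} via assumption $\bf A_3$ and Lemma \ref{bootsPPh}, integrate using Lemma \ref{ODE} and the asymptotic identity \eqref{Mars_at_infinity} to get the first estimate, and then deduce the second from Corollary \ref{cornablas} and the expression for $T^2$. In fact, your write-up is considerably more detailed than the paper's own proof, which only sketches these two steps.
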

\begin{proof}
We start with the first estimate which relies on Lemma \ref{algebraicmars}. We need to bound all the algebraic error
terms in the proof of Lemma \ref{algebraicmars}: first of all, one encounters a term involving $\B_{ab}$ and $\Q_{abcd}$, one uses the assumption $\bf{A_3}$; secondly, when one encounters a term
involving $P- \Ph$ or $\frac{1}{P}- \frac{1}{\Ph}$, one uses Lemma \ref{bootsPPh}. By repeatedly using these two principles, we can derive correct estimates for all the algebraic errors.  Finally, one can use Lemma \ref{ODE} to conclude. In view of Corollary \ref{cornablas}, the second identity follows from the first one.
\end{proof}

\subsubsection{Completion of the bootstrap argument}
We divide $B_{R_0}-E_{r_0}$ into two parts $B_{R_0}-E_{r_0} = W_1 \cup W_2$ where
\begin{align*}
 W_1&=\big\{p \in B_{R_0}-E_{r_0} \big| |P| \leqslant \frac{1}{2} R_0^2 \big\}, \\
 W_2&=\big\{p \in B_{R_0}-E_{r_0} \big| \frac{1}{2} R_0^2 \leqslant |P| \leqslant  R_0^2 \big\}.
\end{align*}
Since the Lemma \ref{bootslemma} automatically holds on $W_1$, it suffices to show that, on $W_2$, $|\nabla_b P| \leqslant C$ where the
constant $C$ is independent of $R_0$ and $r_0$. In view of the fact
\begin{equation*}
 |\frac{1}{P}| \leqslant \frac{2}{R_0^2},
\end{equation*}
we know that $|\frac{1}{P}| $ is small since $R_0$ is chosen to be large.
Recall that
\begin{equation*}
T^2 = g(T,T) = -|\frac{1}{P}-C_2|^2 + |C_2|^2 - 1-2\Re(V).
\end{equation*}
We have already shown that the algebraic error term $\Re(V) \lesssim \varepsilon^{\frac{1}{3}}$, so if $R_0$ is sufficiently large and $\varepsilon$ is sufficiently small, we have
\begin{equation*}
 g(T,T) \in [-\frac{101}{100}, -\frac{99}{100}].
\end{equation*}
In particular, this shows $T^a$ is time-like on $W_2$. Since $\nabla_T y = \nabla_T z =0$, one knows $\nabla y$ and $\nabla z$ are space-like on $W_2$. In view of Lemma \ref{bootstrapmars} and the fact that
$z^2 + y^2 \sim |P|^2 \geqslant \frac{1}{2}R_0^2$ on $W_2$, we derive
\begin{equation*}
 |\nabla_a z \nabla^a z| + |\nabla_a y \nabla^a y| \leqslant C,
\end{equation*}
where $C$ is independent of $R_0$. If we let $e_{(1)},e_{(2)},e_{(3)}$ be orthogonal to $T$ and be an orthonormal space-like basis, we then derive the following derivative bound
\begin{equation}
 |\nabla_{e(i)} P| \leqslant C.
\end{equation}
Thus, we integrate $\nabla P$ along a geodesics connecting $p_{1}$ and $p_{2}$ on $\Sigma$, where $p_{1} \in B_{R_0} - E_{r_0}$  and $p_{2}$ is on the boundary of $B_{R_0}$, we have
\begin{equation*}
 |P(p_{1})| \leqslant |P(p_{2})| + C R_0
\end{equation*}
Since $r(p_{2})$ is close to $R_0$, the asymptotics of $P$ yields
\begin{equation*}
 |P(p)| \leqslant \frac{M}{R_0} + C R_0.
\end{equation*}
Once we choose a large $R_0$, this implies
\begin{equation*}
 |P| \leqslant \frac{1}{2} R_0^2.
\end{equation*}
This completes the proof of Lemma \ref{bootslemma} hence closes the bootstrap argument. In particular, all the estimates in Lemma \ref{booxy} and Lemma \ref{bootstrapmars} are valid on the hypersurface $\Sigma$.

\subsection{The analysis of the function $y$ near horizons}
In this section, we study the behavior of $y$ near the bifurcate sphere. We first show that $y$ is almost a constant on $\S$. We also show that when one moves away from $\S$ towards infinity, in a small
neighborhood of $\S$, the function $y$ is increasing. We shall use the standard double null foliation with respect to the bifurcate sphere $\S$ defined in Section \ref{double_null_foliation}. In particular, we will use the null pair $(L, \Lb)$ and the optical functions $u$ and $\ub$. Recall that $L$ and $\Lb$ are the null generators of the
future and past horizons $\H^+$ and $\H^-$. We begin with the following lemma:
\begin{lemma}
On the bifurcate sphere $\S$, we have
\begin{equation}
 \big|y-(M+\sqrt{M^2-q^2-\mathfrak{a}^2})\big| \lesssim \varepsilon^{\frac{1}{3}}.
\end{equation}
\end{lemma}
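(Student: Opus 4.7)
The main identity to exploit is the second equation of Lemma \ref{bootstrapmars},
\[
(\nabla y)^{2} \;=\; \frac{y^{2}-2My+q^{2}+\mathfrak{a}^{2}}{y^{2}+z^{2}} \;+\; O(\varepsilon^{1/3}).
\]
If I can prove $(\nabla y)^{2}=O(\varepsilon^{1/3})$ on $\S$, then since $y^{2}+z^{2}$ is bounded on $\S$ by the bootstrap of the previous subsection, I will obtain $y^{2}-2My+q^{2}+\mathfrak{a}^{2}=O(\varepsilon^{1/3})$ on $\S$. Solving the resulting quadratic yields $y=M\pm\sqrt{M^{2}-q^{2}-\mathfrak{a}^{2}}+O(\varepsilon^{1/3})$ on $\S$, after which a global connectedness argument selects the outer root.

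The key step is to extract the component structure of $\H$ on $\S$ in the null frame $\{L,\Lb,e_{1},e_{2}\}$. Non-expansion of $\H^{+}$ means $\tr\chi\equiv 0$ all along $\H^{+}$, so $L(\tr\chi)=0$; the Raychaudhuri equation then collapses to $|\chih|^{2}+R_{LL}=0$. For Einstein--Maxwell one has $R_{LL}=T_{LL}=4\sum_{A}|\H(L,e_{A})|^{2}\geq 0$, forcing both terms to vanish, and in particular $\H(L,e_{A})=0$ on $\H^{+}$. By the symmetric argument on $\H^{-}$, $\H(\Lb,e_{A})=0$ there, so both restrictions hold on $\S$. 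Anti-self-duality of $\H$ then pins $\H|_{\S}$ to its two remaining components, $\rho:=\H(L,\Lb)=1/(2C_{1}\Ph^{2})$ and $\H(e_{1},e_{2})=i\rho$; in particular $L,\Lb$ are themselves the principal null directions of $\H$ on $\S$.

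By assumption $\mathbf{A_{2}}$, $T$ is tangent to both $\H^{+}$ and $\H^{-}$, so at each $p\in\S$ one has $T(p)\in T_{p}\H^{+}\cap T_{p}\H^{-}=T_{p}\S$; that is, $T=T^{A}e_{A}$ on $\S$. Substituting this and the components of $\H$ into $\nabla_{b}y=\Re(-2C_{1}P^{2}\H_{ab}T^{a})$, the $L$ and $\Lb$ components of $\nabla y$ vanish identically while the angular ones reduce to $\nabla_{A}y=\Im(P^{2}/\Ph^{2})\,T^{B}\varepsilon_{BA}$, where $\varepsilon_{BA}$ is the area form on $\S$. By Lemma \ref{bootsPPh}, $|P^{2}/\Ph^{2}-1|\lesssim\varepsilon^{2/3}$, so $|\nabla_{A}y|\lesssim\varepsilon^{2/3}$ on $\S$; since $g^{LL}=g^{\Lb\Lb}=0$ in the null frame, $(\nabla y)^{2}=-2(\nabla_{L}y)(\nabla_{\Lb}y)+\sum_{A}(\nabla_{A}y)(\nabla_{A}y)=O(\varepsilon^{4/3})$ on $\S$.

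For the final sign selection, observe that on the domain of outer communication $\E$ the Killing field $T$ is timelike by $\mathbf{A_{1}}$, and $T(y)=0$ forces $\nabla y\perp T$, so $(\nabla y)^{2}\geq 0$ throughout $\E$. Combined with Lemma \ref{bootstrapmars}, $y$ cannot take values in the open interval $(r_{-}+\delta,r_{+}-\delta)$ on $\E\cap\Sigma$ for an appropriate $\delta=O(\varepsilon^{1/3})$. Since $\E\cap\Sigma=\Phi(\mathbb{R}^{3}-B_{1})$ is connected and $y\sim r\to+\infty$ at spatial infinity, continuity rules out the inner branch and forces $y\geq r_{+}-\delta$ on $\E\cap\Sigma$, hence on $\S$ after passage to the boundary; combined with the $\pm$ dichotomy this yields $y=M+\sqrt{M^{2}-q^{2}-\mathfrak{a}^{2}}+O(\varepsilon^{1/3})$ on $\S$. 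The main technical obstacle is the Raychaudhuri/anti-self-duality step isolating the component structure of $\H$ on $\S$; once in hand, the rest is a clean bookkeeping exercise combining the previously established Mars-type and $P$--$\Ph$ approximations.
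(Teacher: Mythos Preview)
Your proof is correct, and in two places it streamlines the paper's argument.

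\medskip

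\textbf{On the smallness of $\nabla y$ on $\S$.} The paper argues that $L,\Lb$ are principal null directions of the Ernst form $\F$ (from non-expansion), and then invokes the smallness of $\B$ to conclude that the principal null directions $l,\lb$ of $\H$ satisfy $|L-l|+|\Lb-\lb|\lesssim\varepsilon$; plugging this approximate alignment into the null decomposition \eqref{nullnablayz} with $g(L,T)=g(\Lb,T)=0$ on $\S$ yields $|\nabla y|\lesssim\varepsilon^{1/3}$. You instead run Raychaudhuri directly on $\H$: non-expansion forces $T_{LL}=4\sum_A|\H(L,e_A)|^2=0$, so $L,\Lb$ are \emph{exact} principal null directions of $\H$ on $\S$. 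This avoids the alignment step through $\B$ and gives the slightly sharper bound $(\nabla y)^2=O(\varepsilon^{4/3})$, though the improvement is immaterial for the conclusion.

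\medskip

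\textbf{On selecting the root $y_+$.} The paper assumes for contradiction that $y\sim y_-$ on $\S$, uses the wave equation \eqref{booooxy} (with $y_-<M$ giving $\Box_g y<0$) to Taylor-expand $y=y_-+u\ub f+O(\varepsilon^{1/3})$ with $f>0$, and concludes that $y$ dips strictly below $y_-$ as one moves into $\Sigma$; since $y\to+\infty$ at infinity there is an interior minimum $p_0$ with $y(p_0)<y_--\delta'$, and at $p_0$ one has $(\nabla y)^2\leq 0$, which through Lemma~\ref{bootstrapmars} forces $y(p_0)\geq y_--O(\varepsilon^{1/3})$, a contradiction. Your argument replaces this by the global observation that on $\E$ the Killing field $T$ is timelike and $T(y)=0$, so $(\nabla y)^2\geq 0$ everywhere on $\E\cap\Sigma$; Lemma~\ref{bootstrapmars} then excludes the interval $(y_-+\delta,y_+-\delta)$ from the range of $y$ on this connected set, and the asymptotics $y\sim r$ pin $y$ to the upper branch, which passes to $\S$ by continuity. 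This is more elementary: it uses neither $\Box_g y$ nor the double-null Taylor expansion near the horizon. The paper's route, however, sets up the machinery ($y=y_++u\ub f$ with the sign of $f$ controlled by $\Box_g y$) that is immediately reused in the subsequent lemma showing $y$ increases as one moves off $\S$ into $\Sigma$.
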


\begin{remark} The quadratic polynomial $y^2 - 2My + q^2 +\mathfrak{a}^2$ plays a crucial role in our analysis in the sequel. We use $y_+$ and $y_-$ to
denote its two roots, i.e.,
\begin{equation*}
 y_\pm = M \pm \sqrt{M^2-q^2-\mathfrak{a}^2}.
\end{equation*}
\end{remark}

\begin{proof}
By virtue of the non-expanding properties for the stationary horizons, it is easy to derive
\begin{equation*}
 \F_{\alpha \beta}L^\beta = \F_{L\Lb} L_\alpha, \qquad  \F_{\alpha \beta}\Lb^\beta = -\F_{L\Lb} \Lb_\alpha
\end{equation*}
where $\F$ is the anti-self-dual complexification of the Ernst 2-form $F_{ab}$.
Recall that $l$ and $\lb$ are also defined as the principal null directions (eigenvectors)of $\H_{ab}$, i.e.,  $\H_{ab} l^b$ is proportional to $l$ and also is proportional to $\lb$. According to the definition of the tensor $\B_{ab}$, the smallness
assumption of $\B_{ab}$ should be understood as an almost alignment
condition between $\F_{ab}$ and $\H_{ab}$. When $\varepsilon$ is small
enough, up to a conformal rescaling and a relabeling, one assume that, on the bifurcate sphere $\S$, we have
\begin{equation}
 |L - l| \lesssim \varepsilon ,\ \ |\Lb - \lb| \lesssim \varepsilon.
\end{equation}
Since $T$ is tangent to $\S$, so $g(L,T) = g(\Lb,T)=0$. In view of the null decompsition of $\nabla y$
\begin{equation*}
\nabla_b y = (l\cdot T)\lb_b -(\lb \cdot T) l_b + E_b(y)
\end{equation*}
and the fact that the algebraic error term $E_b(y)$ can be
controlled by $\varepsilon^{\frac{1}{3}}$, thus on the bifurcate sphere $S_1$ we have
\begin{equation}
 |\nabla y| \lesssim \varepsilon^{\frac{1}{3}}.
\end{equation}
According to Lemma \ref{bootstrapmars}
\begin{equation*}
(\nabla y)^2 = \frac{1}{y^2 +z^2}(y^2 -2M y + q^2 + \mathfrak{a}^2) + O(\varepsilon^{\frac{1}{3}}).
\end{equation*}
So the smallness of $\nabla y$ implies that on $\S$ we have
\begin{equation*}
 |y^2 -2M y + q^2 + \mathfrak{a}^2| \lesssim  O(\varepsilon^{\frac{1}{3}}).
\end{equation*}
Thus on $\S$, we have two possibilities for $y$: it is either close
to the value $y_+$ or close to the value $y_-$.

We now eliminate the second alternative. We start by assuming the opposite that $|y - y_-| \lesssim \varepsilon^{\frac{1}{3}}$ and we will derive a contradiction.
The key is to use the wave equation \eqref{booooxy}
\begin{equation*}
\Box_g y = - \frac{2(M-y)}{y^2+z^2} + O(\varepsilon^{\frac{1}{3}}),
\end{equation*}
near the horizon. Since
$y_- = M - \sqrt{M^2-q^2-\mathfrak{a}^2}< M$, we have
\begin{equation}
 \Box_g y < 0
\end{equation}
provided $\varepsilon$ is small enough. We now fix a point $p \in \S$
and take a local coordinate $(w_1,w_2, w_3 = u, w_4 = \ub)$ around $p$ where $(w_1,w_2)$ is a local chart on $\S$. Since $|y-y_-| + |\nabla y|=O(\varepsilon^{\frac{1}{3}})$ on $\S$, by taking the Taylor expansion of $y$ with respect to the given coordinates system around $p$, we find that there is a smooth function $f$ so that
\[y = y_- + u \ub f + O(\varepsilon^{\frac{1}{3}}).\]
This implies
\[ -f =\Box_g y +O(\varepsilon^{\frac{2}{3}})= - \frac{2(M-y)}{y^2+z^2} + O(\varepsilon^{\frac{1}{3}}).\]
Hence $f >0$ on $\S$ provided $\varepsilon$ is sufficiently small. Since $\S$ is compact, there is a constant $\delta >0$, such that $f> \delta$. The key fact is, once $\varepsilon$ is small enough, the choice of $\delta$ is independent of $\varepsilon$. Since the domain of outer communication can be identified by $u \geqslant 0$ and $\ub \leqslant 0$. Thus on $\Sigma$, we have $u\ub \leqslant 0$. In fact, by shrinking the size of $u$, $\ub$ and $\varepsilon$, we have
\[y \leqslant y_- - \frac{1}{2}\delta |u \ub|.\]
This shows that if one is locally away from $\S$ along $\Sigma$, the function value of $y$ is strictly below the maximum of $y$ on $\S$.  Since $y \rightarrow +\infty$ when one moves towards spatial infinity on $\Sigma$, so
there must exist a point $p_0 \in \Sigma$, such that $y$ attains its minimum on $p_0$. Obviously, there is a universal $\delta'$, such that $y(p_0) < y_- -\delta'$. Now we take an orthonormal frame $e_0,e_1,e_2,e_3$ near $p_0$, such that $e_1,e_2,e_3$ is tangential to $\Sigma$. This forces $e_0$ to be time-like since $\Sigma$ is space-like. Thus $e_1(y) = e_2(y) = e_3(y) =0$ (since $p_0$ is a critical point of $y$ on $\Sigma$). Therefore, at the point $p_0$, we have
\[ \nabla^a y\nabla_a y = -e_0(y)^2 \leqslant 0.\]
We apply Lemma \ref{bootstrapmars} for the second time to obtain
\[ y^2 -2M y + (q^2 + \mathfrak{a}^2) \leqslant O(\varepsilon^{\frac{1}{3}}).\]
Hence,
\[y_--O(\varepsilon^{\frac{2}{3}}) \leqslant y(p_0) \leqslant y_+ +O(\varepsilon^{\frac{2}{3}}).\]
It contradicts the fact that $y(p_0) < y_- -\delta'$. This completes the proof.
\end{proof}

A similar argument shows that around $\S$, the function $y$ is increasing when one moves away from $\S$ towards infinity:
\begin{lemma}
  There is a small constant $\delta_1$ and a sphere $\tilde{\S} \in \Sigma$ which is close to $\S$ and can be deformed from $\S$ along the norm of $\S$ on $\Sigma$, such that for each $p \in \tilde{\S}$,
\begin{equation*}
 y(p) > y_+ + \delta_1 > \max_{p\in \S} y(p).
\end{equation*}
\end{lemma}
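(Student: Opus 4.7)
The strategy is to repeat the Taylor-expansion-plus-wave-equation argument from the preceding lemma, but now exploiting the opposite sign of $y_+-M$. Recall from the preceding lemma that $|y-y_+|+|\nabla y|\lesssim\varepsilon^{1/3}$ on $\mathcal{S}$. Substituting $y=y_++O(\varepsilon^{1/3})$ into the approximate wave equation \eqref{booooxy} gives
\[
\Box_g y \;=\; \frac{2(y_+-M)}{y_+^2+z^2}+O(\varepsilon^{1/3}) \;\geq\; c_0 \;>\; 0
\]
on $\mathcal{S}$, where $c_0$ depends only on $M,q,\mathfrak{a}$ and an upper bound for $|z|$ on the compact set $\mathcal{S}$. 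The crucial input here is the strict non-extremality \eqref{nonextremal}, which forces $y_+>M$.

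Next, I would fix $p\in\mathcal{S}$ and work in the double-null chart $(w_1,w_2,u,\ub)$ of Section~\ref{double_null_foliation} centered at $p$. The bounds $|y-y_+|_{\mathcal{S}}+|\nabla y|_{\mathcal{S}}=O(\varepsilon^{1/3})$ allow me to write
\[
y \;=\; y_+ + u\,\ub\,f(w_1,w_2)+O(\varepsilon^{1/3})
\]
for some smooth function $f$, exactly as in the proof of the previous lemma. Since $g^{u\ub}|_{\mathcal{S}}=-1$ and the remaining second-derivative contributions to $\Box_g y|_{\mathcal{S}}$ (the tangential Laplacian on $\mathcal{S}$ and the Christoffel terms hitting the $O(\varepsilon^{1/3})$-small $\partial y|_{\mathcal{S}}$) are themselves of size $O(\varepsilon^{1/3})$, I read off $-2f=\Box_g y+O(\varepsilon^{1/3})\geq c_0+O(\varepsilon^{1/3})$, which yields $f\leq -c_0/4$ at $p$. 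Compactness of $\mathcal{S}$ then gives a uniform constant $\delta>0$, independent of $\varepsilon$, such that $f\leq -\delta$ on all of $\mathcal{S}$.

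To finish, let $\tilde{\mathcal{S}}$ be the image of $\mathcal{S}$ under the flow of the outward unit normal to $\mathcal{S}$ inside $\Sigma$ for a small but fixed time $s_0$. Because the domain of outer communication corresponds to $u\geq 0,\ \ub\leq 0$ with $u\ub\leq 0$ on $\Sigma$, this outward normal has strictly positive $\partial_u$- and strictly negative $\partial_{\ub}$-components at every point of $\mathcal{S}$, so $|u\ub|\geq c(s_0)>0$ on $\tilde{\mathcal{S}}$ for some constant $c(s_0)$ depending only on $s_0$ and the uniform geometry of the bifurcate horizon. Plugging into the Taylor expansion,
\[
y \;\geq\; y_++\delta\,c(s_0)-O(\varepsilon^{1/3}) \;\geq\; y_++\delta_1
\]
on $\tilde{\mathcal{S}}$ for $\delta_1:=\tfrac12\delta\,c(s_0)$ once $\varepsilon$ is sufficiently small, whereas $\max_{\mathcal{S}}y\leq y_++O(\varepsilon^{1/3})<y_++\delta_1$. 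This produces the desired sphere and constant.

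The main technical point I expect to require care is the clean identification of $f$ with a positive multiple of $-\Box_g y|_{\mathcal{S}}$ modulo $O(\varepsilon^{1/3})$ errors, namely verifying that the tangential and cross-term second derivatives of $y$ on $\mathcal{S}$, as well as the first-derivative Christoffel corrections, remain of size $O(\varepsilon^{1/3})$ and do not spoil the leading order. This verification is entirely parallel to the corresponding step in the proof of the previous lemma and ultimately relies on the smallness of $|\nabla y|_{\mathcal{S}}$ together with the smoothness of the double null foliation at $\mathcal{S}$.
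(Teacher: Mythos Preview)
Your proposal is correct and follows essentially the same approach as the paper: Taylor-expand $y$ near $\mathcal{S}$ in the double-null coordinates, use the approximate wave equation \eqref{booooxy} together with $y_+>M$ to fix the sign of the coefficient $f$, invoke compactness of $\mathcal{S}$ for uniformity, and then pass to a nearby sphere in $\Sigma$ where $|u\ub|$ is bounded below. The only cosmetic difference is that the paper realises $\tilde{\mathcal{S}}$ as a level set $\{u\ub=-\varepsilon_1\}$ with $\varepsilon\ll\varepsilon_1\ll 1$, whereas you push $\mathcal{S}$ along its outward normal in $\Sigma$; both constructions produce a sphere on which $|u\ub|$ has a uniform positive lower bound, so the conclusions coincide.
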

\begin{proof}
As in the proof of the previous lemma, we assume $y= y_+ + u\ub f+O(\varepsilon^{\frac{1}{3}})$ around the bifurcate sphere $\S$. Similarly, since on $\S$ we have $y=y_+ > M$, we deduce that
\[
 f = -\Box y + O(\varepsilon^{\frac{1}{3}}) = -\frac{2(y-M)}{x^2+y^2}<0.
\]
Similarly, a compactness consideration shows there is a constant $\delta > 0$ which is independent of $\varepsilon$, such that on $\S$,
\begin{equation*}
 f < - \delta.
\end{equation*}
Once again, near $\S$, $u\ub \leqslant 0$ and $y$ can be written as
\begin{equation*}
 y = y_+ + u\ub f + O(\varepsilon^{\frac{2}{3}}) \geqslant y+ \delta u \ub + O(\varepsilon^{\frac{2}{3}})
\end{equation*}
We consider the surface define by $\{x \in \Sigma| u(x)\ub(x) =-\varepsilon_1 \}$. It is clear that when $\varepsilon_1$ is sufficiently small but relatively much greater than $\varepsilon$, such a surface can be chosen as $\tilde{\S}$.
\end{proof}

\subsection{$T$-Conditional Pseudo-Convexity for $y$}
In this subsection, we prove a $T$-conditional pseudo-convexity	property (namely the inequality \eqref{T_P_convexity}) for the function $y$ away from the bifurcate horzions $\H^+ \cap \H^-$. This allows us to use Ionescu-Klainerman's Carleman estimates to conclude uniqueness for certain $T$-symmetric wave equations. We refer the readers to \cite{Ionescu_Klainerman_Kerr} and \cite{Alexakis_Ionescu_Klainerman} for details.
\begin{lemma}\label{T_pseudo_convexity}
 Assume $p \in \Sigma$ is a given point such that $y(p) > y_+ + \varepsilon^2$ where $\varepsilon$ is a small number as before. Then, there are two positive constants $c_1$ and $\mu$, so that $c_1$ is independent of $p$, $\mu$ may depend on $p$ and
\begin{equation}\label{T_P_convexity}
 X^a X^b \big(\mu g_{ab}(p)- \nabla_a \nabla_b y(p)\big) \geqslant c_1 |X|^2,
\end{equation}
for all vectors $X\in T_p \M$ with the property that
\begin{equation}
|X^a T_a (p)| + |X^a\nabla_a y(p)| \leqslant c_1|X|.
\end{equation}
\end{lemma}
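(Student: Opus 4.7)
The plan is to adapt the Kerr strategy of Alexakis--Ionescu--Klainerman \cite{Alexakis_Ionescu_Klainerman_Perturbation}: compute the Hessian $\nabla_a \nabla_b y$ modulo algebraic error terms controlled by Section 3, then verify the pseudo-convexity inequality on the two-plane of vectors nearly orthogonal to both $T$ and $\nabla y$. The hypothesis $y(p) > y_+ + \varepsilon^2$ is used to bound the discriminant $y^2 - 2My + q^2 + \mathfrak{a}^2$ away from zero by a constant \emph{independent} of $\varepsilon$, so the perturbative estimates of Section 3 dominate the deviation from the exact Kerr--Newman geometry.

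First I would extract an expression for $\nabla_a \nabla_b y$ modulo $O(\varepsilon^{1/3})$ by taking the real part of the identity for $\nabla_a \nabla_b (C_1 P)$ derived in the proof of Lemma \ref{algebraicmars}, discarding every term carrying a factor of $\B$, $\nabla \B$, $\Q$, or $\tfrac{1}{P}-\tfrac{1}{\Ph}$ (all of order $\varepsilon^{1/3}$ by $\mathbf{A_3}$ and Lemma \ref{bootsPPh}), and expanding via the null-frame identities \eqref{nullnablayz}. The result is a tensorial formula of the form
\begin{equation*}
\nabla_a \nabla_b y = \frac{y-M}{y^2+z^2}\,g_{ab} + \frac{A_2(y,z)}{y^2+z^2}\nabla_a z \nabla_b z + (\text{terms with a free factor of } \nabla y \text{ or } T) + O(\varepsilon^{1/3}),
\end{equation*}
where the coefficients match their Kerr--Newman values up to $O(\varepsilon^{1/3})$; only their boundedness on the region $y(p) > y_+ + \varepsilon^2$ will be used. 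The trace of this identity is consistent with the approximate wave equation \eqref{booooxy}.

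Now fix $p$ with $y(p) > y_+ + \varepsilon^2$ and consider $X\in T_p \M$ satisfying $|X^a T_a| + |X^a \nabla_a y| \leq c_1 |X|$. Every term in the Hessian containing $T$ or $\nabla y$ as a free tensorial factor contracted twice with $X$ contributes at most $O(c_1 |X|^2)$, so
\begin{equation*}
X^a X^b \nabla_a \nabla_b y = \frac{y-M}{y^2+z^2}|X|^2 + \frac{A_2(y,z)}{y^2+z^2}(X\cdot\nabla z)^2 + O(c_1 |X|^2) + O(\varepsilon^{1/3}|X|^2).
\end{equation*}
Setting $\mu = \tfrac{y(p)-M}{y(p)^2+z(p)^2} + C$ for a sufficiently large constant $C$ depending only on the ambient Kerr--Newman parameters (so $\mu$ depends on $p$ but $C$ does not), the quadratic form $X^a X^b(\mu g_{ab} - \nabla_a \nabla_b y)$ reduces to $C|X|^2 - \tfrac{A_2}{y^2+z^2}(X\cdot\nabla z)^2 + O(c_1 |X|^2 + \varepsilon^{1/3}|X|^2)$, which is bounded below by $\tfrac{C}{2}|X|^2$ once $c_1$ and $\varepsilon$ are chosen small. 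The uniform bounds on $A_2$, $|\nabla z|^2$, and $(\nabla y)^2$ that this requires are all provided by Lemma \ref{bootstrapmars}.

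The main obstacle is the \emph{rotational} direction: vectors $X$ orthogonal (to leading order) to each of $T$, $\nabla y$, and $\nabla z$. For such $X$ the $(X\cdot \nabla z)^2$ term vanishes and the positivity of $X^a X^b(\mu g_{ab}-\nabla_a\nabla_b y)$ must be driven entirely by the $\mu g_{ab}$ contribution against the $\tfrac{y-M}{y^2+z^2}g_{ab}$ trace. To ensure that this rotational direction is genuinely separated from $\nabla z$ uniformly in $p$, I would appeal to Lemma \ref{algebraicmars}: the identity $(y^2+z^2)|\nabla z|^2 + z^2 = \mathfrak{a}^2 + O(\varepsilon^{1/3})$ keeps $|\nabla z|$ of a definite positive size throughout the region considered, so the four-dimensional splitting of $T_p\M$ into $(T, \nabla y, \nabla z, \text{rotational})$ is uniformly non-degenerate. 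Combined with the uniform lower bound on $y^2 - 2My + q^2 +\mathfrak{a}^2$ coming from $y(p) > y_+ + \varepsilon^2$, this produces $c_1$ independent of $p$ and closes \eqref{T_P_convexity}.
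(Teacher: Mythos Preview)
Your argument contains a genuine gap stemming from the Lorentzian signature of $g$. When you write ``the quadratic form $X^a X^b(\mu g_{ab} - \nabla_a \nabla_b y)$ reduces to $C|X|^2 - \ldots$'' after setting $\mu = \tfrac{y-M}{y^2+z^2} + C$, you are implicitly replacing $g(X,X)$ by the Euclidean norm $|X|^2$. But $g(X,X)$ is indefinite: in the null frame $(e_1,e_2,\lb,l)$ adapted to $\H$ one has $g(X,X) = X_1^2 + X_2^2 - 2X_3X_4$, so taking $C$ large drives the contribution $-2C\,X_3X_4$ very negative whenever $X_3X_4>0$ (which is exactly what the constraints force). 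Thus ``choose $\mu$ large'' cannot work, and your claimed reduction to $\tfrac{C}{2}|X|^2$ is not valid. A secondary issue: your appeal to Lemma \ref{algebraicmars} to keep $|\nabla z|$ of ``definite positive size'' fails on the rotation axis, where $z^2 = \mathfrak{a}^2$ and $\nabla z = O(\varepsilon^{1/3})$, so the frame $(T,\nabla y,\nabla z,\text{rotational})$ degenerates there.

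The paper's proof addresses the signature problem by working entirely in the null frame and keeping track of the $(X_1^2+X_2^2)$- and $2X_3X_4$-coefficients separately. The constraints $|X\cdot T| + |X\cdot\nabla y| \leq c_1|X|$ force $2X_3X_4 \geq c_4(X_3^2+X_4^2)$ up to $O(c_1+\varepsilon^{1/3})|X|^2$, because $\nabla y \approx (l\cdot T)\lb - (\lb\cdot T)l$ and $(\nabla y)^2 \approx 2(l\cdot T)(\lb\cdot T)>0$ at $p$. The Hessian computation then yields
\[
E(X,X) \geq \Big(\mu + \tfrac{yT^2}{y^2+z^2} - \tfrac{y(\nabla z)^2}{y^2+z^2}\Big)(X_1^2+X_2^2) - \Big(\mu + \tfrac{yT^2}{y^2+z^2} + B\Big)2X_3X_4 + O(c_1^2+\varepsilon^{1/3})|X|^2,
\]
with $B = -\tfrac{(My-q^2)y - Mz^2}{(y^2+z^2)^2}$. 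For \emph{both} coefficients to be positive one needs a $\mu$ in the interval $\big(\tfrac{y(\nabla z)^2 - yT^2}{y^2+z^2},\ -\tfrac{yT^2}{y^2+z^2} - B\big)$; this interval is nonempty precisely because of the algebraic inequality $-B > \tfrac{y}{y^2+z^2}(\nabla z)^2$, which is checked directly using $y>M$ and non-extremality $q^2+\mathfrak{a}^2<M^2$. This is the missing idea: $\mu$ must be chosen in a \emph{bounded} window determined by the explicit Kerr--Newman coefficients, not taken large.
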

\begin{proof} It is straightforward to check the estimates near spatial infinity. Therefore, it suffices to assume that $y$ and $z$ are bounded by a universal constant which is independent of $\varepsilon$. We also assume that $|X^a T_a (p)|+ |X^a\nabla_a y(p)| \leqslant c_1|X|$ where the constant $c_2$ will be determined later on. The key step is  the computation of the Hessian of $y$ in $X$ direction. In view of the definition of $y$, it suffices to compute $X^a X^b \nabla_a \nabla_b (C_1 P)$ and then take the real part:
\begin{align*}
 X^a X^b \nabla_a \nabla_b (C_1 P) & = \underbrace{-\frac{C_1}{P}( \nabla_X P)^2}_{I_1}  \underbrace{- \frac{T^2}{C_1 P}g(X,X)}_{I_2} \underbrace{-\frac{2C_1 P^2}{2\Xib - \Cb_3}T^2 \H_{ae}\Hb_b{}^e X^a X^b}_{I_3}\\
&\quad+ \underbrace{\frac{C_1}{(4\Xib - 2\Cb_3)\Pb^2}(2\nabla_X P \nabla_X \Pb - (\nabla P \cdot \nabla \Pb) g(X,X))}_{I_4}\\
&\quad \underbrace{-2C_1 P^2 \H_{cb}\big[\H_a{}^c(2\Xib -\Cb_3) + \Hb_a{}^c(2\Xi - C_3)\big]X^a X^b}_{I_5}.
\end{align*}
We turn to the bounds on the $I_i$'s. In the rest of proof, we shall use null frames to decompose vectors and tensors.

For $I_1$, since $|X^a\nabla_a y(p)| \leqslant c_1|X|$, we have
\begin{equation*}
 \Re(I_1) = \frac{y}{y^2 + z^2}(Xz)^2+O\left((c_1)^2\right)|X|^2.
\end{equation*}

For $I_2$, we have
\begin{equation*}
 \Re(I_2) = -\frac{y T^2}{y^2 + z^2}(X_1^2 +X_2^2)+\frac{y T^2}{y^2 + z^2}2X_3X_4.
\end{equation*}

For $I_3$, we have
\begin{align*}
 I_3 &= -\frac{2 C_1 P^2}{2\Xib -\Cb_3} \frac{T^2}{2C_1 P^2}(\lb_a l_e -l_a \lb_e + i\varepsilon_{ae\lb l})\frac{1}{2\Cb_1 \Pb^2}(\lb_b l^e -l_b \lb^e - i\varepsilon_{a}{}^{e}{}_{\lb l})X^aX^b\\
&=-\frac{T^2}{(2\Xib - \Cb_3)2\Cb_1 \Pb^2}(2\lb_a l_b + 2\lb_b l_a + g_{ab})X^aX^b\\
&=-\frac{T^2}{2}\frac{1}{(1-\Cb_3 \Pb)\Cb_1 \Pb}(X_1^2 + X_2^2 + 2X_3X_4)
\end{align*}
If we define
\begin{equation*}
 A = \Re(\frac{1}{(1-\Cb_3 \Pb)\Cb_1 \Pb}) = -\frac{q^2\big[(My-q^2)y-Mz^2\big]}{\big((My-q^2)^2 + M^2 z^2\big)(y^2 +z^2)},
\end{equation*}
we obtain
\begin{equation*}
 \Re(I_3) = -\frac{1}{2}A T^2 (X_1^2 + X_2^2)-\frac{1}{2}A T^2 2X_3 X_4.
\end{equation*}

For $I_4$, we have
\begin{align*}
 I_4 &= \frac{C_1 \Cb_1}{2(2\Xib-\Cb_3)\Pb^2 \Cb_1}\big(2\nabla_X P \nabla_X \Pb-(\nabla P \cdot \nabla \Pb)g(X,X)\big)\\
&=\frac{1}{2(1-\Cb_3 \Pb)\Cb_1\Pb}\big(2X(z)^2+O\left((c_1)^2\right)|X|^2-((\nabla y)^2 + (\nabla z)^2)(-2 X_3 X_4 +X_1^2 +X_2^2)\big),
\end{align*}
i.e.,
\begin{equation*}
 \Re(I_4) = A X(z)^2 - \frac{1}{2}A\big[(\nabla y)^2 + (\nabla z)^2\big](X_1^2 +X_2^2) + \frac{1}{2}A\big[(\nabla y)^2 + (\nabla z)^2\big]2X_3X_4+A\cdot O\left((c_1)^2\right)|X|^2.
\end{equation*}

For $I_5$, we have
\begin{align*}
 I_5 &= 2C_1 P^2 \H_{bc}\big[\H_{a}{}^c (2\Xib -\Cb_3) + \Hb_{a}{}^c (2\Xi -C_3)\big]X^aX^b\\
&= 2C_1 P^2 \frac{1}{4}\H \cdot \H g(X,X)(2\Xib - \Cb_3) + 2C_1 P^2 (2\Xi-C_3)X^a X^b\frac{1}{2C_1 P^2}\frac{1}{2\Cb_1\Pb^2}(2\lb_a l_b +2\lb_b l_a + g_{ab})\\
&= -\frac{2\Xib - \Cb_3}{2C_1 P^2}(-2 X_3 X_4 +X_1^2 +X_2^2) + \frac{2\Xi - C_3}{2\Cb_1 \Pb^2}(2 X_3 X_4 +X_1^2 +X_2^2)
\end{align*}
Let
\begin{equation*}
 B = \Re(\frac{2\Xib - \Cb_3}{C_1 P^2}) = -\frac{(My-q^2)y-Mz^2}{(y^2+z^2)^2}.
\end{equation*}
Therefore,
\begin{equation*}
 \Re(I_5) = 2 B  X_3 X_4.
\end{equation*}
Combining all the computations, we can derive
\begin{align*}
 E(X,X) &= \mu g(X,X) - X^a X^b \nabla_a \nabla_b y\\
&= -\left(\frac{y}{y^2 +z^2} + A\right)X(z)^2  + \left(\mu + \frac{y T^2}{y^2 +z^2} + A (\nabla z)^2\right)(X_1^2 +X_2^2)  \\
&\ \ \ - \left(\mu + \frac{y T^2}{y^2 +z^2}
+ A (\nabla y)^2 +B\right)2X_3 X_4\\
&= \underbrace{\left[-\frac{y}{y^2+z^2}X(z)^2 + \left(\mu + \frac{y T^2}{y^2 +z^2}\right)(X_1^2 +X_2^2)-\left(\mu + \frac{y T^2}{y^2 +z^2}+B\right)2X_3 X_4\right]}_{I}\\
&\quad + \underbrace{A\left[(\nabla z)^2(X_1^2 +X_2^2)-(\nabla y)^2 2X_3 X_4 -X(z)^2\right]}_{II}+(1+A)O\left((c_1)^2\right)|X|^2.
\end{align*}
\begin{lemma}
 There second term $II$ can be bounded as follows:
 \[|II|=A\cdot O\left((c_1)^2\right)|X|^2+O(\varepsilon^{\frac{2}{3}})\cdot A.\]
\end{lemma}
\begin{proof}
Recall that we have
\[ \nabla_b y =g(T,l)\lb_b - g(T,\lb)l_b+O(\varepsilon^{\frac{1}{3}}), \ \ \nabla_b z = \varepsilon_{bacd}T^a \lb^c l^d +O(\varepsilon^{\frac{1}{3}}),\]
and
\[T_c = -g(T,l)\lb_c -g(T,\lb)l_c - \varepsilon_{cbad}\lb^b l^d\nabla^a z.\]
On the other hand, the condition $|g(\nabla y,X)| \leqslant c_1|X|$ implies
\begin{equation}\label{gtlbX3}
 g(T,\lb)X_3 =g(T,l)X_4+O(c_1)|X|+O(\varepsilon^{\frac{1}{3}})|X|.
\end{equation}
The condition $|g(T,X)| \leqslant c_1|X|$ implies
\begin{equation*}
 g(T,\lb)X_3 + g(T,l)X_4 = X_1 \nabla_2 z-X_2 \nabla_1 z+O(c_1)|X|+O(\varepsilon^{\frac{1}{3}})|X|.
\end{equation*}
Therefore, we have
\begin{align*}
&\ \ \ \ (\nabla z)^2(X_1^2 +X_2^2)-(\nabla y)^2 2X_3 X_4 -X(z)^2\\
&=((\nabla_1 z)^2 + (\nabla_2 z)^2)(X_1^2 +X_2^2) - (X_1 \nabla_1 z +X_2 \nabla_2 z)^2 -(\nabla y)^2 2X_3 X_4\\
&=(X_1 \nabla_2 z -X_2 \nabla_1 z)^2 -(2g(T,\lb)g(T,l) 2X_3 X_4)+O\left((c_1)^2\right)|X|^2+O(\varepsilon^{\frac{2}{3}})\\
&=(g(T,\lb)X_3 + g(T,l)X_4)^2 - 4g(T,\lb)g(T,l) X_3 X_4+O\left((c_1)^2\right)|X|^2+O(\varepsilon^{\frac{2}{3}})|X|^2\\
&=O\left((c_1)^2\right)|X|^2+O(\varepsilon^{\frac{2}{3}})|X|^2.
\end{align*}
Therefore, we obtain
\[|II|=A\cdot O\left((c_1)^2\right)|X|^2+A\cdot O(\varepsilon^{\frac{2}{3}})|X|^2.\]
This completes the proof of the lemma.
\end{proof}

Therefore, we have
\begin{align*}
 E(X,X)
&= \underbrace{\left[-\frac{y}{y^2+z^2}X(z)^2 + \left(\mu + \frac{y T^2}{y^2 +z^2}\right)(X_1^2 +X_2^2)-\left(\mu + \frac{y T^2}{y^2 +z^2}+B\right)2X_3 X_4\right]}_{I}\\
&\quad +(1+A)O\left((c_1)^2\right)|X|^2+A\cdot O(\varepsilon^{\frac{2}{3}})|X|^2.
\end{align*}
In view of the Cauchy-Schwarz inequality, i.e., $|X(z)|^2 \leqslant (X_1^2+X_2^2)(\nabla z)^2$, we have
\begin{align*}
 I &\geqslant -\frac{y}{y^2+z^2}(\nabla z)^2(X_1^2 +X_2^2) + (\mu + \frac{y T^2}{y^2 +z^2})(X_1^2 +X_2^2)-(\mu + \frac{y T^2}{y^2 +z^2}+B)2X_3 X_4\\
&=  (\mu + \frac{y T^2}{y^2 +z^2}-\frac{y}{y^2+z^2}(\nabla z)^2)(X_1^2 +X_2^2)-(\mu + \frac{y T^2}{y^2 +z^2}+B)2X_3 X_4
\end{align*}

The first ingredient to bound $I$ is the followin ienquality:
\begin{equation}\label{zz1}
 -B >\frac{y}{y^2+z^2}(\nabla z)^2.
\end{equation}
This follows directly from the expression of  $B$,  the formula $\nabla z$ in \eqref{bootstrapmars}, the fact that $y > y_+ > M$ as well as non-extremality $\a^2 + q^2 <M^2$.

The second ingredient is the following claim: There is a constant $c_3 >0$, such that
\begin{equation}\label{zz2}
 2 X_3 X_4 \geqslant c_3 (X_3^2 +X_4^2)+O\left((c_1)^2\right)|X|^2+O(\varepsilon^{\frac{1}{3}})|X|^2.
\end{equation}
Once we show this inequality, we complete the proof. To see this, we use \eqref{gtlbX3} to derive
\[ 2 X_3 X_4 = \frac{g(T,\lb)}{g(T,l)}X_3^2 + \frac{g(T,l)}{g(T,\lb)}X_4^2+O(c_1)|X|^2+O(\varepsilon^{\frac{1}{3}})|X|^2.\]
According to Mars type lemma, we now that
\begin{align*}
2g(T,\lb)g(T,l) &= (\nabla y)^2+O(\varepsilon^{\frac{2}{3}}) \\
 &=\frac{1}{y^2+z^2}(y^2 -2My + q^2 +\a^2)+O(\varepsilon^\frac{1}{3})>0.
\end{align*}
In the above inequality, we used the fact that $y$ and $z$ are bounded and $\varepsilon$ is sufficiently small. Hence, here is a constant $c_3$ (depending on the particular point $p$) such that $\frac{g(T,\lb)}{g(T,l)}>c_3 $ and $\frac{g(T,l)}{g(T,\lb)}>c_3$. This proves \eqref{zz2}. If $c_1$ and $\varepsilon$ are sufficiently small, this indeed implies
\begin{equation}\label{zz3}
 2 X_3 X_4 \geqslant c_4 (X_3^2 +X_4^2),
\end{equation}
where $c_4>0$.

We then use \eqref{zz1} and \eqref{zz3} to bound $I$ and this leads to
\begin{align*}
 I \geqslant \geqslant c_2 (X_1^2 +X_2^2) + 2 c_2 X_3 X_4.
\end{align*}
Hence, if $c_1$ and $\varepsilon$ are sufficiently small, we can choose $\mu$ to obtain
\begin{align*}
 E(X,X) &\geqslant I(1+A)O\left((c_1)^2\right)|X|^2+A\cdot O(\varepsilon^{\frac{2}{3}})|X|^2\\
 &\geqslant c_5 (X_1^2 +X_2^2) + 2 c_5 X_3 X_4.
\end{align*}
where $c_5>0$. This completes the proof of the $T$-conditional pseudo-convexity.
\end{proof}

\section{The extension of Hawking vector field}
We now use the $T$-pseudo convexity  to construct a second Killing vector field $K$, namely the Hawking vector field, on the entire domain of outer communication $\E$. The vector field is constructed inductively along the level surfaces of the function $y$. To start the induction argument, we prove a slightly modified version of Theorem \ref{local_rigidity_1} and Theorem \ref{local_rigidity_2}:
\begin{proposition}\label{local}
There exists a neighborhood $\O$ of the bifurcate sphere $\S$
and a non-trivial Killing vector field $K$ in $\O$ which is tangent
to the null generators of $\H^+$ and $\H^-$ such that $\L_K H =0$
and $K(y)=0$. In addition, there is a constant $\lambda_0 \in
\mathbb{R}$, such that the vector field $Z = T +\lambda_0 K$ has
complete periodic orbits in $\O$.
\end{proposition}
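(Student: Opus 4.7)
The plan is to combine the two local rigidity theorems already quoted in the paper with a short Lie-derivative computation that yields the additional property $K(y)=0$.

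First I would invoke Theorem \ref{local_rigidity_1}: assumption $\bf A_2$ guarantees that $(\mathcal{S},\H^+,\H^-)$ is a local regular bifurcate non-expanding horizon in an Einstein-Maxwell space-time, so in some neighborhood $\mathcal{O}$ of $\mathcal{S}$ one obtains a non-trivial Killing vector field $K$ tangent to the null generators of $\H^\pm$ with $\mathcal{L}_K H = 0$. At points of $\mathcal{S}$, the vector $K$ lies in $T\H^+\cap T\H^-$, which equals $T\mathcal{S}$, and is also null; this forces $K|_{\mathcal{S}}=0$. Next I invoke Theorem \ref{local_rigidity_2} with the stationary Killing field $T$, which by $\bf A_2$ is tangent to $\H^+\cup\H^-$ and does not vanish identically on $\mathcal{S}$: this produces a constant $\lambda_0 \in \mathbb{R}$ such that $Z := T+\lambda_0 K$ is a rotational Killing field on a possibly smaller neighborhood $\mathcal{O}' \subset \mathcal{O}$, whose orbits are closed and complete, with $[Z,T]=0$ and $\mathcal{L}_Z H=0$. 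In particular $[K,T]=0$.

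It remains to show $K(y)=0$. I would differentiate the defining relation $\nabla_b \Xi = \H_{ab}T^a$ by $\mathcal{L}_K$. The left-hand side gives $\mathcal{L}_K(\nabla_b \Xi) = \nabla_b(K\Xi)$ (for a scalar this is just the commutativity of $\mathcal{L}_K$ and $d$). On the right-hand side,
\begin{equation*}
\mathcal{L}_K(\H_{ab}T^a) = (\mathcal{L}_K \H)_{ab} T^a + \H_{ab}[K,T]^a = 0,
\end{equation*}
since $\mathcal{L}_K \H = 0$ — which follows from $\mathcal{L}_K H=0$ together with $\mathcal{L}_K g = 0$ (hence $\mathcal{L}_K$ commutes with the Hodge star) — and $[K,T]=0$. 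Thus $\nabla(K\Xi)\equiv 0$ on the connected set $\mathcal{O}'$, so $K\Xi$ is a constant; evaluating at any point of $\mathcal{S}$, where $K=0$, forces this constant to vanish. Hence $K(\Xi)=0$, which gives $K(P)=0$, and taking the real part of $C_1 P$ yields $K(y)=0$.

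The main obstacle, conceptually, is ensuring that the three ingredients $K|_{\mathcal{S}}=0$, $[K,T]=0$, and $\mathcal{L}_K \H = 0$ all hold simultaneously on one common neighborhood of $\mathcal{S}$; these follow respectively from the null structure on the bifurcate sphere, from the commutator relation built into Theorem \ref{local_rigidity_2}, and from the standard compatibility of the Lie derivative of a Killing field with the Hodge star on a Lorentzian manifold. Once these are in place, the proposition is a routine consequence of the two quoted local rigidity theorems, since the completeness of the periodic orbits of $Z$ is already part of Theorem \ref{local_rigidity_2}, the closed orbit having the finite period $T_0$ equal to the rotation period of $T$ on $\mathcal{S}$.
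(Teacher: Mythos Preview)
Your proof is correct and reaches the same target identity $K(\Xi)=0$ as the paper, but by a somewhat different route. Both arguments first invoke Theorems~\ref{local_rigidity_1} and~\ref{local_rigidity_2} to obtain $K$, $Z$, $\L_K H=0$ and $[K,T]=0$, and then establish $K(\Xi)=0$. The paper notes the pointwise identity $K(\Xi)=\H(T,K)$, builds Frobenius coordinates $(\partial_0,\partial_1)=(T,K)$ (assuming $T,K\neq 0$), and uses $d\H=0$ together with $\L_T\H=\L_K\H=0$ to conclude that $\H(T,K)$ is constant; the constant is then evaluated on $\H^+$, where $K\parallel L$ and the non-expansion/positive-energy condition forces $\H(L,\cdot)=0$. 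You instead apply $\L_K$ directly to the defining relation $\nabla_b\Xi=\H_{ab}T^a$, obtaining $\nabla(K\Xi)=0$ from $\L_K\H=0$ and $[K,T]=0$, and evaluate the constant on $\S$ via the elementary observation $K|_{\S}=0$. Your version sidesteps both the Frobenius construction (and its nondegeneracy caveat) and the horizon argument; the paper's version has the mild advantage that it makes explicit the geometric content $K(\Xi)=\H(T,K)$. Either way the remaining passage from $K(\Xi)=0$ to $K(y)=0$ is immediate.
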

\begin{proof}
 We only need to show $K(y)=0$ which does not appear in Theorem \ref{local_rigidity_1}. In fact, we will show that $K(\Xi) =0$ which, in view of the definition of $y$, implies $K(y)=0$. The proof is a routine calculation. We first notice that
 \[K(\Xi) = K^b \Xi_b = \H_{ab}K^b T^a = \H(T,K).\]
 We can assume $T \neq 0$ and
$K\neq 0$ around a point (otherwise the statement holds trivially on this point). Since $[T,K]=0$, according to Frobenius theorem, one can construct a local coordinate $(x^0, x^1,x^2,x^3)$, such that $\partial_0 =T$ and $\partial_1=K$. In view of the Maxwell equations, i.e., $d\H =0$, we have
\begin{align*}
0&\ \ \ =d\H(T,K,\partial_i)=d\H(\partial_0,\partial_1,\partial_i)\\
&\stackrel{[\partial_i,\partial_j]=0}{=} T(\H(K,\partial_i))+K(\H(\partial_i, T))+\partial_i \H(T,K).
\end{align*}
Since $\L_T \H = \L_K \H =0$, this shows $\partial_i (\H(T,K))=0$, i.e.,$\H(T,K)$ is a constant on $\M$.

On the other hand, on the horizon $\H^+$,  we know that $K$ is proportional to $L$.  Thanks to the
positive energy condition and  non-expansion condition of $\H^+$, we know that $\H (L,\cdot)$ vanishes on $\H^+$. This completes the proof.
\end{proof}
We now define the connected open space-time regions parametrized by $R \in \mathbb{R}^+$:
\begin{equation}
 \E_{R} =\{ p \in \E | y(p) < R\}.
\end{equation}
\begin{center}
\includegraphics[width=2.8in]{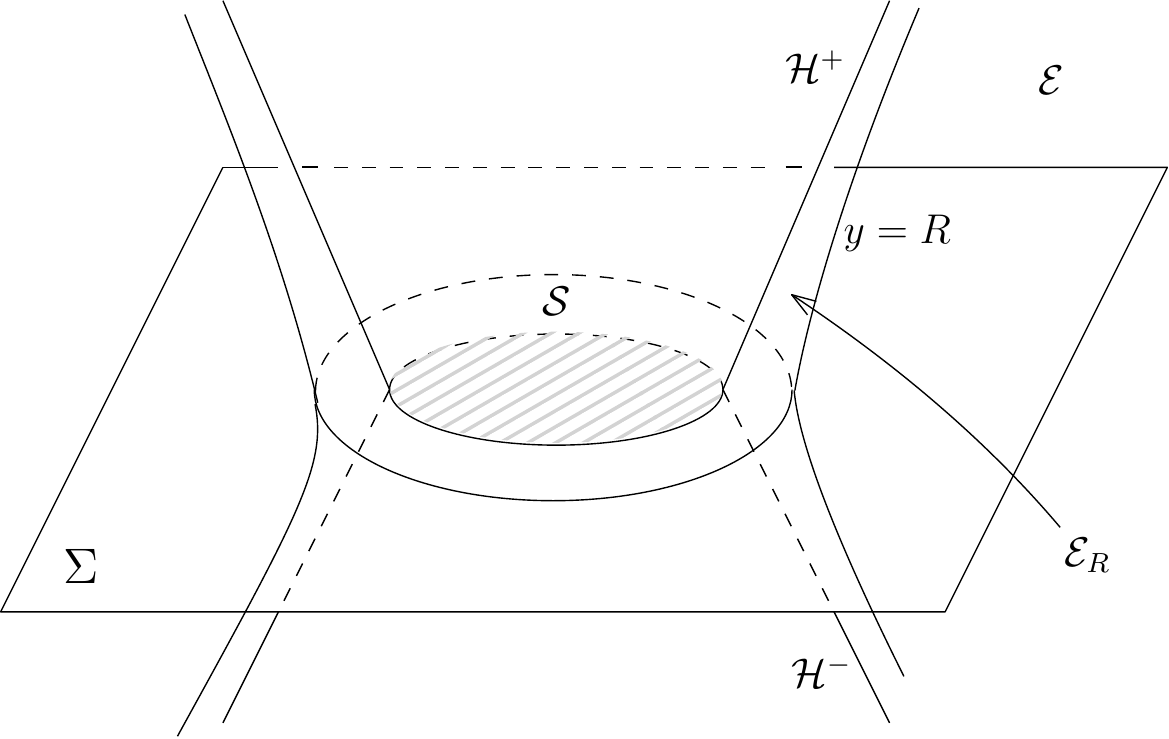}
\end{center}
Since $y$ is globally defined, we have $\displaystyle\bigcup_{R \geqslant y_+} \E_R = \E$. Near the bifurcate sphere $\S$,	 we also know that $T$ does not vanish in $\E_{y_+ + \varepsilon^2}$ provided $\varepsilon$ is sufficiently small. Therefore, we can solve $[T,K] = 0$ to construct a $K$ on $\E_{y_+ + \varepsilon^2}$ by using the initial data
in $\O$ (this is constructed in Proposition \ref{local}). It is easy to see that all the properties of $K$ which hold on $\O$ are perserved. This is the first step of induction for the construction of $K$. We now make the following claim which gives the construction of $K$ on the entire $\E$.
\begin{lemma}\label{ind}
 For all $R \geqslant y_+ + \varepsilon^2$, there is a smooth Killing vector field $K$ defined on $\E_R$ which agrees the one
defined on $\O$ in Proposition \ref{local}. Moreover, we have
 \begin{equation*}
 [T,K]=0, \qquad K(y)=0.
 \end{equation*}
\end{lemma}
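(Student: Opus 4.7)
I will prove Lemma \ref{ind} by a continuity/induction argument on $R$, modeled on the extension scheme of \cite{Alexakis_Ionescu_Klainerman_Perturbation}. Define
\[
 R_* = \sup\bigl\{R \geqslant y_+ + \varepsilon^2 \,\big|\, \text{a smooth } K \text{ with the stated properties exists on } \E_R\bigr\}.
\]
The plan is to establish that the set of admissible $R$ is nonempty, closed, and open in $[y_+ + \varepsilon^2, \infty)$, from which $R_* = \infty$ follows. Nonemptiness is essentially Proposition \ref{local}: on $\E_{y_+ + \varepsilon^2}$, the computation $T^2 = -|\tfrac{1}{P} - C_2|^2 + C_4 - 2\Re(V)$ together with Lemma \ref{bootstrapmars} shows $T$ is timelike, so the ODE $[T,K] = 0$ may be integrated along $T$-orbits starting from $\O$. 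Because $T$ is Killing and preserves $H$ and $y$, all the conditions $\L_K g = 0$, $\L_K H = 0$, $K(y) = 0$ are preserved along these orbits, giving the base case. Closedness is standard since Killing equations pass to smooth limits on increasing unions.

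The main content is openness. Assume $K$ exists on $\E_{R_0}$ with $y_+ + \varepsilon^2 \leqslant R_0 < \infty$, and fix any $p$ with $y(p) = R_0$. Since $y(p) \geqslant y_+ + \varepsilon^2$, Lemma \ref{T_pseudo_convexity} supplies $T$-conditional pseudo-convexity of $y$ at $p$. I first extend $K$ as a vector field to a neighborhood $U$ of $p$: in a chart adapted to the commuting pair $(T,K)$ on $\E_{R_0}\cap U$, or more robustly by transporting the initial data of $K$ off a small spacelike hypersurface $\Pi \subset \E_{R_0}$ transverse to $T$ through $p$ via the flow of $T$ (using that $T$ is timelike in a neighborhood of $p$ by the computation above, possibly after shrinking $\varepsilon$). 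This produces a smooth vector field $K$ on $U$ satisfying $[T,K] = 0$ by construction.

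It remains to show that $\pi := \L_K g$, $\omega := \L_K H$, and $K(y)$ vanish on $U$, not just on $U \cap \E_{R_0}$. As in the vacuum treatment of \cite{Alexakis_Ionescu_Klainerman_Perturbation} and the Einstein-Maxwell local rigidity analysis of \cite{Yu_local}, the pair $(\pi, \omega)$ satisfies a closed system of covariant tensorial wave equations of the schematic form
\[
 \Box_g \pi = \mathcal{M}_1(\pi,\nabla\pi,\omega,\nabla\omega), \qquad \Box_g \omega = \mathcal{M}_2(\pi,\nabla\pi,\omega,\nabla\omega),
\]
where the right-hand sides are linear in the unknowns and their first derivatives with coefficients built from $g$, $H$, and the curvature. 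This system is $T$-symmetric because $T$ is Killing, preserves $H$, and commutes with $K$. Since $(\pi,\omega)$ and all first derivatives vanish on $U \cap \E_{R_0}$, and since $y$ is $T$-conditionally pseudo-convex at $p$, the Ionescu--Klainerman Carleman estimate (exactly as applied in \cite{Ionescu_Klainerman_Kerr}, \cite{Alexakis_Ionescu_Klainerman_Perturbation}) forces $(\pi,\omega) \equiv 0$ in a full neighborhood of $p$. For the scalar condition $K(y) = 0$: differentiating $\L_K g = 0$ and $\L_K H = 0$ shows $K(\Xi)$ is constant along $T$-orbits; vanishing of $K(y)$ on $U\cap \E_{R_0}$ then propagates to all of $U$. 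Covering the compact (modulo the $T$-action, or after quotienting by the local symmetry) level set $\{y = R_0\}$ by finitely many such neighborhoods and patching via uniqueness extends $K$ past $R_0$, contradicting the definition of $R_*$.

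The main obstacle I expect is the Carleman step for the coupled Einstein--Maxwell system: one must verify that the wave system for $(\pi,\omega)$ is genuinely closed and has coefficients regular enough for the Carleman weight built from $y$ to dominate the error terms uniformly in $\varepsilon$. The $T$-conditional pseudo-convexity of Lemma \ref{T_pseudo_convexity} is tailored precisely for this, but checking that the Maxwell part does not introduce terms that break the conditional structure (in particular, terms involving $\nabla \omega$ tangent to $T$) requires care and is where the smallness of $\B$ and $\Q$ will again be invoked through Lemmas \ref{bootsPPh} and \ref{bootstrapmars} to control the lower-order coefficients.
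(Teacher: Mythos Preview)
Your continuity framework is sound, but there are two genuine gaps in the openness step.

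First, the extension mechanism via the flow of $T$ cannot work: since $T(y)=0$ (because $T$ is Killing and preserves $\Xi$, hence $y$), the $T$-orbits are \emph{tangent} to the level sets $\{y = R_0\}$. Transporting $K$ along $T$ from any initial data in $\E_{R_0}$ therefore keeps you inside $\E_{R_0}$ and never produces an extension of $K$ across the boundary $\{y = R_0\}$. You need a vector field transverse to the level sets of $y$. The paper constructs such a field $Y$ by solving the geodesic equation $\nabla_Y Y = 0$ with initial data $Y = \nabla y$ on $\partial \E_{R_0}$; since $(\nabla y)^2 > 0$ away from the horizon by Lemma \ref{bootstrapmars}, this $Y$ is genuinely transverse. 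One then checks $[K,Y]=[T,Y]=0$ via ODE arguments on $\E_{R_0}\cap\O_{\deltab,R_0}$ and extends $K$ across the boundary by solving $[K,Y]=0$. The $T$-flow is used only \emph{afterward}, to spread the resulting extension from a neighborhood of $\partial \E_{R_0} \cap \Sigma$ to the full $\E_{R_0+\delta}$.

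Second, your claimed closed wave system for $(\pi,\omega)=(\L_K g,\L_K H)$ does not exist in the form you state: for an arbitrary (not-yet-Killing) extension of $K$, $\Box_g \pi_{ab}$ involves third derivatives of $K$ which are not controlled by $\pi,\nabla\pi,\omega,\nabla\omega$ alone. The paper sidesteps this by the pull-back formulation (Proposition \ref{tcontionaltheorem}): setting $g'=\phi_\tau^* g$, $H'=\phi_\tau^* H$, $T'=(\phi_{-\tau})_* T$ for the flow $\phi_\tau$ of $K$, both $(g,H)$ and $(g',H')$ solve Einstein--Maxwell, agree on $\E_{R_0}$, and share the geodesic field $Y$. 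One then derives a coupled \emph{wave--ODE} system, not a pure wave system: wave equations for $\delta R$, $\delta H$, $\delta\nabla H$ and transport equations along $Y$ for $\delta\Gamma$, $\partial\delta\Gamma$, $\delta e$, $\partial\delta e$, $\delta T$, $\delta F$, together with $T$-transport equations for $\delta R$, $\delta H$, $\delta\nabla H$. The unique continuation lemma of \cite{Alexakis_Ionescu_Klainerman_Perturbation} is then applied with the $T$-conditional pseudo-convexity of $y$. A direct Lie-derivative approach is possible in principle but requires a substantially enlarged system including $\L_K R$ and frame variables along the lines of \cite{Ionescu_Klainerman_JAMS}; it is not simply a wave system for $(\pi,\omega)$.
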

We will use an induction agrument: the lemma holds for $R = y_+ + \varepsilon^2$. If we assume the lemma holds for some value $R_0$, we will show that it is still true for $R=R_0+\delta$ provided $\delta$ is suitably small. Therefore, the standard continuity argument will provide a proof of the lemma.

We first derive an ordinary differential equation (ODE) for $K$ in the region $\E_{R_0}$. This equation motivates the construction in $\E_{R_0+\delta}$. We compute $[K,\nabla y]$ as follows:
\begin{align*}
 [K,\nabla y]_b &=K^a \nabla_a \nabla_b y - \nabla^b \nabla_a K_b \stackrel{Killing}{=} K^a \nabla_a \nabla_b y +\nabla^b \nabla_b K_a\\
&=K^a \nabla_b \nabla_a y +\nabla^b \nabla_b K_a\\
&=\nabla_b (K^a \nabla_a y) =\nabla_b(K(y))=0
\end{align*}
Thus,  $[K, \nabla y] = 0$ in $\E_{R_0}$.

For a given small number $\deltab$, we define $\O_{\deltab, R_0} = \displaystyle\bigcup_{p \in \partial \E_{R_0} \cap \Sigma} B_{\deltab} (p)$, where $\Sigma$ is the given initial space-like surface and $B_{\deltab} (p)$ is a coordinate ball of radius $\bar{\delta}$. According to the induction hypothesis, the Hawking vector field $K$ has already been defined on $\E_{R_0} \cap \O_{\deltab, R_0}$. We shall extend it to the full open set $\O_{\deltab, R_0}$ by solving an ODE system.
\begin{lemma} If $\deltab$ is sufficiently small, then there is a smooth vector field $Y$ and a smooth
extension of $K$ to $\O_{\deltab, R_0}$, such that
\begin{equation}\label{requirements}
 \nabla_Y Y =0, \qquad [K,Y]=0, \qquad [T,Y]=0.
\end{equation}
\end{lemma}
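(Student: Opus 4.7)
The plan is to construct $Y$ by geodesic extension from $\partial\E_{R_0}$ with initial velocity proportional to $\nabla y$, and then to extend $K$ by Lie transport along $Y$; the three identities of the lemma will then either hold by construction or will follow from a short linear-ODE uniqueness argument exploiting that $T$ is Killing.

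First I would set $Y_0:=|\nabla y|^{-2}\nabla y$ on $\partial\E_{R_0}\cap\Sigma$. Because $y=R_0>y_+$ there, Lemma \ref{bootstrapmars} together with the non-extremality condition $q^2+\mathfrak{a}^2<M^2$ gives $g(\nabla y,\nabla y)>0$ modulo the $O(\varepsilon^{1/3})$ error, so $Y_0$ is smooth, spacelike, and transverse to $\partial\E_{R_0}$ with $Y_0(y)=1$. I then define $Y$ on $\O_{\deltab,R_0}$ by letting its integral curves be the geodesics $t\mapsto\exp_p(tY_0(p))$ for $p\in\partial\E_{R_0}\cap\Sigma$. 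For $\deltab$ small enough, transversality guarantees this is a diffeomorphism onto $\O_{\deltab,R_0}$, and $\nabla_YY=0$ holds by construction.

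Next I would check the initial commutators on $\partial\E_{R_0}$ and extend $K$. By the inductive hypothesis $K$ is Killing on $\E_{R_0}$ with $K(y)=0$ and $[T,K]=0$; $K$ being Killing together with $K(y)=0$ forces $K(|\nabla y|^2)=0$, and combined with $[K,\nabla y]=0$ (already derived just before the lemma in the excerpt) this yields $[K,Y_0]=0$. The same argument with $T$ replacing $K$---using $T(y)=0$, which follows from $T^b\nabla_b\Xi=\H(T,T)=0$---yields $[T,Y_0]=0$. I then extend $K$ to $\O_{\deltab,R_0}$ by Lie transport along $Y$, i.e.\ by solving $\mathcal{L}_YK=0$ along each integral curve of $Y$ with initial value $K|_{\partial\E_{R_0}}$; this is a first-order linear ODE, smoothly solvable, and produces $[K,Y]=0$ in $\O_{\deltab,R_0}$ by definition.

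Finally, setting $W:=[T,Y]$, the Killing property of $T$ (so $\mathcal{L}_T$ commutes with $\nabla$) applied to $\nabla_YY=0$ produces
\[
 \nabla_YW+\nabla_WY=0,
\]
a homogeneous linear first-order ODE for $W$ along the flow of $Y$. Since $W=0$ on $\partial\E_{R_0}$ by the previous step, uniqueness forces $W\equiv 0$ throughout $\O_{\deltab,R_0}$, completing the proof. The only real technical point is the transversality of $Y_0$ in the first step---everything else is routine ODE pushforward---and this is where non-extremality enters, through the Mars-type identity of Lemma \ref{bootstrapmars}.
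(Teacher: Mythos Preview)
Your approach is essentially the paper's: construct $Y$ as the geodesic field with initial data $\nabla y$ (you normalize, the paper does not, which is immaterial), extend $K$ by Lie transport along $Y$, and handle the commutators via the linear ODE $\nabla_Y W=-\nabla_W Y$ coming from the Killing property.

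There is one genuine gap, though. When you ``extend $K$ to $\O_{\deltab,R_0}$ by Lie transport along $Y$'' with initial data $K|_{\partial\E_{R_0}}$, you have not verified that this Lie-transported vector field agrees with the \emph{original} $K$ on $\E_{R_0}\cap\O_{\deltab,R_0}$---i.e., that what you have built is actually an \emph{extension} of the given $K$, as the lemma demands. Knowing $[K,Y_0]=0$ on $\partial\E_{R_0}$ is not enough: away from $\partial\E_{R_0}$ the geodesic field $Y$ no longer equals $Y_0$, so you need $[K_{\mathrm{orig}},Y]=0$ throughout $\E_{R_0}\cap\O_{\deltab,R_0}$. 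The paper fills exactly this gap by running your ODE argument for $W=[K,Y]$ \emph{first}, on $\E_{R_0}\cap\O_{\deltab,R_0}$ where the inductive $K$ is Killing, to conclude $W\equiv 0$ there, and only then extends $K$ outward by solving $[K,Y]=0$. You have the identical argument written out for $T$; you should also apply it to $K$ on the interior before (or in place of) the Lie-transport step.
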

\begin{proof}
If $\deltab$ is small, we can solve directly the geodesic equation $\nabla_Y Y =0$ with initial data $Y = \nabla y$ on $\partial \E_{R_0}$ to construct $Y$. So the first equation in \eqref{requirements} on $Y$ holds. To construct $K$, one first check that $[K,Y] = 0$ in $E_{R_0} \cap \O_{\deltab, R_0}$. It holds on $\partial \E_{R_0}$ since $K$ is tangent to $\E_{R_0}$ (this can be derived from the fact that $K(y) = 0$) and $Y = \nabla y$ on this surface.

On the other hand, let $W=[K,Y]$, then $W$ solves an ODE on $\E_{R_0} \cap \O_{\deltab, R_0}$ (where $K$ is Killing):
\begin{equation*}
 \nabla_Y (\L_K Y) = \L_K (\nabla_Y Y)-\nabla_{\L_K Y}Y=-\nabla_{\L_K Y}Y,
\end{equation*}
i.e.,
\[\nabla_Y W=-\nabla_{W}Y.\]
On the right hand side of the above expressin, it is tensorial in $W$, i.e., it does not involve the derivatives of $W$. Hence, this is an ODE for $W$. By virtue of the uniqueness of the solution of an ODE system, $W\equiv 0$.

Similar argument also shows that $[T,Y]=0$. Now we can solve the equation $[K,Y]=0$ to extend $K$ to the full neighborhood. This completes the proof.
\end{proof}

Although we have extended $K$ to the full open set $\O_{\deltab, R_0}$, we still need to show that this extension is a Killing vector field. The strategy is as follows: We consider the one parameter group $\phi_\tau$ generated by $K$. To show that $K$ is Killing, it suffices to show that for $\tau$ small enough, $\phi_\tau$ is an isometry, i.e. $\phi_\tau^* g =g$. Recall that we also want to show $K$ also preserves $H$, i.e., $\L_K H =0$, so in addition to $\phi_\tau^*
g =g$, we have to prove $\phi_\tau^* H =H$ at the same time. According to the covariant nature of the Einstein-Maxwell equations, we know that both $(g,H)$ and $(\phi_\tau^* g, \phi_\tau^* H)$ verify the Einstein-Maxwell equations. Moreover, $(g,H)$ and $(\phi_\tau^* g, \phi_\tau^* H)$ coincide on $\E_{R_0}$. To capture the stationary nature of the space-time, we
also define $T_\tau =(\phi_{-\tau})_* T$. Since $[K,Y]=0$, we have
\begin{equation*}
 (\phi_{-\tau})_* Y =Y.
\end{equation*}
Thus, we still have $\nabla^\tau_Y Y =0$ where $\nabla^\tau$ is the corresponding Levi-Civita connection of the metric $\phi_\tau^* g$. Lemma \ref{ind} on the open set $\O_{\deltab, R_0}$
will follow immediately (if one sets $g' = \phi_\tau^* g$, $H' =
\phi_\tau^* H$ and $T' = T_\tau$) from the following proposition (the proof is based on the $T$-conditional pseudo-convexity for $y$ and it is referred to the next section):
\begin{proposition}\label{tcontionaltheorem}
Assume $(g',H')$ is defined on $\O_{\deltab, R_0}$ solving the
Einstein-Maxwell equations and $T'$ is a Killing vector such that
$\L_{T'}H' =0$. If the following conditions hold:
\begin{align}
 \begin{cases}
  g = g' &\quad \text{and} \quad T'=T \quad \text{in} \quad \E_{R_0}, \\
  \nabla'_Y Y =0,&
\end{cases}
\end{align}
where $\nabla'$ is the Levi-Civita connection defined by $g'$. Then we have $g'=g$, $H'=H$ and $T'=T$ in an open set $\O_{\delta, R_0}\subset \O_{\deltab, R_0}$.
\end{proposition}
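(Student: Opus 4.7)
The plan is to reduce the statement to a unique continuation problem for a coupled system of geometric wave equations that commute with the Killing vector field $T$, and then apply Ionescu--Klainerman's Carleman estimates using the $T$-conditional pseudo-convexity of $y$ established in Lemma \ref{T_pseudo_convexity}. The first step is to exploit the gauge condition $\nabla'_Y Y = 0 = \nabla_Y Y$. Since $Y = \nabla y$ on the common hypersurface $\partial \E_{R_0}$ for both metrics (because $g = g'$ there), the vector field $Y$ is simultaneously geodesic for $g$ and for $g'$ in $\O_{\deltab, R_0}$. Extending coordinates off $\partial \E_{R_0}\cap\Sigma$ by the flow of $Y$ (together with three transverse coordinates propagated by Lie transport along $Y$) gives a \emph{single} coordinate system in which both metrics take the normal form $g_{YY} = g'_{YY} = |\nabla y|^2|_{\partial\E_{R_0}}$, $g_{Yi} = g'_{Yi} = 0$. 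In these coordinates the differences $h_{ab} = g_{ab} - g'_{ab}$, $\Theta_a = T_a - T'_a$, and $\mathfrak{h}_{ab} = H_{ab} - H'_{ab}$ vanish on $\partial \E_{R_0}$ together with their $Y$-derivatives, and the $YY$ and $Yi$ components of $h$ vanish identically.

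Next, I would derive the wave equations satisfied by these differences. Following the reduction of Alexakis--Ionescu--Klainerman \cite{Alexakis_Ionescu_Klainerman_Perturbation}, the Einstein--Maxwell equations together with the harmonic-type structure forced by the Gaussian gauge imply
\begin{equation*}
\Box_g h_{ab} = \mathcal{N}_1(h,\mathfrak{h},\Theta,\partial h,\partial\mathfrak{h}), \qquad \Box_g \mathfrak{h}_{ab} = \mathcal{N}_2(\cdots),
\end{equation*}
where the right-hand sides are smooth bilinear functionals vanishing when $(h,\mathfrak{h},\Theta) = 0$. The Killing equations $\mathcal{L}_T g = \mathcal{L}_{T'} g' = 0$ and $\mathcal{L}_T H = \mathcal{L}_{T'} H' = 0$ then imply that $\Theta$ satisfies a first-order transport-type constraint and that the entire system is compatible with a $T$-symmetric reduction. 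In particular, applying the Killing operator and using the standard identity $\Box_g \mathcal{L}_T = \mathcal{L}_T \Box_g$ in the original metric, one checks that the commutators $[\mathcal{L}_T, \Box_g] h$, etc., are themselves lower-order in the differences, so the system can be closed within the class of tensors for which Ionescu--Klainerman's $T$-conditional Carleman estimate applies.

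The third step is to apply the Carleman inequality. By Lemma \ref{T_pseudo_convexity}, the function $y$ satisfies the $T$-conditional pseudo-convexity property at every point $p \in \partial \E_{R_0}$ (since $y(p) = R_0 \geq y_+ + \varepsilon^2$). Using a weight function of the form $e^{\lambda f(y)}$ with $f$ strictly convex and chosen so that the pseudo-convexity of $y$ transfers to the weight, one obtains a Carleman estimate of the form
\begin{equation*}
\lambda \int e^{2\lambda f(y)} \big(|u|^2 + \lambda^{-1}|\nabla u|^2\big) \lesssim \int e^{2\lambda f(y)} |\Box_g u|^2 + \lambda \int e^{2\lambda f(y)} |Tu|^2
\end{equation*}
for all smooth tensors $u$ supported in a small neighborhood of $p$. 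Applied to the components of $h$, $\mathfrak{h}$ and $\Theta$ after a suitable cutoff in $y$, and absorbing the $|Tu|^2$ term using the $T$-symmetry of the system, the right-hand side is dominated by a small constant times the left-hand side plus terms supported in $\E_{R_0}$, where $u \equiv 0$. Sending $\lambda \to \infty$ forces $h$, $\mathfrak{h}$, $\Theta$ to vanish in a neighborhood of $p$. Covering $\partial\E_{R_0}\cap\Sigma$ by such neighborhoods and propagating along $T$ (whose orbits are complete in $\E$) yields an open set $\O_{\delta, R_0} \subset \O_{\deltab, R_0}$ on which $g = g'$, $H = H'$, $T = T'$.

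The main obstacle is the third step: establishing the $T$-symmetric Carleman estimate in a form robust enough to accommodate the coupled tensorial system, and verifying that the $T$-symmetry really is preserved by the reduction to wave equations in the chosen gauge. In particular, one must check that the \emph{same} weight $e^{\lambda f(y)}$ works uniformly for $h$, $\mathfrak{h}$, and $\Theta$ — this is where the pseudo-convexity inequality \eqref{T_P_convexity} with its uniform constant $c_1$ (independent of $p$, at least locally) becomes essential — and one must handle the boundary contribution from the cutoff in $y$ so that it does not destroy the convexity. Once these technical points are settled, the unique continuation argument of \cite{Ionescu_Klainerman_Kerr, Alexakis_Ionescu_Klainerman} carries over with only notational modifications to include the Maxwell field.
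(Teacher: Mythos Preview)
Your overall strategy---reduce to a $T$-symmetric unique-continuation problem and apply the Carleman machinery of Ionescu--Klainerman using the pseudo-convexity of $y$---is correct and matches the paper. However, your Step~2 contains a genuine gap. You claim that in the Gaussian gauge along $Y$ the Einstein--Maxwell equations yield a wave equation $\Box_g h_{ab} = \mathcal{N}_1(\ldots)$ for the \emph{metric} difference. This is not what the Alexakis--Ionescu--Klainerman reduction does, and it is not clear it can be done: in geodesic normal coordinates the Einstein equations are not hyperbolic in the metric components, so no such closed wave system for $h$ is available. The ``harmonic-type structure forced by the Gaussian gauge'' you invoke does not exist.

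What the paper (and \cite{Alexakis_Ionescu_Klainerman_Perturbation}) actually does is to promote the \emph{curvature} to the wave variable. One parallel-transports frames $e_a$, $e'_a$ along $Y$ from common initial data, so the metric components $h_{ab}=g(e_a,e_b)=g'(e'_a,e'_b)$ agree automatically. The unknowns are then: (i) the differences $\delta R$, $\delta H$, $\delta\nabla H$ of curvature and Maxwell field in these frames, which satisfy genuine wave equations coming from the Bianchi and Maxwell identities; (ii) the differences $\delta\Gamma$, $\partial\delta\Gamma$, $\delta e$, $\partial\delta e$ of connection coefficients and frame components, which satisfy \emph{transport ODEs along $Y$} driven by $\delta R$; (iii) the differences $\delta T$, $\delta F$, which also satisfy $Y$-transport equations. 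The Killing symmetry then yields $T$-transport equations for $\delta R$, $\delta H$, $\delta\nabla H$. This coupled wave--ODE--$T$-transport structure is exactly the input to the black-box unique-continuation lemma (Lemma~5.5 of \cite{Alexakis_Ionescu_Klainerman_Perturbation}), which handles the whole system at once rather than a single tensorial wave equation. Once you replace your metric-level reduction by this curvature-level one, your Step~3 goes through, though as a citation rather than a from-scratch Carleman argument.
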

In view Lemma \ref{ind}, the part $[K,T] =0$ (the part $K(y)$=0 follows immediately from this by using the same argument as in the proof of Proposition \ref{local}) has already been proved, since we proved that $T_\tau =(\phi_{-\tau})_* T = T$. We then can extend $K$ to $\E_{R_0 + \deltab}$ by solving $[K,T]=0$.
{}

Now we have completed the proof of Lemma \ref{ind} and the construction of the Hawking vector field on
the domain of outer communication (Proposition \ref{tcontionaltheorem} will be proved in the next section).

\section{Proof of Proposition \ref{tcontionaltheorem} and the construction of rotational symmetry}
In the course of the proof of Proposition \ref{tcontionaltheorem}, for all the equations, the exact coefficients of the expressions are irrelevant. Only the structure of the equations is important. For this purpose, we shall use again the $*$ notation to simplify the expression without loss of information, e.g., we write $2A_{ab}B^{ab}$ as $A*B$ or $100A_{ab}B^{b}{}_{c}C_{b}{}^{c}$ as $A*B*C$.

For a given point $p_0$, we define a coordinate system $(x_k)$ where $k = 1,2,3,4$ on a neighborhood $\O(p_0)$ of $p_0$. We make the following important remark: the coordinates system is chosen for both metrics $g$ and $g'$. In this way, we can compare the various components of various geometric quantities for two different space-time $(\M,g,H)$ and $(\M,g',H')$ in this neighborhood. In this section, we will keep shrinking the neighborhoods around $p_0$; to simplify notations, we shall keep denoting such neighborhoods by $\O(p_0)$.

We define two vector fields $e_4 = e'_4 = Y$ and then fix two smooth frames $(e_1,\cdots,e_4)$ and $(e'_1,\cdots,e'_4)$ via parallel transport in a neighborhood $\O(p_0)$ of $p_0$ by using the same initial data on $\O(p_0) \cap E_{R_0}$ but with two different Levi-Civita connections:
\begin{align}\label{def_frame}
 \begin{cases}
  \nabla_Y e_a &= 0 \qquad \text{on} \quad \O(p_0),\\
  \nabla'_Y e'_a &= 0 \qquad \text{on} \quad \O(p_0), \qquad \qquad a=1,2,3,4\\
  e_a &= e'_a  \qquad \text{on} \quad \O(p_0) \cap E_{R_0}.
 \end{cases}
\end{align}
We now write all the geometric quantities in these two frames.

Let $g_{ab} =g(e_a,e_b)$ and $g'_{ab} =g'(e'_a,e'_b)$. We have
\begin{equation*}
 Y(g_{ab})= g(\nabla_Y e_a, e_b)+g(\nabla_Y e_b, e_a)=0
\end{equation*}
Similarly, we have $Y(g'_{ab})=0$. Since $g_{ab}$ and $g'_{ab}$ agree on $\O(p_0) \cap E_{R_0}$, therefore $g_{ab} \equiv g'_{ab}$. We denote $h_{ab} = g_{ab} = g'_{ab}$. It follows that
\begin{equation}\label{Yh}
 Y(h_{ab})= 0 \qquad \text{on} \quad \O(p_0).
\end{equation}
We now define the Christoffel symbols, curvature tensors, Maxwell 2-forms and their differences (up to one derivative) in the given frames:
\begin{align*}
\begin{cases}
\Gamma^{c}_{ab} &= g(\nabla_{e_a}e_b, e_c),\qquad \Gamma'^{c}_{ab} = g'(\nabla'_{e'_a}e'_b, e'_c), \\
\delta \Gamma^{c}_{ab} &= \Gamma'^{c}_{ab}-\Gamma^{c}_{ab}, \qquad (\partial \delta \Gamma)_{kab}^{c} = \partial_k (\Gamma'^{c}_{ab}-\Gamma^{c}_{ab}),\\
R_{abcd} &= g(R(e_a,e_b)e_c, e_d), \qquad R'_{abcd} = g'(R'(e'_a,e'_b)e'_c,e'_d), \\
\delta R_{abcd} &= R'_{abcd}-R_{abcd}, \qquad (\partial \delta R)_{kabcd} = \partial_k( R'_{abcd}-R_{abcd}),\\
H_{ab} &= H(e_a, e_b), \qquad H'_{ab} = H'(e'_a,e'_b), \\
\delta H_{ab} &= H'_{ab}-H_{ab},\qquad ( \partial \delta H)_{kab} = \partial_k(H'_{ab}-H_{ab}),\\
\nabla H_{abc} &= [\nabla_{e_c}H](e_a, e_b), \qquad \nabla' H'_{abc} = [\nabla_{e'_c}H'](e'_a,e'_b), \\
\delta \nabla H_{abc} &= \nabla H_{abc}-\nabla' H'_{abc},\qquad ( \partial \delta \nabla H)_{kabc} = \partial_k(\nabla' H'_{abc}-\nabla H_{abc}).
\end{cases}
\end{align*}
We remark that $\nabla H$ are treated as an independent geometric quantity although it can be derived from $H$. To compare the geometries, we also need to decompose the frames $\{e_a\}$ and $\{e'_a\}$ in terms of local coordinates system. In fact, we will prove not only the geometric quantities but also the frames are the equal at the same time. We then define the differences between frames, namely,
\begin{align*}
\begin{cases}
e_a &= e_a ^k \partial_k, \qquad e'_a = {e}_a '^k \partial_k, \\
(\delta e)_a^k &= {e}_a '^k \partial_k-e_a ^k \partial_k, \qquad (\partial \delta e)_{la}^k = \partial_l({e}_a '^k \partial_k-e_a ^k \partial_k).
\end{cases}
\end{align*}
We now derive the hyperbolic-ODE systems equations for vairous geometric quantities.

We first derive equations for curvatures and Maxwell 2-forms. We take the divergence of the second Bianchi identity:
\begin{equation*}
\nabla_a R_{bcde} + \nabla_b R_{cade}+\nabla_c R_{abde}=0.
\end{equation*}
By commuting derivatives,  we obtain the following schematic formula:
\begin{equation}\label{boxR}
 \square_g R_{abcd} = (R*R)_{abcd}+ \nabla_c \nabla_d T_{ab},
\end{equation}
where we replace the Ricci tensor $R_{ab}$ by the energy-momentum tensor according to the Einstein-Maxwell equations. One then computes the Hessian of $T$:
\begin{align*}
\nabla_c \nabla_d T_{ab} &= 2H_{b e} \nabla_c \nabla_d H_a{}^e  + 2H_{a}{}^{e} \nabla_c \nabla_d H_{be} + 2\nabla_d H_a{}^e \nabla_c H_{be} + 2\nabla_c H_a{}^e \nabla_d H_{be}\\
&\quad-g_{ab}(H^{ef}\nabla_c \nabla_d H_{ef} + \nabla_c H_{ef} \nabla_d H^{ef})\\
&=(H*\nabla^2 H)_{abcd}+ (\nabla H*\nabla H)_{abcd}
\end{align*}
Together with \eqref{boxR}, we conclude that
\begin{equation}\label{eqRR}
 \square_g R_{abcd} = (R*R)_{abcd}+ (H*\nabla^2 H)_{abcd}+ (\nabla H*\nabla H)_{abcd}.
\end{equation}
We take the divergence of $\nabla_{[a} H_{bc]} =0$ (this is one of the Maxwell equations) to derive:
\begin{align*}
 \nabla^a \nabla_a H_{bc} &= -\nabla^a \nabla_b H_{ca} - \nabla^a \nabla_c H_{ab}\\
 &= R^{a}{}_{bcd} H^{d}{}_{a}+ R^{a}{}_{bad} H_{c}{}^{d}.
\end{align*}
For the last step, we used the fact the $\nabla^a H_{ab} =0$. Therefore, schematically, we have
\begin{equation}\label{eqFF}
\square_g H_{ab} = (R * H)_{ab}.
\end{equation}
By commuting $\nabla$, we can similarly derive
\begin{equation}\label{eqDDF}
 \square_g (\nabla H)_{ab} = (R * \nabla H)_{ab} + (\nabla R * H)_{ab}
\end{equation}
We summarize \eqref{eqRR}, \eqref{eqFF} and \eqref{eqDDF} in following system of equations
\begin{align}\label{eqmainn}
\begin{cases}
    \square_g R_{abcd} &= (R*R)_{abcd}+ (H*\nabla^2 H)_{abcd}+ (\nabla H* \nabla H)_{abcd},\\
        \square_g \H_{ab} &= (R * H)_{ab},\\
    \square_g (\nabla H)_{ab} &= (R * \nabla H)_{ab} + (\nabla R *H)_{ab}.
        \end{cases}
\end{align}
For the space-time $(g',H')$, we also have
\begin{align}\label{eqmainn'}
\begin{cases}
    \square_{g'} R'_{abcd} &= (R'*R')_{abcd}+ (H'*\nabla'^2 H')_{abcd}+ (\nabla' H'* \nabla' H')_{abcd},\\
        \square_{g'} \H'_{ab} &= (R' * H')_{ab},\\
    \square_{g'} (\nabla' H')_{ab} &= (R' * \nabla' H')_{ab} + (\nabla' R' *H')_{ab}.
        \end{cases}
\end{align}
We now take the difference of these two system of equations. To
illustrate the idea, we compute the the difference between
$\square_g R_{abcd}$ and $\square_{g'} R'_{abcd}$ in detail:
\begin{align}\label{BBoxR}
 \square_g R_{abcd} - \square_{g'} R'_{abcd} &= [(R*R)_{abcd}-(R'*R')_{abcd}] + [(H*\nabla^2 H)_{abcd}-(H'*\nabla'^2 H')_{abcd}]\notag\\
&\quad + [(\nabla H* \nabla H)_{abcd}-(\nabla' H'* \nabla' H')_{abcd}].
\end{align}
First of all, we study the term $R*R-R'*R'$ and we have
\begin{align*}
 (R*R-R'*R')_{abcd} &= [R*(R-R')-(R-R')*R']_{abcd}\\
&= (R *\delta R +  \delta R * R')_{abcd}.
\end{align*}
The metrics $g$ and $g'$ are given objects, so their corresponding curvatures are known, one can estimate $R_{abcd}$ and $R'_{abcd}$ in $L^{\infty}$-norm, so we have
\begin{align*}
 |(R*R-R'*R')_{abcd}| \lesssim |\delta R|,
\end{align*}
where the inequality is understood in the point-wise sense.

We turn to the term $H*\nabla^2 H-H'*\nabla'^2 H'$. To simplify the expressions, we omit the indices for tensor components:
\begin{align*}
&\ \ \ \ H*\nabla^2 H-H'*\nabla'^2 H'\\
&=(H-H')*\nabla^2 H +  H'*[(\nabla -\nabla')\nabla H] +H'*\nabla'(\nabla H- \nabla 'H')
\end{align*}
For the first term on the right hand side of the equation, we bound $\nabla^2 H$ in $L^{\infty}$-norm. Therefore, this term is bounded by $|\delta H|$ up to a constant; for the second one term, we recall that $\nabla -\nabla'$ is a 0-order differential operator bounded by $|\delta \Gamma|$ and one can still bound $H'$ and $\nabla H$ in
$L^{\infty}$-norm, so we can bound the second term by $|\delta \Gamma|$; for the third one, notice that for a given tensor $A$, $\nabla'$ differs from the standard coordinate derivative by a collections of Christoffel symbols, so we have
\begin{equation*}
 |\nabla' A| \lesssim |A| + |\partial A|.
\end{equation*}
It implies the third term is bounded by $|\delta \nabla H| + |\partial \delta \nabla H|$. Finally, we have
\begin{align*}
 |H*\nabla^2 H-H'*\nabla'^2 H'| \lesssim |\delta \Gamma|+|\delta H|+|\delta \nabla H| + |\partial \delta \nabla H|.
\end{align*}
Similarly, we can derive
\begin{equation*}
 |\nabla H* \nabla H-\nabla' H'* \nabla' H'| \lesssim |\delta \nabla H|.
\end{equation*}
We turn to $\square_g R - \square_{g'} R'$:
\begin{align*}
 \square_{g'} R' - \square_{g} R &=  (\square_{g'} - \square_{g}) R' +  \square_{g}(R'-R)\\
 &=(\square_{g'} - \square_{g}) R' +  \square_{g}\delta R.
\end{align*}
Recall that $\square_{g'} - \square_{g}$ is a first order
differential operator. It is easy to see that $(\square_{g'} -
\square_{g}) R'$ ($R'$ is estimated in $L^\infty$-norm) is bounded
by $|\delta \Gamma| + |\partial \delta \Gamma|$. Finally, we
conclude from \eqref{BBoxR} that
\begin{equation}
  \square_{g}\delta R \lesssim  |\delta \Gamma| + |\partial \delta \Gamma| + |\delta H|+|\delta \nabla H| + |\partial \delta \nabla H| + |\delta  R|.
\end{equation}
Now we take the difference of \eqref{eqmainn} and \eqref{eqmainn'},
similar computations give the following set of differential
inequalities:
\begin{align}\label{MainPDE}
\begin{cases}
    \square_g (\delta R)_{abcd}  &\lesssim |\delta \Gamma| + |\partial \delta \Gamma| + |\delta H|+|\delta \nabla H| + |\partial \delta \nabla H| + |\delta  R|,\\
    \square_g  \delta H_{ab} &\lesssim |\delta H|+ |\delta  R|,\\
\square_g (\delta \nabla H)_{ab} &\lesssim |\delta \Gamma|+ |\delta H|+|\delta \nabla H|+|\delta R| +|\partial \delta R|.
        \end{cases}
\end{align}

Now we derive a system of ordinary differential inequalities along the vector field $Y$ for the differences between Christoffel symbols $\delta \Gamma$, $ \partial \delta \Gamma$ and the differences between two given frames $\delta e$, $\partial \delta e$. We omit the indices when there is no confusion, as an example, we use $\delta e$ to denote the collection of $(\delta e)_a^k$'s for all the $a$ and $k$. We first compute
$Y(\Gamma_{ab}^{c})$:
\begin{align*}
Y(\Gamma^{c}_{ab}) &= Y(g(\nabla_{e_a}e_b, e_c)) = g(\nabla_{e_4} \nabla_{e_a}e_b, e_c)+g(\nabla_{e_a}e_b, \nabla_{e_4} e_c)\\
              &\stackrel{\nabla_{Y}e_a =0}{=}R_{4abc} + g(\nabla_{[e_4,e_a]} e_b, e_c)+g(\nabla_{e_a}e_b, \nabla_{e_4} e_c)\\
              &= R_{4abc} + \Gamma^{d}_{4a}\Gamma^{c}_{db}-\Gamma^{d}_{a 4}\Gamma^{c}_{db} + g_{d f}\Gamma^{f}_{ab} \Gamma^{d}_{4c}.
\end{align*}
Recall that $g_{ab}$ is the same for both metrics, so schematically, we have
\begin{equation*}
Y(\Gamma^{c}_{ab}) = R_{4abc} + (\Gamma * \Gamma)_{ab}^c.
\end{equation*}
Similarly, we have
\begin{equation*}
Y(\Gamma'^{c}_{ab}) = R'_{4abc} + (\Gamma' * \Gamma')_{ab}^c.
\end{equation*}
One can take the difference to get
\begin{equation}\label{YGamma}
 Y(\delta \Gamma_{ab}^c) \lesssim |\delta R|+|\delta \Gamma|.
\end{equation}
One can also first take a derivative in $\partial_k$ direction and then take the difference, this procedure gives
\begin{equation}\label{YpartialGamma}
 Y( (\partial\delta \Gamma)_{kab}^c) \lesssim |\delta R|+|\partial \delta R|+ |\delta \Gamma|+|\partial \delta \Gamma|.
\end{equation}
To compute $Y(e_{a}^k)$, we consider $[Y,e_a]$:
\[[Y,e_a] = -\nabla_Y e_a =- \Gamma_{4a}^b e_b = - \Gamma_{4a}^b e^k_b \partial_k.\]
This implies
\begin{equation*}
 Y^j \partial_j (e_a^k)-e_a^j \partial_j(Y^k)= - \Gamma_{4a}^b e^k_b \partial_k,
\end{equation*}
i.e.,
\begin{equation*}
 Y(e_a^k)=\partial_j(Y^k) e_a^j - \Gamma_{4a}^b e^k_b .
\end{equation*}
Schematically, we have
\begin{equation*}
 Y(e_a^k)= e + \Gamma * e.
\end{equation*}
We also have $ Y({e'}_a^k)= e' + \Gamma * e'$, one can then take one more $\partial_k$ derivative and take difference as before, together with \eqref{YGamma} and \eqref{YpartialGamma}, we have the following set of ordinary differential inequalities:
\begin{align}\label{MainODE}
\begin{cases}
     Y(\delta \Gamma_{ab}^c) &\lesssim |\delta R|+|\delta \Gamma|,\\
     Y((\partial\delta \Gamma)_{kab}^c) & \lesssim |\delta R|+|\partial \delta R|+ |\delta \Gamma|+|\partial \delta \Gamma|,\\
     Y(\delta e_{a}^k) &\lesssim |\delta e|+|\delta \Gamma|,\\
     Y((\partial \delta) e_{la}^k) &\lesssim |\delta e|+|\delta \Gamma|+|\partial \delta e|+|\partial \delta \Gamma|.
\end{cases}
\end{align}

Finally, we shall derive an additional system of ordinary differential inequalities to include the information the Killing vector fields $T$ and $T'$. We now define the differences for various quantities related to these symmetries: \begin{align*}
\begin{cases}
T_a &= g(T,e_a),\quad T'_a = g(T',e'_a), \quad \delta T_a = T'_a -T_a;\\
F_{ab} &=g(\nabla_{e_a} T, e_b), \quad F'_{ab} =g(\nabla_{e'_a} T', e'_b), \quad \delta F_{ab} = F'_{ab}-F_{ab}.
\end{cases}
\end{align*}
We first compute $Y(T_a)$:
\begin{equation*}
 Y(T_a) = Y(g(T,e_a)) \stackrel{\nabla_Y e_a =0}{=} g(\nabla_Y T, e_a)= F_{4a}.
\end{equation*}
Similar computation gives $Y(T'_a) = F'_{4a}$, by taking the difference, we have
\begin{equation}\label{Yt}
 Y(\delta T_a) \lesssim |\delta F|.
\end{equation}
Now we compute $Y(F_{ab})$:
\begin{align*}
 Y(F_{ab})&\stackrel{\nabla_Y e_a =0}{=} \nabla_Y F_{ab} \stackrel{Killing}{=}R_{ab4c} T^c \stackrel{g_{ab} = h_{ab}}{=} h^{cd}R_{ab4c} T_d
\end{align*}
Similarly, we have $Y(F'_{ab})=h^{cd}R'_{ab4c} T'_d$, by taking the difference, we have
\begin{equation}\label{YF}
 Y(\delta F_{ab}) \lesssim |\delta T|+ |\delta R|.
\end{equation}
Finally, we use the following identities:
\begin{align*}
 0 & \stackrel{Killing}{=} \L_T R_{abcd} \\
 &\stackrel{\nabla_Y e_a =0}{=} \nabla_T R_{abcd} + \nabla_a T^f R_{fbcd}  + \nabla_b T^f R_{afcd} +\nabla_c T^f R_{abfd} +\nabla_d T^f R_{abcf}.
\end{align*}
Similarly, we have
\begin{align*}
 0 = \nabla_{T'} R'_{abcd} + \nabla_a {T'}^f R'_{fbcd}  + \nabla_b {T'}^f R_{afcd} +\nabla_c {T'}^f R_{abfd} +\nabla_d {T'}^f R_{abcf}.
\end{align*}
After taking the difference, we have
\begin{align*}
 |\nabla_T R_{abcd}- \nabla_{T'} R'_{abcd}| \lesssim |\delta F| + |\delta R|.
\end{align*}
This gives
\begin{align*}
 T(\delta R) \lesssim |\delta \Gamma| + |\delta T| +  |\delta F| + |\delta e|+ |\delta R|.
\end{align*}
Similarly, we have
\begin{align*}
 T(\delta H) \lesssim |\delta \Gamma| + |\delta T| +  |\delta F| + |\delta e|+ |\delta H|,
\end{align*}
and
\begin{align*}
 T(\delta \nabla H) \lesssim |\delta \Gamma| + |\delta T| +  |\delta F| + |\delta e|+ |\delta H|+|\delta \nabla H|+|\delta R|.
\end{align*}
Together with \eqref{Yt} and \eqref{YF}, we have the following system of differential inequalities:
\begin{align}\label{MainKillingODE}
 \begin{cases}
   Y(\delta T_a) &\lesssim |\delta F|,\\
   Y(\delta F_{ab}) &\lesssim |\delta T|+ |\delta R|,\\
   T(\delta R) & \lesssim |\delta \Gamma| + |\delta T| +  |\delta F| + |\delta e|+ |\delta R|,\\
   T(\delta H) &\lesssim |\delta \Gamma| + |\delta T| +  |\delta F| + |\delta e|+ |\delta H|,\\
   T(\delta \nabla H) &\lesssim |\delta \Gamma| + |\delta T| +  |\delta F| + |\delta e|+ |\delta H|+|\delta \nabla H|+|\delta R|.
 \end{cases}
\end{align}

\subsection{Proof of Proposition }
We recall the following unique continuation lemma due to Alexakis, Ionescu and Klainerman:
\begin{lemma}[Lemma 5.5 in \cite{Alexakis_Ionescu_Klainerman_Perturbation}]
Assume $\delta>0$, $p_0\in\delta_{\Sigma_1}(\mathcal{U}_{R_0})$ and $G_i,H_j:B_{\delta}(p_0)\to\mathbb{R}$ are smooth functions, $i=1,\ldots,I$, $j=1,\ldots,J$. Let $G=(G_1,\ldots,G_I)$, $H=(H_1,\ldots,H_J)$, $\partial G=(\partial_0G_1,\ldots,\partial_4G_I)$ and assume that, in $B_{\delta}(p_0)$,
\begin{equation*}
\begin{cases}
&\square_g G=\mathcal{M}_\infty(G)+\mathcal{M}_\infty(\partial G)+\mathcal{M}_\infty(H);\\
&\T(G)=\mathcal{M}_\infty(G)+\mathcal{M}_\infty(H);\\
&\overline{Y}(H)=\mathcal{M}_\infty(G)+\mathcal{M}_\infty(\partial G)+\mathcal{M}_\infty(H).
\end{cases}
\end{equation*}
Assume that $G=0$ and $H=0$ in $B_{\delta}(p_0)\cap\E_{R_0}=\{x\in B_{\delta}(p_0):y(x)<R_0\}$. Then $G=0$ and $H=0$ in $B_{\widetilde{\delta}}(p_0)$ for some $\widetilde{\delta}\in(0,\delta)$ sufficiently small.
\end{lemma}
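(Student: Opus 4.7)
The plan is to combine a $T$-conditional Carleman estimate for the wave operator $\square_g$ with a Grönwall-type transport bound along the geodesic field $Y$, mirroring the unique continuation strategy of Ionescu--Klainerman \cite{Ionescu_Klainerman_Wave} and Alexakis--Ionescu--Klainerman. Since $p_0 \in \partial \E_{R_0} \cap \Sigma$ with $R_0 \geqslant y_+ + \varepsilon^2$, Lemma \ref{T_pseudo_convexity} provides the $T$-conditional pseudo-convexity of $y$ at $p_0$, which is the decisive geometric input. The hypotheses have been arranged so that each ``exceptional'' derivative of $G$ or $H$ on the right-hand sides matches a direction in which we have either the $T$-conditional Carleman estimate or a transport equation.

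First I would fix a Carleman weight $e_\lambda := e^{-\lambda f_\epsilon}$ with $f_\epsilon := \log(y(p_0)+\epsilon-y)$, chosen so that $e_\lambda$ concentrates exponentially on the ``forbidden'' side $\{y < y(p_0)\}$ as $\lambda\to\infty$. The standard conjugation $e_\lambda^{-1}\circ\square_g\circ e_\lambda$ yields a second-order operator whose principal symbol, combined with \eqref{T_P_convexity}, is positive modulo $T$-derivatives. This leads, in the usual way, to the $T$-conditional Carleman estimate
\begin{equation*}
\lambda\,\|e_\lambda u\|_{L^2}^2 + \|e_\lambda \partial u\|_{L^2}^2 \leqslant C\bigl(\lambda^{-1}\|e_\lambda\square_g u\|_{L^2}^2 + \|e_\lambda T(u)\|_{L^2}^2\bigr)
\end{equation*}
for $u$ smoothly compactly supported in a sufficiently small ball $B_{\widetilde\delta}(p_0)$ and $\lambda$ sufficiently large. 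Applying this to a smooth cutoff of each scalar component of $G$ and inserting the hypotheses on $\square_g G$ and $T(G)$, all the $\lambda^{-1}$-weighted self-terms can be absorbed on the left, yielding
\begin{equation*}
\lambda\|e_\lambda G\|_{L^2}^2 + \|e_\lambda \partial G\|_{L^2}^2 \leqslant C\,\|e_\lambda H\|_{L^2}^2 + C\cdot(\text{cutoff errors, uniformly bounded in }\lambda).
\end{equation*}

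Next I would control $H$ by the transport equation. Since $\nabla_Y Y = 0$ and $Y=\nabla y$ on $\partial\E_{R_0}$, where $|\nabla y|^2 > 0$ by Lemma \ref{bootstrapmars} (as $y > y_+$ at $p_0$), the integral curves of $Y$ emanate transversally from $\partial\E_{R_0}$ and foliate a neighborhood of $p_0$, with $y$ strictly increasing along each of them. Using the initial condition $H\equiv 0$ on $\E_{R_0}$ and integrating $Y(H)=\mathcal{M}_\infty(G)+\mathcal{M}_\infty(\partial G)+\mathcal{M}_\infty(H)$ along these curves, Grönwall gives a pointwise estimate of the form $|H(x)|^2 \lesssim \int_{\gamma_x}(|G|^2+|\partial G|^2)\,ds$. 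Since $e_\lambda$ is monotone along each $Y$-orbit with logarithmic derivative of size $\lambda$, a one-dimensional Hardy-type inequality then produces
\begin{equation*}
\|e_\lambda H\|_{L^2}^2 \leqslant C\lambda^{-1}\bigl(\|e_\lambda G\|_{L^2}^2 + \|e_\lambda \partial G\|_{L^2}^2\bigr).
\end{equation*}
Inserted back into the Carleman bound, the coupling term is absorbed for $\lambda$ large, leaving $\|e_\lambda G\|_{L^2}^2 + \|e_\lambda H\|_{L^2}^2 \leqslant C\cdot(\text{cutoff errors})$. Sending $\lambda\to\infty$ and observing that $e_\lambda$ blows up on $\{y<y(p_0)\}\cap B_{\widetilde\delta}(p_0)$ while the cutoff errors live where $e_\lambda$ is uniformly bounded, we conclude $G\equiv H\equiv 0$ on $B_{\widetilde\delta}(p_0)\cap\{y<y(p_0)\}$; combined with the vanishing on $\E_{R_0}$ this gives $G\equiv H\equiv 0$ on $B_{\widetilde\delta}(p_0)$.

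The main obstacle is verifying the $T$-conditional Carleman estimate with this specific weight derived from $y$: one needs the positivity from \eqref{T_P_convexity} to dominate all lower-order commutator errors generated by the conjugation, uniformly for $\lambda$ large, and the relevant pseudo-convexity constant $c_1$ from Lemma \ref{T_pseudo_convexity} must be chosen to survive the $Y$-transport step as well. A secondary but delicate point is the careful balancing of the $\lambda^{1/2}$-gain in the Carleman estimate against the $\lambda^{-1/2}$-gain in the $Y$-transport bound, which together just barely close the absorption of the $G \leftrightarrow H$ coupling. Both issues are resolved by first fixing $\widetilde\delta$ small in terms of the pseudo-convexity constants, and then taking $\lambda$ sufficiently large.
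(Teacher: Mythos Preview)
The paper does not prove this lemma; it is quoted verbatim as Lemma 5.5 of \cite{Alexakis_Ionescu_Klainerman_Perturbation} and used as a black box. Your sketch is a faithful outline of the argument in that reference: the $T$-conditional Carleman estimate driven by the pseudo-convexity of $y$ (Lemma \ref{T_pseudo_convexity}), coupled to a Gr\"onwall bound for $H$ along the geodesic field $Y$, with the $\lambda$-gains balanced so that the $G\leftrightarrow H$ cross terms absorb. So as far as comparison goes, you are simply supplying what the present paper omits.

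There is, however, a directional slip in your final paragraph. By hypothesis $G=H=0$ on $\{y<R_0\}$, and the goal is to propagate vanishing \emph{into} $\{y\geqslant R_0\}$. With your weight $e_\lambda=(y(p_0)+\epsilon-y)^{-\lambda}$, the factor $e_\lambda$ blows up as $y\uparrow y(p_0)+\epsilon$, i.e.\ on the side $\{y>y(p_0)\}$, not on $\{y<y(p_0)\}$ as you wrote. The cutoff errors must then be arranged to live where $y$ is either below $R_0$ (so $G,H$ vanish there and the errors are zero) or near $y=y(p_0)+\epsilon$ but with $e_\lambda$ controlled by choosing the cutoff support just inside that level set; letting $\lambda\to\infty$ forces $G=H=0$ on a slab $\{R_0\leqslant y< R_0+\epsilon'\}$. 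Your conclusion ``$G\equiv H\equiv 0$ on $B_{\widetilde\delta}(p_0)\cap\{y<y(p_0)\}$'' is the hypothesis, not the output. Once this orientation is corrected the argument closes exactly as in \cite{Alexakis_Ionescu_Klainerman_Perturbation}.
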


The notation $\mathcal{M}_\infty(G)$ means $\lesssim |G|$. Combining \eqref{MainPDE}, \eqref{MainODE} and \eqref{MainKillingODE}, in view of the the $T$-conditional pseudo-convexity condition for $y$, we can apply the lemma to conclude that $H_{ab}= H_{ab}'$ and $R_{abcd} = R'_{abcd}$.

This completes the proof of Proposition \ref{tcontionaltheorem}.

\subsection{The rotational symmetry}
In last subsection of the work, we construct the rotational vector field $Z$ on the entire domain of outer communication $\E$. As we described before, this additional symmetry allows one to use theresult of Bunting to conclude that the space-time $(\M, g, H)$ is in fact isometric to a Kerr-Newman solution.

Recall the local construction in Proposition \ref{local}, one can find $\lambda_0 \in \mathbb{R}$ such that around the bifurcate sphere $\S$, the vector field $Z = T + \lambda_0 K$ is a Killing vector field with closed orbits and with a period $\tau_0$. Now we construct $Z$ on the entire exterior region by the same formula (both $T$ and $K$ are globally defined):
\begin{equation*}
 Z = T + \lambda_0 K
\end{equation*}
Since both $K$ and $T$ are Killing, we know that $Z$ is also Killing. In addition, the following identities hold immediately since they hold for both $K$ and $T$:
\begin{equation}
[T, Z] = [K, Z]=[Z, \nabla y]=0, \qquad Z(y) = 0.
\end{equation}
Let $\phi_\tau$ be the one parameter isometry group generated by $Z$. From the local property , we know that $\phi_{\tau_0} = {\rm Id}$ around the bifurcate sphere $\S$. Since $Z$ commutes with $T$, we can argue as before to show that  $\phi_{\tau_0} = {\rm Id}$ on $\E_{y_+ +\varepsilon^2}$. We now use again the recursive argument to show that $\phi_{\tau_0} = Id$ for on each $\E_R$ where $R \geqslant y_+ + \varepsilon^2$. Assume for a given $R_0$ this conclusion holds, namely,
\begin{equation*}
 \phi_{\tau_0} (p) = p \qquad \text{for any} \quad p\in \E_{R_0},
\end{equation*}
Let  $\psi_\tau$ to be one parameter group of diffeomorphisms generated by $\nabla y$. Since $[Z, \nabla y] =0$, we know that for
small $\tau > 0$
\begin{equation}
  \phi_{\tau_0} \circ \psi_\tau = \psi_\tau \circ \phi_{\tau_0}
\end{equation}
Apply this identity to a point $p\in \E_{R_0}$, one has
\begin{equation}
 \phi_{\tau_0}(\psi_\tau(p)) = \psi_\tau(\phi_{\tau_0}(p))= \psi_\tau(p).
\end{equation}
Since $y$ has no critical point, the flow $\psi_\tau$ moves $p$ from $\E_{R_0}$ towards the outside of $\E_{R_0}$, i.e. $\phi_{\tau_0}
= {\rm Id}$ on a larger neighborhood of $\E_{R_0}$. In this way, we can show that there is a $\delta$, such that $\phi_{\tau_0} = {\rm Id}$ for on each $\E_{R_0+ \delta}$. A simple continuity argument shows $\phi_{\tau_0} = {\rm Id}$ on the entire domain of outer communication.

\end{document}